\newcommand{\numofcolumns}{1}
\newcommand{\isarxiv}{1}
\newcounter{problem}
\newtheorem{theorem}{Theorem}
\newtheorem{proposition}{Proposition}
\newtheorem{assumption}{Assumption}
\newtheorem{remark}{Remark}
\newtheorem{example}{Example}
\newcommand{\real}{\mathbb{R}}
\newcommand{\RR}{\mathbb{R}}
\newcommand{\NN}{\mathbb{N}}
\newcommand{\PP}{\mathbb{P}}
\newcommand{\mc}{\mathcal}
\newcommand{\setdef}[2]{\left\{#1 \; | \; #2\right\}}
\newcommand{\zerobf}{\mathbf{0}}
\newcommand{\onebf}{\mathbf{1}}
\newcommand{\map}[3]{#1: #2 \rightarrow #3}
\newcommand{\congfunc}{\ell}
\newcommand{\diag}{\mathbf{diag}}
\newcommand{\prior}{\mu_0}
\newcommand{\state}{\omega}
\newcommand{\allstates}{\Omega}
\newcommand{\signal}{\pi}
\newcommand{\pfrac}{\nu}
\newcommand{\simplex}{\mc P}
\newcommand{\nlinks}{n}
\newcommand{\npaths}{\nlinks}
\newcommand{\nmesgs}{m}
\newcommand{\nstates}{s}
\newcommand{\E}{\mathbb{E}}
\newcommand{\posterior}{\mu}
\newcommand{\allsignals}{\Pi}
\newcommand{\supp}{\mathrm{supp}}
\newcommand{\de}{\mathrm{d}}
\newcommand{\pubsignal}{\signal^{\text{pub}}}
\newcommand{\indsignal}{\signal^{\text{ind}}}
\newcommand{\mom}{\eta}
\newcommand{\rank}{\mathrm{rank}}
\algnewcommand{\algorithmicgoto}{\textbf{go to}}%
\algnewcommand{\Goto}[1]{\algorithmicgoto~\ref{#1}}%
\tikzset{fontscale/.style = {font=\relsize{#1}}
    }
\definecolor{nodecolor}{rgb}{255, 255, 255}
\definecolor{sourcenodecolor}{rgb}{255, 0, 0}
\definecolor{sinknodecolor}{rgb}{0, 255, 0}
\tikzset{ dot/.style ={circle, draw, inner sep=2pt},  
	     supply dot/.style ={circle, draw, inner sep=2pt, fill = sourcenodecolor}, 
	     demand dot/.style={circle, draw, inner sep=2pt, fill = sinknodecolor},
              main node/.style={circle,draw,font=\sffamily\bfseries\small, fill=nodecolor, line width = 0.5mm}, 
              supply node/.style={circle,draw,font=\sffamily\bfseries\small, fill=sourcenodecolor, line width = 0.5mm}, 
              demand node/.style={circle,draw,font=\sffamily\bfseries\small, fill=sinknodecolor, line width = 0.5mm},
              edge label/.style={font=\sffamily\small},
              main edge/.style={thick, auto},
              directed edge/.style={-Stealth,thick, auto}, 
              cascading edge/.style={line width=0.75mm, auto},
              infeasible node/.style={circle,minimum size=0.6cm, inner sep=0pt, draw, fill= red!50}, 
              feasible node/.style={circle,minimum size=0.6cm, inner sep=0pt, draw, fill= blue!50},
              face/.style={circle,draw,inner sep=1pt, fill=black}}
\newcommand{\revisionchange}[1]{#1}
\title{\revisionchange{Information Design in Non-atomic Routing Games with Partial Participation: \\ Computation and Properties}}
\author{Yixian Zhu\thanks{The authors are with the University of Southern California, Los Angeles, CA. \texttt{\{yixian,ksavla\}@usc.edu}. This work was supported in part by NSF CAREER ECCS \# 1454729 and CALTRANS \# MT-19-06 TO-017. 
K. Savla has financial interest in Xtelligent, Inc.
} \qquad Ketan Savla}
\date{}
\begin{document}
\maketitle

\begin{abstract}
We consider a routing game among non-atomic agents where link latency functions are conditional on an uncertain state of the network. The agents have the same prior belief about the state, but only a fixed fraction receive private route recommendations or a common message, which are generated by a known randomization,  referred to as \emph{private} or \emph{public signaling policy} respectively. The remaining agents choose route according to Bayes Nash flow with respect to the prior. 
We develop a computational approach to solve the optimal information design problem, i.e., to minimize expected social latency over all public or \emph{obedient} private signaling policies. 
%A private signal is obedient if, for every agent, its recommended route is weakly better than other routes with respect to the posterior induced by the signal. 
%Computing the unique Bayes Nash flow for non-receiving agents under a given signal is known to be convex. 
For a fixed flow induced by non-participating agents, design of an optimal private signaling policy is shown to be a generalized problem of moments for polynomial link latency functions, and to admit an atomic solution with a provable upper bound on the number of atoms. This implies that, for polynomial link latency functions, information design can 
 be equivalently cast as a polynomial optimization problem. This in turn can be arbitrarily lower bounded by a known hierarchy of semidefinite relaxations. The first level of this hierarchy is shown to be exact for the basic two link case with affine latency functions. We also identify a class of private signaling policies over which the optimal social cost is non-increasing with increasing fraction of participating agents \revisionchange{for parallel networks}. This is in contrast to existing results where the cost of participating agents under a \emph{fixed} signaling policy may increase with their increasing fraction. 
%
%
%Given flow induced by non-receiving agents, we cast the optimal information design problem for the receiving agents as an instance of the generalized problem of moments. For affine latency functions, we exploit the structure of the obedience constraint and the cost function to establish that it is sufficient to consider one-atomic signals. This implies that there exists an optimal signal whose moment matrix has rank one, and therefore the information design problem can be solved exactly by a semidefinite program. We provide numerical evidence to suggest that the natural procedure to alternate between computing Bayes Nash flow for non-receiving agents and optimal signal for receiving agents, while keeping the other fixed, is globally monotonically convergent. 
\end{abstract}

\section{Introduction}
%\begin{itemize}
%\item Uncertainty: incentive design vs. information design
%\item Lit review on information design for routing games (public vs. private signaling)
%\item Revelation principle (public vs. private)
%\item Contrast with games having negative externality
%\item Contrast with price of anarchy under information design
%\item Heterogeneity; most of existing work for static settings (uniqueness questions touched upon only in limited cases)
%\item Dynamic persuasion (contrast with Beeps paper)
%\item Connection with papers on trust etc.
%\end{itemize}
%
Route choice decision in traffic networks under uncertain and dynamic environments, such as the ones induced by recurring unpredictable incidents, can be a daunting task for agents. Private route recommendation or public information systems could therefore play an important role in such settings. While the agents have prior about the uncertain state, e.g., through experience or publicly available historic records, the informational advantage of such systems in knowing the actual realization gives the possibility of inducing a range of traffic flows through appropriate route recommendation or public information strategies. 

A strategy of a recommendation system to map state realization to randomized private route recommendations for the agents is referred to as a \emph{private signaling policy}; a strategy to map state realization to randomized public messages is referred to as a \emph{public signaling policy}. \revisionchange{The implementation of a private signaling policy requires the ability to provide different route recommendation to different agents. This can be achieved through personal mobile devices. On the other hand, public signaling policies require broadcasting the same message to all the agents. This can be achieved though road side variable message signs or through personal mobile devices. If the state corresponds to \texttt{incident} or \texttt{no incident}, then, e.g., the message space for the public policy can be the same, with no broadcast when the message generated by the policy is \texttt{no incident}. Alternately, a message could also be a route recommendation.}
A private signaling policy is feasible or \emph{obedient}, if, to every agent, it recommends a route which is weakly better in expectation, with respect to the induced posterior, than the other routes. Under a public signaling policy, the agents can be assumed to choose routes consistent with Bayes Nash flow with respect to the posterior. 
The problem of minimizing expected social latency cost over all obedient private or over all public signaling policies is referred to as \emph{information design} in this paper. We are interested in these problems for \emph{non-atomic} agents, when a fraction of agents do not participate in signaling and induce Bayes Nash flow with respect to the prior. The technical challenge is the joint consideration of optimal signaling policy for participating agents and the flow induced by non-participating agents. 
 
Information design for \emph{finite} agents has attracted considerable attention recently with applications in multiple domains, e.g., see \cite{Bergemann.Morris:19} for an overview; the single agent case was studied in \cite{Kamenica.Gentzkow:11} as \emph{Bayesian persuasion}. In the finite agent (and finite action) setting, the obedience condition on the signaling policy can be expressed as finite linear constraints, one for each combination of actions by the agents. This allows to cast the information design problem as a tractable optimization problem. Techniques to further reduce computational cost of information design are presented in \cite{Dughmi.Xu:16}. However, analogous computational approaches to solve information design for non-atomic agents, particularly for routing games, are lacking. 

%
%The solution concept used is that of Bayes Nash equilibrium with respect to the posterior beliefs of the agents. A fundamental simplifying tool is the \emph{revelation principle}, e.g., see \cite{Bergemann.Morris:16}. This principle establishes that for any \emph{indirect} \kscomment{private} signal, i.e., whose realization space is arbitrary, there exists an \emph{obedient direct} private signal, i.e., which privately recommends actions that are optimal with respect to posteriors. 

There has been a growing interest recently in understanding the impact of information in non-atomic routing games. For example, \cite{Acemoglu.Makhdoumi.ea:16,Vasserman.Feldman.ea:15,Das.Kamenica.ea:17,Wu.Amin:19} illustrate that revealing full information to all the agents may not minimize social cost.
Information \emph{design} using private signaling policies, as in this paper, has also been pursued recently in \cite{Tavafoghi.Teneketzis:20}. Optimal public signaling policies for some settings were characterized in \cite{Massicot.Langbort:19}.  
While these works provide useful insights, the information \emph{design} aspect of these works is restricted to stylized settings involving a network with just two parallel links, sub-optimal policies, and link latency functions which ensure non-zero flow on all links under all state realizations. It is not apparent to what extent can the methodologies underlying these studies, which typically rely on analytical solutions, be generalized. Motivated by this, we develop a computational approach in this paper. While the detailed discussion is presented for parallel networks for simplicity in presentation, we also describe the extension of the computational framework to general networks with a single origin-destination pair.

Our key observation is that information design for polynomial latency functions has strong connections with the \emph{generalized problem of moments} (GPM)~\cite{Lasserre:08}. A GPM minimizes, over finite probability measures, a cost which is linear in moments with respect to these measures subject to constraints which are also linear in the moments. This connection allows to leverage computational tools developed for GPM, such as \texttt{GloptiPoly}~\cite{gloptipoly3}, which utilizes a hierarchy of semidefinite relaxations to lower bound GPM arbitrarily closely.  
For a fixed flow induced by non-participating agents, information design for participating agents is indeed a GPM. Furthermore, since the cost and constraints involve moments up to a finite order, there exists an optimal signaling policy which is atomic with provable upper bound on the number of atoms~\cite{Bayer.Teichmann:06}. \revisionchange{In other words, interestingly, a
finite-support, atomic signaling policy can achieve optimal performance.} 
This property also allows to \emph{equivalently} cast information design, when the non-participating agents choose route according to Bayes Nash flow, as a polynomial optimization problem. This can be arbitrarily lower bounded by hierarchy of semidefinite relaxations~\cite{Lasserre:01}, which can also be implemented in \texttt{GloptiPoly}. 
The first level of this hierarchy is shown to be exact for the basic two link case with affine latency functions, and it relies on using convexity of cost function and constraints to sharpen the bound from \cite{Bayer.Teichmann:06} for optimal solution. \revisionchange{The lower bound obtained from the hierarchy can be used to upper bound the optimality gap of a feasible solution obtained by packages such as \texttt{MultiStart} in \texttt{Matlab}. Indeed, in our simulations, we report number of starting points for \texttt{MultiStart} and the relaxation order for \texttt{GloptiPoly} for which this gap was found to be negligible.}

It is natural to compare our approach with semidefinite programming based approaches for computation of (Bayes) correlated equilibria, e.g., in \emph{continuous} polynomial games~\cite{Stein.Parrilo.ea:GEB11}. In \cite{Stein.Parrilo.ea:GEB11}, the action set is continuous and the agents are finite, and hence alternate formulations for correlated equilibrium are proposed which involve approximation through finite moments and discretization of the action set. In our setup, where the action set is finite and agents are non-atomic, the constraints for participating agents are readily in computational form and involve moments up to a finite order without any approximation. This then allows us to consider an \emph{equivalent} finite discretization, with known cardinality, of the agent population, to transform equivalently into polynomial optimization. Thereafter, the use of semidefinite relaxation hierarchy is standard. 

The computational approach of this paper can be utilized to complement the current studies on (paradoxical) effect of different fractions of participating agents under specific public signaling policies (primarily, full information). While existing works, e.g., \cite{Mahmassani.Jayakrishnan:91,Wu.Amin.ea:18}, study the effect on population-specific (i.e., participating and non-participating) costs, we study the effect on the social cost,  in the spirit of the social planner's perspective adopted in the paper. 
We provide a class of private signaling policies under which the optimal social cost is non-increasing with increasing fraction of participating agents. The key idea is to use an optimal solution at a given fraction to synthesize signaling policies which are feasible for all higher fractions and give the same cost. This monotonic result does not require the link latency functions to be polynomial. \revisionchange{On the other hand, we illustrate through examples that public signaling policies may worsen social performance if too many agents receive the signal.}

In summary, the main contributions of the paper are as follows. First, by making connection to GPM and associated semidefinite programming machinery, we point to a compelling computational framework to solve information design problems. Second, by establishing the existence of an atomic optimal solution, we provide credence to such a structural assumption often implicitly made in information design studies. The sharpening of the bound on the number of atoms that we illustrate in a simple case suggests the possibility of using the problem structure to reduce the size of the optimization formulation, and hence the computation cost. 
Third, the result and underlying proof technique for the monotonic behavior of social cost under a reasonable class of private signaling policies imply that private signaling policies can guarantee performance which is robust to higher than anticipated agent participation rate. However, our results also suggest that this may be difficult to achieve through public signaling policies. 
%
%we suggest a natural polynomial optimization hierarchy to approximate, with increasing accuracy, optimal information design over private and public signals. This hierarchy is grounded in the fact that its first level is exact for the basic case of two links. 
%our proof for the tightness of the first order semidefinite relaxation for the information design problem relies on a semi definiteness property of the obedience constraint, suggesting possible generalizations to other constraints. 
%The ability of our formulation and solution methodology to handle a certain fraction of agents not participating in signaling but who induce externality on the participating agents allows to assess the value of information, an exercise which hitherto has been restricted to public signals.  
Overall, the contributions allow to considerably expand the scope of information design studies which has been limited so far to stylized settings. 

The rest of the paper is organized as follows. 
\ifthenelse{\equal{\isarxiv}{1}}
{Section~\ref{sec:problem} formulates the information design problem for non-atomic routing game on parallel networks,  and describes indirect signaling policies, with focus on private and public policies.}{\revisionchange{Section~\ref{sec:problem} formulates the information design problem for non-atomic routing game,  and describes private and public signaling policies.}} Section~\ref{sec:sdp} describes an exact polynomial optimization framework, and a class of signaling policies over which the social cost is non-increasing with increasing fraction of participating agents. \revisionchange{Section~\ref{sec:extension} provides extension to non-parallel networks and discusses computational complexity of the framework suggested in the paper to solve information design.} 
Section~\ref{sec:simulations} provides illustrative simulation results, and concluding remarks are provided in Section~\ref{sec:conclusions}. The proofs for all the technical results are provided in the Appendix.

%The sub-optimality of social cost under a user equilibrium vis-a-vis social equilibrium in traffic is well-known. Traditionally, tolls have been used  to bridge this gap. There is a growing interest recently on studying non-monetary mechanisms, such as information design. The basic tenet of information design is that the sender, e.g. social planner or navigation APP, exploits the advantage it has in terms of perfect knowledge of the real-time state of the traffic network to selectively, or in a noisy manner, release information to the drivers to influence drivers' route choice decisions, and thereby influence the social cost. However, determining what information to reveal is non-trivial because of non-monotonic behavior of social cost with respect to the information revealed. For example, \cite{Acemoglu.Makhdoumi.ea:16} shows the existence of informational Braess paradox under which not revealing information about certain links can actually reduce social cost. 

\textbf{Notations}: $\triangle(X)$ denotes the set of all probability distributions on $X$. 
%Let $\powerset(X)$ denote the power set of $X$. 
For an integer \( n \), we let \( [n]:= \{1, 2, \ldots, n\} \). 
For a vector $x \in \real^n$, let $\supp(x):=\setdef{i \in [n]}{x_i \neq 0}$ be the set of indices whose corresponding entries in $x$ are not zero. 
For $\lambda \geq 0$, let $\simplex_n(\lambda):=\setdef{x \in \real_{\geq 0}^n}{\sum_{i \in [n]} x_i = \lambda}$ be the $(n-1)$-dimensional \revisionchange{simplex} of size $\lambda$.
%; no explicit mention of $\lambda$ will imply $\lambda=1$. 
%Given a $m \times n$ matrix $X$, we let $\vect(X) \in \real^{mn}$ denote its vector form, obtained by arranging the entries of $X$ row-by-row in vector form. For example,
%$$
%\vect\Bigg(
%\begin{bmatrix}
%x_{1,1} & x_{1,2} \\
%x_{2,1} & x_{2,2} \\
%x_{3,1} & x_{3,2}
%\end{bmatrix}
%\Bigg) = 
%\begin{bmatrix}
%x_{1,1} & x_{1,2} & x_{2,1} & x_{2,2} & x_{3,1} & x_{3,2}
%\end{bmatrix}^T
%$$
%Let $\rowstochastic(m,n)$ denote the set of $m \times n$ row-stochastic matrices. 
%Let $[x]^+:=\max\{x,0\}$ denote the non-negative part of $x \in \real$. 
%Let $I_n$ be the $n \times n$ identity matrix. 
When $\lambda=1$, we shall simply denote the simplex as $\simplex_n$ for brevity in notation. 
$\zerobf_{n \times m}$ and $\onebf_{n \times m}$ will denote $n \times m$ matrices all of whose entries are  $0$ and $1$ respectively. 
In all these notations, the subscripts corresponding to size shall be omitted when clear from the context. For matrices $A$ and $B$ of the same size, their \revisionchange{Frobenius} inner product is $A \cdot B = \sum_{i, j} A_{i,j} B_{i,j}$.

\section{Problem Formulation and Preliminaries}
\label{sec:problem}
Consider a network consisting of $\nlinks$ parallel links between a single source-destination pair. \footnote{\revisionchange{Extension to non-parallel networks is discussed in Section~\ref{sec:extension}, where $\nlinks$ denotes the number of \emph{routes} between the origin-destination pair.}} \revisionchange{We use \emph{link} and \emph{route} interchangeably for parallel network.} 
Without loss of generality, let the agent population generate a unit volume of traffic demand. 
%Consider a network consisting of a directed acyclic graph from a single origin to a single destination. Let the paths from from the origin to the destination be indexed as $\{1, \ldots, \npaths\}$, and let the directed links be indexed as $\{1, \ldots, \nlinks\}$. Let $P \in \{0,1\}^{\npaths \times \nlinks}$ be the path-link matrix, where $P_{i,j}=1$ if and only if link $j$ belongs to path $i$. A unit volume, without loss of generality, of traffic demand needs to be routed from the origin to the destination. 
The link latency functions $\congfunc_{\state,i}(f_i)$, $i \in [\nlinks]$, give latency on link $i$ as a function of flow $f_i$ through them, conditional on the \emph{state} of the network $\state \in \allstates=\{\state_1, \ldots,\state_{\nstates}\}$. Throughout the paper, we shall make the following basic assumption on these functions.
\begin{assumption}
\label{ass:continuous}
For every $i \in [\nlinks]$, $\revisionchange{\state \in \allstates}$, $\congfunc_{\state,i}$ is a non-negative, continuously differentiable and non-decreasing function.
\end{assumption}
At times, we shall strengthen the assumption to $\congfunc_{\state,i}$ being strictly increasing. A class of functions satisfying Assumption~\ref{ass:continuous} which is attractive from a computational perspective is that of polynomial functions:
\begin{equation}
\label{eq:latency-affine}
\congfunc_{\state,i}(f_i) = \sum_{d=0}^D \alpha_{d,\state,i} \, f^d_i, \qquad i \in [\nlinks], \, \, \revisionchange{\state \in \allstates} 
%\congfunc_{i}^{\state}(f_i) = \alpha_{i}^{\state} f_i +\beta^{\state}_{i}, \qquad i \in [\nlinks] 
\end{equation}
with $\alpha_{0,\state,i} \geq 0$ and $\alpha_{1,\state,i} \geq 0$. We shall also let $\alpha_d$ refer to the $\nstates \times \nlinks$ matrix whose entries are $\alpha_{d,\state,i}$.
Two instances of \eqref{eq:latency-affine} commonly studied in the literature are affine and the Bureau of Public Roads (BPR) functions~\cite{Branston:76}. In the former case, $D=1$ and in the latter case, $D=4$ with $\alpha_1=\alpha_2=\alpha_3=\zerobf$. Furthermore, the BPR function has the following interpretation: $\alpha_{0,\state,i}$ is the free flow travel time on link $i$ when the state is $\state$, and $\left(0.15 \,\frac{ \alpha_{0,\state,i}}{\alpha_{4,\state,i}}\right)^{\frac{1}{4}}$ is the flow capacity of link $i$ when the state is $\state$. 
%Let $\overline{\alpha}_{d,i}:=\sum_{\state} \prior(\state) \, \alpha_{d,\state,i}$ for all $i \in [\nlinks]$ and $d \in \{0, 1, \ldots, D\}$.

Let $\state \sim \prior \in \texttt{interior}(\triangle(\allstates))$, for some prior $\prior$ which is known to all the agents. 
%We assume that the average value of the linear coefficient in \eqref{eq:latency-affine} is strictly positive, i.e.,  Let $\overline{\beta}_i:=\sum_{\state} \prior(\state) \beta_i^{\state}$ denote the average value of the affine term in \eqref{eq:latency-affine}.  
The agents do not have access to the realization of $\state$, but a fixed fraction $\pfrac \in [0,1]$ of the agents receives \revisionchange{private route recommendations or public messages conditional on the realized state.} %\revisionchange{We next treat these two settings separately.}

\subsection{Private Signaling Policies}
\label{sec:direct}
The conditional route recommendations are generated by a \emph{private signaling policy} $\signal=\{\signal_{\state} \in \triangle(\simplex_{\nlinks}(\pfrac)): \, \state \in \allstates\}$ as follows. Given a realization $\state \in \allstates$, sample a $x \in \simplex_{\npaths}(\pfrac)$ according to $\signal_{\state}$, and partition the agent population into
 $\npaths+1$ parts with volumes $(x_1, \ldots, x_{\npaths},1-\pfrac)$. 
 All the agents are identical, and therefore in the non-atomic setting that we are considering here the partition can be formed by independently assigning every agent to a partition with probability equal to the volume of that partition. The agents in the $(\npaths+1)$-th partition, with volume $1-\pfrac$, do not receive any recommendation, whereas all the agents in the $i$-th partition, $i \in [\npaths]$, receive recommendation to take route $i$, \revisionchange{i.e., $x_i$ volume of agents is recommended to take route $i$.}

\revisionchange{
\begin{example}
Let $\Omega=\{\omega_1,\omega_2\}$. 
An example of a signaling policy for the two-link case with $\pfrac=1$ is: $\signal_{\state_1}=x_1$ with probability $0.5$ and $=x_2$ with probability $0.5$, $\signal_{\state_2}=x_1$ with probability $0.25$ and $=x_2$ with probability $0.75$, with $x_1=[0.75 \quad 0.25]^T$ and $x_2=[0.5 \quad 0.5]^T$. 
%\begin{align*}
%\signal_{\state_1} & = 
%\begin{cases}
%[0.75 \quad 0.25]^T & \text{with prob } 0.5 
%\\
%[0.5 \quad 0.5]^T & \text{with prob } 0.5 
%\end{cases}
%\\
%\signal_{\state_2} & = 
%\begin{cases}
%[0.75 \quad 0.25]^T & \text{with prob } 0.25 
%\\
%[0.5 \quad 0.5]^T & \text{with prob } 0.75 
%\end{cases}
%\end{align*}
When the state is $\state_1$, then the planner recommends route 1 to $0.75$ volume of agents and route 2 to the remaining $0.25$ volume with probability $0.5$, and recommends route 1 to $0.5$ volume of agents and route 2 to the remaining $0.5$ volume with probability $0.5$. $\signal_{\state_2}$ has a similar interpretation. 

The special case when $\signal_{\state_1}$ and $\signal_{\state_2}$ are probability mass functions, as in this example, will be later referred to as \emph{atomic} private signaling policies and will play an important role in the paper. 
%\begin{equation*}
%x=\kbordermatrix{&k=1&k=2\\
%i=1&4.08&2.87\\
%i=2&0.92&2.13}, \quad
%\signal=
%%\kbordermatrix{&k=1&k=2\\
%%\state_1&
%\begin{bmatrix}
%1&0\\
%%\state_2&
%0&1
%\end{bmatrix}
%\end{equation*}
\end{example}
}

The policy $\signal$ and the fraction $\pfrac$ is publicly known to all the agents. Therefore, 
it is easy to see that the (joint) posterior on $(x,\state)$, i.e., the proportion of agents getting different recommendations and the state of the network, formed by an agent who receives recommendation $i \in [\npaths]$ is: 
\begin{equation}
\label{eq:posterior-expr}
\posterior^{\signal,i}(x,\state) = \frac{x_i \, \signal_{\state}(x) \, \prior(\state)}{\sum_{\theta \in \allstates} \int_{p \in \simplex(\pfrac)} p_i \, \signal_{\theta}(p) \, \de p \, \prior(\theta)}
\end{equation}
and the posterior formed by an agent who does not receive a recommendation is:
\begin{equation}
\label{eq:B-agent-joint-posterior}
\posterior^{\signal,\emptyset}(x,\state) = \signal_{\state}(x) \prior(\state) 
\end{equation}
\begin{remark}
One could consider an alternate setup where the set of agents who do not participate in the signaling scheme is pre-determined. These agents do not receive a recommendation and also do not have knowledge about $\signal$. In this case, \eqref{eq:B-agent-joint-posterior} can be replaced with $\posterior^{\signal,\emptyset}(x,\state)=\frac{\prior(\state)}{|\simplex(\pfrac)|}$ obtained by replacing $\signal_{\state}$ with the uniform distribution. The methodologies developed in this paper also extend to this alternate setting.
\end{remark}

A signaling policy is said to be \emph{obedient} if the recommendation received by every agent is weakly better, in expectation with respect to posterior in \eqref{eq:posterior-expr}, than other routes, while the non-participating agents induce a Bayes Nash flow with respect to their posterior in \eqref{eq:B-agent-joint-posterior}.  
%A private signal $\signal$ is a probability distribution function supported on the set:
%\begin{equation*}
%\mathbb{P}:=\Big \{(x^{\state_1}_1, \ldots, x^{\state_1}_{\nlinks}, \ldots, x^{\state_{\nstates}}_1, \ldots, x^{\state_{\nstates}}_{\nlinks}): x^{\state_k}_i \geq 0, \, i \in [\nlinks], k \in [\nstates]; \, \, \sum_{i \in [\nlinks]} x^{\state_k}_i = \pfrac, \, k \in [\nstates]\Big\}
%\end{equation*}
Formally, a $\signal$ is said to be obedient\footnote{\revisionchange{An obedient signaling policy can be interpreted as a Bayes correlated equilibrium~\cite{Bergemann.Morris:16}.}} if there exists $y \in \simplex_{\npaths}(1-\pfrac)$ such that\footnote{Throughout the paper, unless noted otherwise, the summation over indices for degree, state and link, such as $d$, $\state$ and $i$, respectively, are to be taken over the entire range, i.e., $\{0, \ldots, D\}$, \revisionchange{$\allstates$} and $[\nlinks]$, respectively, \revisionchange{and the integral w.r.t. $x$ is over $\simplex(\pfrac)$}.}:
\ifthenelse{\equal{\numofcolumns}{1}}
{
\begin{subequations}
\label{eq:obedience-hetero-v1}
\begin{align}
\sum_{\state} \int_x \, \ell_{\state,i}(x_i + y_i) \posterior^{\signal,i}(x,\state) \, \de x & \leq \sum_{\state} \int_x \, \ell_{\state,j}(x_j + y_j) \posterior^{\signal,i}(x,\state) \, \de x, \quad i, j \in [\npaths] 
\label{eq:obedience-hetero-v1:obedience} \\
\sum_{\state} \int_x\, \ell_{\state,i}(x_i + y_i) \posterior^{\signal,\emptyset}(x,\state)\, \de x &
\leq \sum_{\state} \int_x\, \ell_{\state,j}(x_j + y_j) \posterior^{\signal,\emptyset}(x,\state)\, \de x, \quad i \in \supp(y), \, j \in [\npaths] 
\label{eq:obedience-hetero-v1:nash}
\end{align}
\end{subequations}
}{
\begin{subequations}
\label{eq:obedience-hetero-v1}
\begin{align}
\sum_{\state} & \int_x  \, \ell_{\state,i}(x_i + y_i) \posterior^{\signal,i}(x,\state) \, \de x \nonumber \\ & \leq \sum_{\state} \int_x \, \ell_{\state,j}(x_j + y_j) \posterior^{\signal,i}(x,\state) \, \de x, \, i, j \in [\npaths] \label{eq:obedience-hetero-v1:obedience}\\
\sum_{\state} & \int_x  \, \ell_{\state,i}(x_i + y_i) \posterior^{\signal,\emptyset}(x,\state)\, \de x \nonumber \\ &
\leq \sum_{\state} \int_x\, \ell_{\state,j}(x_j + y_j) \posterior^{\signal,\emptyset}(x,\state)\, \de x, \, i \in \supp(y), \, j \in [\npaths] 
\label{eq:obedience-hetero-v1:nash}
\end{align}
\end{subequations}
}
\revisionchange{$y$ is the flow induced by the non-participating agents. \eqref{eq:obedience-hetero-v1:nash} captures the fact that this is the Bayes Nash flow with respect to the prior.}
Plugging the expressions of beliefs from \eqref{eq:posterior-expr} and \eqref{eq:B-agent-joint-posterior}, noting that the denominators on both sides of the inequalities are the same in \eqref{eq:obedience-hetero-v1}, and multiplying both sides of the second set of inequalities by $y_i$, one equivalently gets:
\ifthenelse{\equal{\numofcolumns}{1}}
{
\begin{subequations}
\label{obedience-hetero-v2:main}
\begin{align}
\sum_{\state} \int_x \, \ell_{\state,i}(x_i + y_i) \, x_i \, \signal_{\state}(x) \de x \, \prior(\state) & \leq \sum_{\state} \int_x \, \ell_{\state,j}(x_j + y_j) \, x_i \, \signal_{\state}(x) \de x\, \prior(\state), \quad i, j \in [\npaths] \label{obedience-hetero-v2:obed}\\
\sum_{\state} \int_x\, \ell_{\state,i}(x_i + y_i) \, y_i \, \signal_{\state}(x) \de x \, \prior(\state) &
\leq \sum_{\state} \int_x\, \ell_{\state,j}(x_j + y_j) \, y_i \, \signal_{\state}(x) \de x \, \prior(\state), \quad i, j \in [\npaths] \label{obedience-hetero-v2:nash}
\end{align}
\end{subequations}
}{
\begin{subequations}
\label{obedience-hetero-v2:main}
\begin{align}
\sum_{\state} & \int_x \, \ell_{\state,i}(x_i + y_i) \, x_i \, \signal_{\state}(x) \de x \, \prior(\state) \nonumber \\ & \leq \sum_{\state} \int_x \, \ell_{\state,j}(x_j + y_j) \, x_i \, \signal_{\state}(x) \de x\, \prior(\state), \quad i, j \in [\npaths] \label{obedience-hetero-v2:obed}\\
\sum_{\state} & \int_x\, \ell_{\state,i}(x_i + y_i) \, y_i \, \signal_{\state}(x) \de x \, \prior(\state) \nonumber \\ &
\leq \sum_{\state} \int_x\, \ell_{\state,j}(x_j + y_j) \, y_i \, \signal_{\state}(x) \de x \, \prior(\state), \quad i, j \in [\npaths] \label{obedience-hetero-v2:nash}
\end{align}
\end{subequations}
}
We emphasize that multiplying both sides by $y_i$ allows to equivalently relax the restriction on $i$ in terms of $y$ in \eqref{eq:obedience-hetero-v1:nash} to get \eqref{obedience-hetero-v2:nash}.
%\begin{subequations}
%\label{obedience-hetero:main}
%\begin{align}
%\sum_{k \in [\nstates]} \int \left(\alpha_i^{\state_k} (x_i^{\state_k} + y_i)+ \beta_i^{\state_k}\right) x_i^{\state_k} \, \de \signal \, \prior(\state_k) & \leq \sum_{k \in [\nstates]} \int \left(\alpha_j^{\state_k} (x_j^{\state_k} + y_j) + \beta_j^{\state_k}\right) x_i^{\state_k} \, \de \signal \, \prior(\state_k), \quad i, j \in [\nlinks] \label{obedience-hetero:p-agents} \\
%\sum_{k \in [\nstates]} \int \left(\alpha_i^{\state_k} (x_i^{\state_k} + y_i)+ \beta_i^{\state_k}\right) \, \de \signal \, \prior(\state_k) & \leq \sum_{k \in [\nstates]} \int \left(\alpha_i^{\state_k} (x_j^{\state_k} + y_j)+ \beta_j^{\state_k}\right) \, \de \signal \, \prior(\state_k), \quad i \in \supp(y), \, j \in [\nlinks] \label{obedience-hetero:n-agents}
%\end{align}
%\end{subequations}

The social cost is taken to be the expected total latency:
\begin{equation}
\label{info-design:main}
J(\signal,y):=  \sum_{\state, \, i} \int_x  \left(x_i + y_i\right) \, \ell_{\state,i} (x_i + y_i) \, \signal_{\state}(x) \de x  \, \prior(\state)
\end{equation}
The information design problem can then be stated as 
\begin{equation}
\label{eq:info-design-joint}
\min_{(\signal, y) \in \allsignals \times \simplex(1-\pfrac)} \, \, \, J(\signal,y) \, \, \text{s.t. } \eqref{obedience-hetero-v2:main}
\end{equation}
where $\allsignals$ is the concise notation for ${\triangle(\simplex(\pfrac))}^{\nstates}$.

\begin{remark}
\begin{enumerate}
\item[(i)] If there are multiple feasible $y$ for a given $\signal$, then a solution $(\signal^*,y^*)$ to \eqref{eq:info-design-joint} can be interpreted as implicitly requiring an additional effort from the social planner to enforce $y^*$. One could alternately consider a \emph{robust} formulation by replacing  $\min_{(\signal, y)}$ in \eqref{eq:info-design-joint} with $\min_{\signal} \, \max_{y}$. We leave such an extension for future consideration. Moreover, as we state below after the remark, under a rather reasonable condition on the link latency functions, there exists a unique feasible $y$ for every $\signal$, in which case the robust version is the same as \eqref{eq:info-design-joint}.
\ifthenelse{\equal{\isarxiv}{1}}
{
}{
\item[(ii)] \revisionchange{One can show that revelation principle, e.g., see \cite{Bergemann.Morris:19}, holds true in the setting of this paper for strictly increasing link latency functions~\cite[Section IIB]{Zhu.Savla:infodesign-arxiv20}.} 
This implies that optimality in the class of obedient \emph{direct} private signaling policies, i.e., signaling policies which recommend routes, also ensures optimality within a broader class which includes \emph{indirect} signaling policies. An indirect signaling policy provides noisy information about the state realization. The route choice is then determined by Bayes Nash flow with respect to the posterior beliefs induced by the signaling policy. In Section~\ref{sec:pub-signals}, we consider a special case of indirect policies, known as \emph{public} signaling policies.
}
\ifthenelse{\equal{\isarxiv}{1}}
{
\item[(ii)] The feasible set in \eqref{eq:info-design-joint} is non-empty for all $\pfrac \in [0,1]$. Details are provided in Remark~\ref{rem:public-private-connection}.
}{
\item[(iii)] The feasible set in \eqref{eq:info-design-joint} is non-empty for all $\pfrac \in [0,1]$. Details are provided in Remark~\ref{rem:public-private-connection}.
}
\end{enumerate}
\end{remark}

It can be shown easily using a straightforward adaptation of the standard argument for Wardrop equilibrium in the deterministic case that, for every $\signal \in \allsignals$, a $y \in \simplex(1-\pfrac)$ satisfies \eqref{obedience-hetero-v2:nash} if and only if it solves the following convex problem:
\begin{equation}
\label{eq:nash-potential}
\min_{y \in \simplex(1-\pfrac)} \, \sum_{\state, \, i}  \, \int_0^{y_i} \, \int_{x} \, \congfunc_{\state, i} (x_i + s) \, \signal_{\state}(x) \de x\, \de s \, \prior(\state)
%\min_{y \in \simplex(1-\pfrac)} \, \sum_{i}  \, \overline{\alpha}_{1,i} \frac{y^2_i}{2} + \left(\sum_{\state} \prior(\state)\alpha_{1,\state,i} \E_{\signal_{\state}}[x_i] + \overline{\alpha}_{0,i}\right) y_i 
\end{equation}
Moreover, such a $y$ is unique if $\{\congfunc_{\state,i}\}_{\state,i}$ are strictly increasing over $[0,1]$. In particular, for uniqueness, it is sufficient to have $\alpha_{1,\state,i}>0$ for all $\state, i$ for affine latency functions, and $\alpha_{4,\state,i}>0$ for all $\state, i$ for BPR latency functions.\footnote{Note that all the derivatives of the BPR latency function are zero at $0$. However, one can easily show uniqueness in the special cases when, for a signaling policy supported only on $x_i=0$, \eqref{eq:nash-potential} has a solution with $y_i=0$.}

\ifthenelse{\equal{\isarxiv}{1}}
{
\subsection{Indirect Signaling Policies}
\label{sec:indirect}
The private signaling policies considered in the previous section are \emph{direct}, i.e., their output space is the set of routes. A generalization is when the output space is arbitrary set of messages, e.g., travel time on the routes. Let the message space be $\{1,\ldots,\nmesgs\}=[\nmesgs]$.  Formally, an \emph{indirect signaling policy} is $\indsignal=\{\indsignal_{\state} \in \triangle(\simplex_{\nmesgs}(\pfrac)): \, \state \in \allstates\}$. The policy generates a message vector $\bar{x} \in \simplex_{\nmesgs}(\pfrac)$, where $\bar{x}_h$ is the volume of agents who get message $k \in [\nmesgs]$. The joint posterior formed by an agent who receives message $k$ is:
\begin{equation*}
\label{eq:posterior-expr-indirect}
\posterior^{\indsignal,k}(\bar{x},\state) = \frac{\bar{x}_k \, \indsignal_{\state}(\bar{x}) \, \prior(\state)}{\sum_{\theta \in \allstates} \int_{p \in \simplex_{\nmesgs}(\pfrac)} p_h \, \indsignal_{\theta}(p) \, \de p \, \prior(\theta)}
\end{equation*}
and the posterior formed by an agent who does not receive a recommendation is:
\begin{equation*}
\label{eq:B-agent-joint-posterior-indirect}
\posterior^{\indsignal,\emptyset}(\bar{x},\state) = \indsignal_{\state}(\bar{x}) \prior(\state) 
\end{equation*}
%\begin{equation}
%\label{eq:posterior-receiving}
%\posterior^{\indsignal,i}(\state)= \frac{\indsignal(k|\state)\prior(\state)}{\sum_{\theta} \indsignal(k|\theta)\prior(\theta)}, \qquad \state \in \allstates
%\end{equation}
%The joint posterior formed by agents who do not receive message, but have knowledge of $\indsignal$, is:
%\begin{equation}
%\label{eq:posterior-not-receiving}
%\posterior^{\indsignal,\emptyset}(k,\state)=\indsignal(k|\state)\prior(\state), \qquad k \in [\nmesgs], \, \state \in \allstates
%\end{equation}

%Let $x^{(k)} \in \simplex(\pfrac)$ be the link flow induced by participating agents, when the message they receive is $k \in [\nmesgs]$, and let $y \in \simplex(1-\pfrac)$ be the link flow induced by agents not receiving the message. $x^{(k)}$ is the Bayes Nash flow with respect to the posterior in \eqref{eq:posterior-receiving} and $y$ is the Bayes Nash flow with respect to the posterior in \eqref{eq:posterior-not-receiving}. That is, $x^{(k)}$ satisfies:  
%\ifthenelse{\equal{\numofcolumns}{1}}
%{
%\begin{equation*}
%\sum_{\state} \, \congfunc_{\state,i}(x^{(k)}_i+y_i) \posterior^{\pubsignal,k}(\state) \leq \sum_{\state} \, \congfunc_{\state,j}(x^{(k)}_j+y_j) \posterior^{\pubsignal,k}(\state), \qquad i \in \supp(x^{(k)}), \, \, j \in [\nlinks]
%\end{equation*}
%}
%
%The link flows induced by an indirect policy is given by Bayes Nash equilibrium (BNE). 

Let $x^{(k)} \in \simplex(\pfrac)$ be the link flow induced by the agents receiving message $k$, and let $y \in \simplex(1-\pfrac)$ be the link flow induced by agents not receiving the message. These link flows are given by the \emph{Bayes Nash equilibrium} (BNE) of the underlying Bayesian game, i.e., they satisfy: $\forall i, j \in [\nlinks]$,
\begin{equation}
\label{eq:BNE}
\begin{split}
\sum_{\state} \int_{\bar{x}} \bar{x}_k \, x^{(k)}_{i} \congfunc^{\state}_i\big(\sum_r \bar{x}_{r} \, x^{(r)}_{i} +y_i\big) \indsignal_{\state}(\bar{x}) \prior(\state) \de \bar{x}& \leq \sum_{\state} \bar{x}_{k} \, x^{(k)} _{i} \congfunc^{\state}_j(\sum_r \bar{x}_{r} \, x^{(r)}_{j} +y_j) \indsignal_{\state}(\bar{x}) \prior(\state) \de \bar{x}, \quad \forall  k \in [\nmesgs]
\\
\sum_{\state} \int_{\bar{x}} y_i \, \congfunc^{\state}_i\big(\sum_r \bar{x}_{r} \, x^{(r)}_{i}+y_i\big) \indsignal_{\state}(\bar{x}) \prior(\state) \de \bar{x} & \leq \sum_{\state} \int_{\bar{x}} y_i \, \congfunc^{\state}_j\big(\sum_r \bar{x}_r \, x^{(r)}_{j}+y_j\big) \indsignal_{\state}(\bar{x}) \prior(\state) \de \bar{x}
\end{split}
\end{equation}

We next discuss existence and equivalence of BNE link flows.

\begin{proposition}
\label{prop:revelation}
$(x,y) \equiv (\{x^{(k)}: \, k \in [\nmesgs]\}, y)$ is a BNE flow for an indirect signaling policy $\indsignal$ if and only if it is a solution to:
\begin{equation}
\label{eq:BNE-flow-solution}
\min_{\substack{y \in \simplex(1-\pfrac) \\ x^{(k)} \in \simplex(\pfrac), \, k \in [\nmesgs]}} \quad \sum_{i, \, \state} \, \, \int_{\bar{x}} \, \, \int_0^{\sum_k \bar{x}_k \, x^{(k)}_{i} + y_i} \, \congfunc_{\state,i}(z)  \, \indsignal_{\state}(\bar{x}) \, \prior(\state) \, \de z \, \de \bar{x}
\end{equation}

Furthermore, if the link latency functions are strictly increasing, then all the BNE flows associated with a policy have the same aggregate link flow, i.e., for any two BNE flows $(x^{(1)},y^{(1)})$ and $(x^{(2)},y^{(2)})$, we have $\sum_k \bar{x}_k \, x^{(1,k)}+y^{(1)}=\sum_k \bar{x}_k \,x^{(2,k)}+y^{(2)}$ for all $\bar{x} \in \simplex_{\nmesgs}(\pfrac)$.
\end{proposition}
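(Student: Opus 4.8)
The plan is to treat \eqref{eq:BNE-flow-solution} as a Beckmann-type potential and to recognize the BNE conditions \eqref{eq:BNE} as its first-order optimality conditions, exactly as in the deterministic Wardrop case, but now with the aggregate flow entering as an affine function of the block variables $\{x^{(k)}\}$ and $y$. Write $\Phi(x,y)$ for the objective in \eqref{eq:BNE-flow-solution} and set $F_i(\bar{x}) := \sum_r \bar{x}_r x^{(r)}_i + y_i$ for the aggregate flow on link $i$ given the realized message vector $\bar{x}$. Since each $\congfunc_{\state,i}$ is non-decreasing (Assumption~\ref{ass:continuous}), the inner integral $z \mapsto \int_0^z \congfunc_{\state,i}(s)\,\de s$ is convex; as $F_i$ is affine in $(x,y)$ and we integrate against the non-negative weight $\indsignal_\state(\bar{x})\prior(\state)$, the map $\Phi$ is convex on the product of simplices $\prod_k \simplex(\pfrac) \times \simplex(1-\pfrac)$. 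Convexity makes the first-order variational inequality both necessary and sufficient for global optimality, so it suffices to match that inequality to \eqref{eq:BNE}.

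First I would compute the partials. Differentiating under the integral sign (justified by continuity of $\congfunc_{\state,i}$) and using that $F_i$ is affine gives
\[
\frac{\partial \Phi}{\partial x^{(k)}_i} = \sum_\state \int_{\bar{x}} \bar{x}_k \, \congfunc_{\state,i}(F_i(\bar{x}))\, \indsignal_\state(\bar{x})\prior(\state)\,\de\bar{x}, \qquad \frac{\partial \Phi}{\partial y_i} = \sum_\state \int_{\bar{x}} \congfunc_{\state,i}(F_i(\bar{x}))\, \indsignal_\state(\bar{x})\prior(\state)\,\de\bar{x}.
\]
Because the feasible set is a product of simplices, the joint variational inequality decouples blockwise: $x^{(k)}$ is optimal iff $\partial\Phi/\partial x^{(k)}_i \le \partial\Phi/\partial x^{(k)}_j$ for every $i$ with $x^{(k)}_i>0$ and all $j$, and likewise for $y$. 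The equivalent complementarity form is obtained by multiplying through by the non-negative flow variable, $x^{(k)}_i\big(\partial\Phi/\partial x^{(k)}_i - \partial\Phi/\partial x^{(k)}_j\big)\le 0$ and $y_i\big(\partial\Phi/\partial y_i - \partial\Phi/\partial y_j\big)\le 0$ for all $i,j$. Substituting the partials above and carrying the constant $x^{(k)}_i$ (resp.\ $y_i$) inside the integral reproduces verbatim the two inequalities in \eqref{eq:BNE}, establishing the ``if and only if'' in both directions at once.

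For the second claim, I would upgrade convexity to strict convexity in the aggregate flow. If each $\congfunc_{\state,i}$ is strictly increasing, then $z\mapsto\int_0^z\congfunc_{\state,i}$ is strictly convex, so on the support of $\rho(\bar{x}) := \sum_\state \indsignal_\state(\bar{x})\prior(\state)$ the integrand is a strictly convex function of $F_i(\bar{x})$. Suppose $(x^{(1)},y^{(1)})$ and $(x^{(2)},y^{(2)})$ are two BNE flows; by the first part both minimize $\Phi$, and by convexity so does their midpoint. If their induced aggregate flows $F^{(1)}_i$ and $F^{(2)}_i$ differed on a positive-$\rho$-measure set of $\bar{x}$, strict convexity would make $\Phi$ strictly smaller at the midpoint, contradicting optimality. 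Hence $F^{(1)}_i(\bar{x})=F^{(2)}_i(\bar{x})$ for $\rho$-a.e.\ $\bar{x}$ and all $i$; since $F_i$ is affine in $\bar{x}$, this equality extends to the affine hull of the support, yielding $\sum_k\bar{x}_k x^{(1,k)}+y^{(1)}=\sum_k\bar{x}_k x^{(2,k)}+y^{(2)}$ as claimed. Note that only the aggregate link flow is pinned down: the individual message flows $x^{(k)}$ and the non-participant flow $y$ need not be unique, because $F_i$ is a many-to-one affine image of $(x,y)$.

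The routine calculus---differentiation under the integral and convexity of the inner integral---is standard. The step that needs care, and the one I expect to be the main obstacle, is the complementarity bookkeeping: showing that multiplying the blockwise variational inequalities by the flow variables yields precisely \eqref{eq:BNE}, and conversely that \eqref{eq:BNE} implies the used-route minimality condition. This mirrors exactly the device, noted after \eqref{eq:obedience-hetero-v1}, of multiplying by $y_i$ to relax the support restriction. The only other subtlety is the final passage from ``equal on the support'' to ``equal for all $\bar{x}$,'' which hinges on the affineness of $F_i$ in $\bar{x}$ and hence on the support of $\indsignal$ affinely spanning $\simplex_{\nmesgs}(\pfrac)$.
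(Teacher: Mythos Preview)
Your treatment of the first claim is exactly what the paper does: it observes that the potential is convex (non-decreasing latencies), so KKT/variational-inequality conditions are necessary and sufficient, and that these match \eqref{eq:BNE}. The paper dismisses this in one sentence; you correctly spell out the blockwise simplex variational inequality and the complementarity device of multiplying by $x^{(k)}_i$ and $y_i$.

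For the uniqueness part your route genuinely differs. The paper does \emph{not} run a midpoint argument in the original variables. Instead it changes variables to the per-message aggregates $\tilde{x}^{(k)} := x^{(k)}+y \in \simplex(1)$, identifies the image set as
\[
\tilde{\mc X}=\Big\{(\tilde{x}^{(k)})_{k}\in\simplex(1)^{\nmesgs}:\ \textstyle\sum_i \min_k \tilde{x}^{(k)}_i \ge 1-\pfrac\Big\},
\]
proves $\tilde{\mc X}$ is convex, rewrites the potential as a function $F(\tilde{x})$ on $\tilde{\mc X}$, and argues strict convexity of $F$ via an explicit Hessian computation. Uniqueness of the minimizing tuple $(\tilde{x}^{(k)})_k$ then yields the conclusion for every $\bar{x}$ at once, because the affine map $\bar{x}\mapsto\sum_k\bar{x}_k\tilde{x}^{(k)}$ is determined by its coefficients. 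Your midpoint argument is shorter and avoids the detour through $\tilde{\mc X}$, but---as you yourself flag---it only pins down the aggregate on $\supp(\rho)$ and then needs the support to affinely span $\simplex_{\nmesgs}(\pfrac)$ to extend; the paper's reparametrization is designed precisely so that no such spanning hypothesis appears explicitly. That said, the paper's Hessian step is stated as $\tilde{x}^T\nabla^2 F\,\tilde{x}>0$ for $\tilde{x}\in\tilde{\mc X}$, which is not literally the positive-definiteness along difference directions one needs for strict convexity; in effect the same spanning issue is hidden there, so your caveat is an honest identification of a subtlety present in both arguments.
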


\begin{remark}
\begin{enumerate}
\item 
Direct private signaling policies in Section~\ref{sec:direct} correspond to the special case of indirect policies when the set of messages is equal to the set of routes. Accordingly obedience condition in \eqref{obedience-hetero-v2:main} is derived from \eqref{eq:BNE} with $x^{(k)}_i=\pfrac$ if $i=k$ and equal to zero otherwise. 
\item Proposition~\ref{prop:revelation} implies that the revelation principle, e.g., see \cite{Bergemann.Morris:19}, holds true in the setting of this paper. That is, for every indirect policy, there exists a direct policy which induces the same aggregate link flows, and therefore, it is sufficient to optimize over the class of direct policies.  
\end{enumerate}
\end{remark}

}

\subsection{Public Signaling Policies}
\label{sec:pub-signals}
\ifthenelse{\equal{\isarxiv}{1}}
{
Public signaling policies are the special case of indirect signaling policies when $\indsignal_{\state}$ is supported on the $\nmesgs$ vertices of the simplex $\simplex_{\nmesgs}$. Therefore, a public signaling policy is a map $\map{\pubsignal}{\allstates}{\triangle([\nmesgs])}$, or can alternately be represented as a $\nstates \times \nmesgs$ row stochastic matrix. In the setting of public signaling policies, the BNE flow condition in \eqref{eq:BNE} becomes:
\begin{subequations}
\label{eq:BNE-public}
\begin{align}
x^{(k)}_i \sum_{\state} \left(\congfunc_{\state,i}(x^{(k)}_i+y_i) - \congfunc_{\state,j}(x^{(k)}_j+y_j) \right)\, \pubsignal(k|\state)\prior(\state) & \leq 0, \qquad i, j \in [\nlinks], \quad k \in [\nmesgs] \label{eq:BNE-public-receiving}
\\
%x^{(k)}_i \left(L_i^{(k)}(\signal,x,y) - L_j^{(k)}(\signal,x,y) \right) \leq 0, \qquad i, j \in [\nlinks], \quad k \in [\nmesgs]
%x^{(k)}_i \sum_{\state} \left(\congfunc^{\state}_i(x^{(k)}_i+y_i) - \congfunc^{\state}_j(x^{(k)}_j+y_j) \right) \pubsignal(k|\state)\prior(\state)  \leq 0, \qquad i, j \in [\nlinks], \quad k \in [\nmesgs]
y_i \sum_{k, \, \state} \left(\congfunc_{\state,i}(x^{(k)}_i+y_i) - \congfunc_{\state,j}(x^{(k)}_j+y_j) \right)\, \pubsignal(k|\state)\prior(\state) & \leq 0, \qquad i, j \in [\nlinks] \label{eq:BNE-public-nonreceiving}
%y_i \sum_{\state} \, \sum_{k} \, \left(\congfunc^{\state}_i(x^{(k)}_i+y_i) - \congfunc^{\state}_j(x^{(k)}_j+y_j) \right) \pubsignal(k|\state)\prior(\state)  \leq 0, \qquad i, j \in [\nlinks]
\end{align}
\end{subequations}
%
%In terms of the formalism of Section~\ref{sec:indirect}, this means that public signaling policies are supported on $\nmesgs$ discrete points corresponding to the $\nmesgs$ vertices of the simplex $\simplex_{\nmesgs}(\pfrac)$. 
}{
A public signaling policy is an indirect signaling policy, under which, for every state realization, $\pfrac$ fraction of agents all receive the same message among $\{1,\ldots,\nmesgs\}=[\nmesgs]$. 
Formally, a public signaling policy is a map $\map{\pubsignal}{\allstates}{\triangle([\nmesgs])}$, or can alternately be represented as a $\nstates \times \nmesgs$ row stochastic matrix. 
%Let $\pfrac \in [0,1]$ be the fraction of agents who receive the public message, but the rest do not. 
The posterior formed by agents when the message they receive is $k$ is:
\begin{equation}
\label{eq:posterior-receiving}
\posterior^{\pubsignal,k}(\state)= \frac{\pubsignal(k|\state)\prior(\state)}{\sum_{\theta} \pubsignal(k|\theta)\prior(\theta)}, \qquad \state \in \allstates
\end{equation}
The joint posterior formed by agents who do not receive message, but have knowledge of $\pubsignal$, is:
\begin{equation}
\label{eq:posterior-not-receiving}
\posterior^{\pubsignal,\emptyset}(k,\state)=\pubsignal(k|\state)\prior(\state), \qquad k \in [\nmesgs], \, \state \in \allstates
\end{equation}

Let $x^{(k)} \in \simplex(\pfrac)$ be the link flow induced by participating agents, when the message they receive is $k \in [\nmesgs]$, and let $y \in \simplex(1-\pfrac)$ be the link flow induced by agents not receiving the message. $x^{(k)}$ is the Bayes Nash flow with respect to the posterior in \eqref{eq:posterior-receiving} and $y$ is the Bayes Nash flow with respect to the posterior in \eqref{eq:posterior-not-receiving}. That is, $x^{(k)}$ satisfies:  
\ifthenelse{\equal{\numofcolumns}{1}}
{
\begin{equation*}
\sum_{\state} \, \congfunc_{\state,i}(x^{(k)}_i+y_i) \posterior^{\pubsignal,k}(\state) \leq \sum_{\state} \, \congfunc_{\state,j}(x^{(k)}_j+y_j) \posterior^{\pubsignal,k}(\state), \qquad i \in \supp(x^{(k)}), \, \, j \in [\nlinks]
\end{equation*}
}{
\begin{multline*}
\sum_{\state} \, \congfunc_{\state,i}(x^{(k)}_i+y_i) \posterior^{\pubsignal,k}(\state) \leq \sum_{\state} \, \congfunc_{\state,j}(x^{(k)}_j+y_j) \posterior^{\pubsignal,k}(\state) \\ \qquad i \in \supp(x^{(k)}), \, \, j \in [\nlinks]
\end{multline*}
}
Substituting the expression from \eqref{eq:posterior-receiving}, the conditions on $\{x^{(1)}, \ldots, x^{(\nmesgs)}\}$ can be collectively rewritten as 
\ifthenelse{\equal{\numofcolumns}{1}}
{
\begin{equation}
\label{eq:nash-receiving}
x^{(k)}_i \sum_{\state} \left(\congfunc_{\state,i}(x^{(k)}_i+y_i) - \congfunc_{\state,j}(x^{(k)}_j+y_j) \right)\, \pubsignal(k|\state)\prior(\state) \leq 0, \qquad i, j \in [\nlinks], \quad k \in [\nmesgs]
%x^{(k)}_i \left(L_i^{(k)}(\signal,x,y) - L_j^{(k)}(\signal,x,y) \right) \leq 0, \qquad i, j \in [\nlinks], \quad k \in [\nmesgs]
%x^{(k)}_i \sum_{\state} \left(\congfunc^{\state}_i(x^{(k)}_i+y_i) - \congfunc^{\state}_j(x^{(k)}_j+y_j) \right) \pubsignal(k|\state)\prior(\state)  \leq 0, \qquad i, j \in [\nlinks], \quad k \in [\nmesgs]
\end{equation}
}{
\begin{equation}
\label{eq:nash-receiving}
\begin{split}
x^{(k)}_i  \sum_{\state} & \left(\congfunc_{\state,i}(x^{(k)}_i+y_i) - \congfunc_{\state,j}(x^{(k)}_j+y_j) \right)\, \pubsignal(k|\state)\prior(\state) \\ & \leq 0, \qquad i, j \in [\nlinks], \quad k \in [\nmesgs]
%x^{(k)}_i \left(L_i^{(k)}(\signal,x,y) - L_j^{(k)}(\signal,x,y) \right) \leq 0, \qquad i, j \in [\nlinks], \quad k \in [\nmesgs]
%x^{(k)}_i \sum_{\state} \left(\congfunc^{\state}_i(x^{(k)}_i+y_i) - \congfunc^{\state}_j(x^{(k)}_j+y_j) \right) \pubsignal(k|\state)\prior(\state)  \leq 0, \qquad i, j \in [\nlinks], \quad k \in [\nmesgs]
\end{split}
\end{equation}
}
%where 
%\begin{equation}
%\label{eq:exp-cost-expr}
%L_i^{(k)}(\signal,x,y):=\sum_{\state} \congfunc_{\state,i}(x^{(k)}_i+y_i) \, \pubsignal(k|\state)\prior(\state) 
%\end{equation}
Similarly, the condition on $y$ can be written as
\ifthenelse{\equal{\numofcolumns}{1}}
{
\begin{equation}
\label{eq:nash-not-receiving}
y_i \sum_{k, \, \state} \left(\congfunc_{\state,i}(x^{(k)}_i+y_i) - \congfunc_{\state,j}(x^{(k)}_j+y_j) \right)\, \pubsignal(k|\state)\prior(\state) \leq 0, \qquad i, j \in [\nlinks]
%y_i \sum_{\state} \, \sum_{k} \, \left(\congfunc^{\state}_i(x^{(k)}_i+y_i) - \congfunc^{\state}_j(x^{(k)}_j+y_j) \right) \pubsignal(k|\state)\prior(\state)  \leq 0, \qquad i, j \in [\nlinks]
\end{equation}
}{
\begin{equation}
\label{eq:nash-not-receiving}
\begin{split}
y_i \sum_{k, \, \state} & \left(\congfunc_{\state,i}(x^{(k)}_i+y_i) - \congfunc_{\state,j}(x^{(k)}_j+y_j) \right)\, \pubsignal(k|\state)\prior(\state) \\ & \leq 0, \qquad i, j \in [\nlinks]
%y_i \sum_{\state} \, \sum_{k} \, \left(\congfunc^{\state}_i(x^{(k)}_i+y_i) - \congfunc^{\state}_j(x^{(k)}_j+y_j) \right) \pubsignal(k|\state)\prior(\state)  \leq 0, \qquad i, j \in [\nlinks]
\end{split}
\end{equation}
}
}

The social cost is:
\begin{equation}
\label{eq:social-cost-pub}
\begin{split}
J(\pubsignal,x,y) & := \sum_{k, \, i, \, \state} \, (x^{(k)}_i+y_i) \congfunc_{\state,i}(x^{(k)}_i+y_i) \pubsignal(k|\state)\prior(\state) 
%\\
%& = \sum_{k} \, \sum_i \, (x^{(k)}_i+y_i) \, L_i^{(k)}(\pubsignal,x,y)
\end{split}
\end{equation}
Therefore, the problem of optimal public signaling policy design can be written as:
\ifthenelse{\equal{\isarxiv}{1}}
{
\begin{equation}
\label{eq:info-design-pub}
\min_{\substack{x^{(k)} \in \simplex(\pfrac), \, k \in [\nmesgs] \\ y \in \simplex(1-\pfrac) \\ \pubsignal \in \allsignals(\nmesgs)}} \, J(\pubsignal,x,y) \quad \text{s.t. } \eqref{eq:BNE-public}
\end{equation}
}{
\begin{equation}
\label{eq:info-design-pub}
\min_{\substack{x^{(k)} \in \simplex(\pfrac), \, k \in [\nmesgs] \\ y \in \simplex(1-\pfrac) \\ \pubsignal \in \allsignals(\nmesgs)}} \, J(\pubsignal,x,y) \quad \text{s.t. } \eqref{eq:nash-receiving}-\eqref{eq:nash-not-receiving}
\end{equation}
}
%where $\allpubsignals(\nmesgs)$ is the set of $\nstates \times \nmesgs$ row stochastic matrices. 
%\begin{equation*}
%\allpubsignals:=\setdef{\pubsignal(k|\state): \, k \in [\nmesgs], \, \state \in \allstates}{[\pubsignal(1|\state) \, \ldots \, \pubsignal(\nmesgs|\state)] \in \simplex_{\nmesgs}(1), \, \state \in \allstates}
%\end{equation*}

\begin{example}
Two public signaling policies which have attracted particular interest are \emph{full information} and \emph{no information}:
\ifthenelse{\equal{\numofcolumns}{1}}
{}{
\vspace{-0.1in}
}

\ifthenelse{\equal{\numofcolumns}{1}}
{
\begin{equation}
\label{eq:pub-signal-eg}
\signal^{\text{pub, full}}=
\kbordermatrix{&k=1&k=2 & \ldots & k = \nmesgs\\
\state_1&1&0&\ldots&0\\
\state_2&0&1&\ldots&0\\
\vdots & \vdots & \vdots & \ldots & \vdots\\
\state_{\nstates} & 0 & 0 & \ldots & 1
}, \qquad
\signal^{\text{pub, no}}=
\kbordermatrix{&k=1&k=2 & \ldots & k = \nmesgs\\
\state_1&1&0&\ldots&0\\
\state_2&1&0&\ldots&0\\
\vdots & \vdots & \vdots & \ldots & \vdots\\
\state_{\nstates} & 1 & 0 & \ldots & 0
}
\end{equation}
}{
\begin{equation}
\label{eq:pub-signal-eg}
\begin{split}
\signal^{\text{pub, full}} & =
\kbordermatrix{&k=1&k=2 & \ldots & k = \nmesgs\\
\state_1&1&0&\ldots&0\\
\state_2&0&1&\ldots&0\\
\vdots & \vdots & \vdots & \ldots & \vdots\\
\state_{\nstates} & 0 & 0 & \ldots & 1
} \\
\signal^{\text{pub, no}} & =
\kbordermatrix{&k=1&k=2 & \ldots & k = \nmesgs\\
\state_1&1&0&\ldots&0\\
\state_2&1&0&\ldots&0\\
\vdots & \vdots & \vdots & \ldots & \vdots\\
\state_{\nstates} & 1 & 0 & \ldots & 0
}
\end{split}
\end{equation}
}
where $\nmesgs=\nstates$ for the full information signaling policy, and $\nmesgs$ is arbitrary, e.g., $\nmesgs=1$, for the no information signaling policy. In fact, any row-stochastic $\signal^{\text{pub, no}}$ with identical rows is a no information signaling policy.
\end{example}
\ifthenelse{\equal{\isarxiv}{1}}
{
}
{
It is sometimes of interest to evaluate the cost of a given public signaling policy. The cost can be computed from the induced flows $x^{(k)}$, $k \in [\nmesgs]$, and $y$. \revisionchange{These flows are solution to a convex optimization problem~\cite[Proposition 1]{Zhu.Savla:infodesign-arxiv20}.}

%It can be shown easily that these flows are solution to:
%%
%%which in turn can be computed using the next result. 
%%%The link flows under a fixed public signal $\pubsignal$ can be obtained by solving a convex program, as stated next. 
%%\begin{lemma}
%%\label{lem:link-flows-pub}
%%The link flows, $y$ and $x^{(k)}, k \in [\nmesgs]$, induced by a public signaling policy $\pubsignal$ are solutions to
%\begin{equation*}
%\label{eq:pub-signal-potential}
%\min_{\substack{y \in \simplex(1-\pfrac) \\ x^{(k)} \in \simplex(\pfrac), \, k \in [\nmesgs]}} \quad \sum_{i, \state, k} \, \, \int_0^{x^{(k)}_{i} + y_i} \, \congfunc_{\state,i}(z) \, \de z \, \pubsignal(k|\state) \, \prior(\state)
%\end{equation*}
}
%\end{lemma}
%The proof of Lemma~\ref{lem:link-flows-pub} follows along the same lines as that of Lemma~\ref{lem:nash-unique}: \eqref{eq:pub-signal-potential} is convex and therefore the KKT conditions, which are necessary and sufficient, are equivalent to \eqref{eq:nash-receiving} and \eqref{eq:nash-not-receiving}.

%It is important to note that \eqref{eq:info-design-pub} is for a given $\nmesgs$, and its optimal value, say $J^*(\nmesgs)$ is obviously monotonically non-decreasing with $\nmesgs$. The next result states that $J^*(\nmesgs)$ is constant for $\nmesgs \geq \nstates$.  
%
%\begin{proposition}
%$J^*(\nmesgs_1)=J^*(\nmesgs_2)$ for every $\nmesgs_1 \geq \nstates$ and $\nmesgs_2 \geq \nstates$. 
%\end{proposition}
%\begin{proof}
%\end{proof}

\ifthenelse{\equal{\numofcolumns}{1}}
{}{
\vspace{-0.1in}
}

\section{An Exact Polynomial Optimization Formulation for Private Signaling Policies}
\label{sec:sdp}
%\input{direct-homo}
%\input{direct-hetero-v1}
%\begin{proposition}
%\label{prop:info-design-gmp}
In this section, unless stated otherwise, we assume that the link latency functions are polynomial, i.e., of the form in \eqref{eq:latency-affine}. \revisionchange{For such latency functions, designing optimal public signaling policy in \eqref{eq:info-design-pub} for a given message space is a polynomial optimization problem. For example, \eqref{eq:info-design-pub} is a third degree polynomial optimization problem for affine link latency functions.
This is however not the case for private policies in \eqref{eq:info-design-joint}. We now describe a procedure to \emph{equivalently} convert \eqref{eq:info-design-joint} into a polynomial optimization problem.}

Let us first consider minimizing $J(\signal,y)$ over $\signal$ satisfying \eqref{obedience-hetero-v2:obed}, for a fixed $y$. Note that, for $y=\zerobf$, this corresponds to the information design problem in the special case when $\pfrac=1$. Even in this special case, which has been studied previously in \cite{Das.Kamenica.ea:17,Tavafoghi.Teneketzis:20}, no comprehensive solution methodology exists.   

We start by rewriting the information design problem in terms of moments of the signaling policy $\signal$. 
%\kscommentphantom{Let us illustrate this for affine link latency functions.} 
Let $z$ be the vector of all monomials in $x_1, \ldots, x_{\nlinks}$ up to degree \revisionchange{$\frac{D+1}{2}$ if $D$ is odd, and $\frac{D}{2}+1$ if $D$ is even}, arranged in a lexicographical order. For example, for $D=3$, \ifthenelse{\equal{\numofcolumns}{1}}
{$z=[1, \, x_1, \ldots, x_{\nlinks}, \, x^2_1, \ldots, x_1 x_{\nlinks}, \, x_2 x_1, \ldots, x_2 x_{\nlinks}, \ldots, x_{\nlinks} x_1, \ldots, x_{\nlinks}^2]^T$.}{$z=[1, \, x_1, \ldots, x_{\nlinks}, \, x^2_1, \ldots, x_1 x_{\nlinks}, \, x_2 x_1, \ldots, x_2 x_{\nlinks}, \ldots, x_{\nlinks} x_1,$ $\ldots, x_{\nlinks}^2]^T$.} 
%$z=[x_1, \ldots, x_{\nlinks}]^T$, and 
%\begin{equation*}
%z=[x_1, \ldots, x_{\nlinks}]^T, \quad 
%Z= \begin{bmatrix}
%1 & z^T \\
%z & z z^T
%\end{bmatrix}
%\end{equation*}
For a fixed $y$, \eqref{eq:info-design-joint} can then be written as:
%\ifthenelse{\equal{\numofcolumns}{1}}
%{}{
%\vspace{-0.25in}
%}
\begin{subequations}
\label{info-design-gmp-hetero:main}
\begin{align}
\min_{\signal \in \allsignals} & \, \, \sum_{\state} \int \, C_{\state}(y) \cdot z z^T \, \signal_{\state}(x) \de x
 \label{info-design-gmp-hetero:cost}\\
\text{s.t.} \quad & \sum_{\state} \int \,  A_{\state}^{(i,j)}(y) \cdot z z^T \, \, \signal_{\state} (x) \de x\geq 0, \quad i, j \in [\nlinks] \label{info-design-gmp-hetero:obedience}
\\
& \sum_{\state} \int \,  B_{\state}^{(i,j)}(y) \cdot z z^T \, \, \signal_{\state} (x) \de x\geq 0, \quad i, j \in [\nlinks] \label{info-design-gmp-hetero:nash}
\end{align}
\end{subequations}

\ifthenelse{\equal{\numofcolumns}{1}}
{}{
\vspace{-0.5in}
}
for appropriate symmetric matrices $C_{\state}$, $A_{\state}^{(i,j)}$, and $B_{\state}^{(i,j)}$; expressions for these matrices in the special case when $D=1$ (i.e., affine link latency functions) are provided in \ifthenelse{\equal{\isarxiv}{1}}{Appendix~\ref{sec:matrix-expressions}}{the extended version~\cite{Zhu.Savla:infodesign-arxiv20}}. The cost in \eqref{info-design-gmp-hetero:cost} is the same as the cost in \eqref{info-design:main}, \eqref{info-design-gmp-hetero:obedience} corresponds to the obedience constraint in \eqref{obedience-hetero-v2:obed}, and \eqref{info-design-gmp-hetero:nash} corresponds to \eqref{obedience-hetero-v2:nash}. 

\eqref{info-design-gmp-hetero:main} is an instance of the \emph{generalized problem of moments} (GPM)~\cite{Lasserre:08}, which in turn can be solved numerically using \texttt{GloptiPoly}~\cite{gloptipoly3}. This software solves GPM by lower bounding it with semidefinite relaxations of increasing order. The stopping criterion on the order is however problem-dependent; approximations can be obtained by a user-specified order. In the special of $\nlinks=2$, the first order relaxation is tight. 

\begin{proposition}
\label{prop:gpm-sdp-exact}
Let $\nlinks=2$. For every $y \in \simplex(1-\pfrac)$, \eqref{info-design-gmp-hetero:main} is equivalent to a semidefinite program.
\end{proposition}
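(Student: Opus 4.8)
The plan is to exploit the fact that for $\nlinks=2$ the measures $\signal_\state$ are supported on the \emph{one-dimensional} simplex $\simplex_2(\pfrac)$, so that the generalized problem of moments \eqref{info-design-gmp-hetero:main} reduces to a \emph{univariate} truncated moment problem on a compact interval, for which the cone of valid moment sequences has an \emph{exact} linear-matrix-inequality (LMI) description.

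First I would reparameterize the support. A point $x=(x_1,x_2)\in\simplex_2(\pfrac)$ satisfies $x_2=\pfrac-x_1$ with $x_1\in[0,\pfrac]$, so each $\signal_\state$ is identified, via the pushforward under $x\mapsto x_1$, with a probability measure $\rho_\state$ on $[0,\pfrac]$. After substituting $x_2=\pfrac-x_1$, every monomial appearing in $z z^T$ collapses to a univariate polynomial in $x_1$. Since the entries of $z$ have degree at most $\tfrac{D+1}{2}$ (resp. $\tfrac{D}{2}+1$), the entries of $z z^T$ have degree at most $N:=D+1$ (resp. $D+2$); because $y$ is fixed, the matrices $C_\state(y),A_\state^{(i,j)}(y),B_\state^{(i,j)}(y)$ are constant, and each integrand $z^\top M z$ in \eqref{info-design-gmp-hetero:cost}--\eqref{info-design-gmp-hetero:nash} is a univariate polynomial of degree at most $N$. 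Hence the objective and all constraints are \emph{linear} in the truncated moment vectors $\big(m^\state_0,\dots,m^\state_N\big)$, where $m^\state_k:=\int_0^{\pfrac} t^k\,\de\rho_\state(t)$.

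Second, I would promote these moments to the decision variables of a finite-dimensional program. The normalization $\signal_\state\in\triangle(\simplex_2(\pfrac))$ becomes the affine constraint $m^\state_0=1$; the objective and the obedience/Nash constraints become affine in the $m^\state_k$; and the only residual requirement is that, for each $\state\in\allstates$, the vector $(m^\state_0,\dots,m^\state_N)$ be realizable by some probability measure on $[0,\pfrac]=\{t:\ t\geq 0,\ \pfrac-t\geq 0\}$. The crux is that, in the univariate compact-interval case, this is characterized \emph{exactly} by positive semidefiniteness of the associated Hankel moment matrix together with the two localizing matrices built from the generators $t$ and $\pfrac-t$ (the truncated Hausdorff moment problem). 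These are finitely many LMIs in the $m^\state_k$. Assembling them with the affine objective and the affine equality/inequality constraints gives a semidefinite program, and I would close by verifying the two-way correspondence: any feasible $\signal$ yields LMI-feasible moments of equal cost, and conversely any LMI-feasible moment vector is realized by genuine measures $\{\rho_\state\}_{\state\in\allstates}$ (hence a feasible $\signal$) of equal cost, so the two optimal values coincide.

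The main obstacle is establishing that the LMI description is \emph{exact} rather than merely necessary, i.e. that every moment vector satisfying the semidefinite constraints is actually represented by a probability measure on $[0,\pfrac]$. This is precisely where $\nlinks=2$ is essential: in the univariate setting a polynomial nonnegative on an interval admits an exact weighted sum-of-squares (Markov--Lukács) representation at the appropriate finite degree, and by semidefinite duality the dual moment cone coincides with the LMI-defined cone \emph{with no relaxation gap}. For $\nlinks>2$ the support is a higher-dimensional simplex, where only an asymptotically exact Lasserre hierarchy is available, which is why the exact SDP equivalence is asserted only for the two-link case. A minor bookkeeping point I would handle is the parity of $N$: the sizes of the moment and localizing matrices must be chosen so that the top moment $m^\state_N$ enters either the moment matrix or a degree-one localizing block, which is routine and does not affect the realizable set.
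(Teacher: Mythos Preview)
Your proposal is correct and follows essentially the same approach as the paper: substitute $x_2=\pfrac-x_1$ to reduce each $\signal_\state$ to a univariate probability measure on $[0,\pfrac]$, observe that the cost and constraints become linear in finitely many moments, and invoke the exact LMI characterization of the truncated Hausdorff moment cone on a compact interval. The paper's proof is terser (it cites the moment characterization rather than spelling out the Hankel/localizing-matrix structure and the Markov--Luk\'acs duality you describe), but the argument is the same.
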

\begin{remark}
Proposition~\ref{prop:gpm-sdp-exact} implies that, in the case of two links, when all the agents are participating, i.e., $\pfrac=1$, computing optimal signaling policy is tractable for arbitrary polynomial latency functions. This is to be contrasted with existing work, e.g., \cite{Das.Kamenica.ea:17,Tavafoghi.Teneketzis:20}, where an optimal signaling policy is provided for such a setting only for certain affine link latency functions.  
%\ifthenelse{\equal{\numofcolumns}{1}}
%{}{
%\vspace{-0.95in}
%}
\end{remark}

\subsection{Atomic Private Signaling Policies}
\label{sec:atomic}
A natural approach to approximate the joint optimization in \eqref{eq:info-design-joint} is to discretize the support of $\signal$. 
%In order to state the next result, we describe an important sub-class of $\allsignals$. 
A signaling policy $\signal$ is called \emph{$\nmesgs$-atomic}, $\nmesgs \in \NN$, if, for every $\state \in \allstates$, $\signal_{\state}$ is supported on $\nmesgs$ discrete points $x^{(k)} \in \simplex(\pfrac)$, $k \in [\nmesgs]$. Let the set of such signaling policies be denoted as $\allsignals(\nmesgs)$. It is easy to see that every signaling policy in $\allsignals(\nmesgs)$ can be represented as a $\nstates \times \nmesgs$ row stochastic matrix. To emphasize the matrix notation, we let $\signal(k|\state)$ denote the probability of recommending routes according to $x^{(k)}$ when the state realization is $\state$. 
Computing optimal signaling policy in $\allsignals(m)$ can be written as the following polynomial optimization problem:
\footnote{Throughout the paper, unless noted otherwise, the summation over index for discrete support, such as $k$, is to be taken over the entire range, i.e., $\nmesgs$.}:
\ifthenelse{\equal{\numofcolumns}{1}}
{
\begin{subequations}
\label{info-design-atomic:main}
\begin{align}
%\begin{equation}
%\label{eq:info-design-atomic}
%\begin{split}
\min_{\substack{x^{(k)} \in \simplex(\pfrac), \, k \in [\nmesgs] \\ y \in \simplex(1-\pfrac) \\ \signal \in \allsignals(\nmesgs)}} & \, \, \,  \sum_{k, \, \state, \, i}  \left(x^{(k)}_i + y_i\right) \, \congfunc_{\state,i} (x^{(k)}_i + y_i) \, \signal(k|\state) \, \prior(\state) \label{info-design-atomic:cost}\\
\text{s.t. } & \, \, \sum_{k, \, \state} \, \congfunc_{\state,i}(x^{(k)}_i + y_i) \, x^{(k)}_i \, \signal(k|\state) \, \prior(\state) \leq \sum_{k, \, \state} \, \congfunc_{\state,j}(x^{(k)}_j + y_j) \, x^{(k)}_i \,  \signal(k|\state) \, \prior(\state), \quad i, j \in [\nlinks] \label{info-design-atomic:obedience}\\
& \, \, \sum_{k, \, \state} \, \congfunc_{\state,i}(x^{(k)}_i + y_i) \, y_i \, \signal(k|\state) \, \prior(\state) \leq \sum_{k, \, \state} \, \congfunc_{\state,j}(x^{(k)}_j + y_j) \, y_i \,  \signal(k|\state) \, \prior(\state), \quad i, j \in [\nlinks] \label{info-design-atomic:nash}
%\end{split}
%\end{equation}
\end{align}
\end{subequations}
}{
\begin{subequations}
\label{info-design-atomic:main}
\begin{align}
& \min_{\substack{x^{(k)} \in \simplex(\pfrac), \, k \in [\nmesgs] \\ y \in \simplex(1-\pfrac) \\ \signal \in \allsignals(\nmesgs)}} \sum_{k, \, \state, \, i}  \left(x^{(k)}_i + y_i\right) \, \congfunc_{\state,i} (x^{(k)}_i + y_i)  \signal(k|\state) \prior(\state) \label{info-design-atomic:cost}\\
& \text{s.t. }  \, \sum_{k, \, \state} \, \congfunc_{\state,i}(x^{(k)}_i + y_i) \, x^{(k)}_i \, \signal(k|\state) \, \prior(\state) \nonumber \\
& \leq  \sum_{k, \, \state} \, \congfunc_{\state,j}(x^{(k)}_j + y_j) \, x^{(k)}_i \,  \signal(k|\state) \, \prior(\state), \quad i, j \in [\nlinks] \label{info-design-atomic:obedience}\\
& \sum_{k, \, \state} \, \congfunc_{\state,i}(x^{(k)}_i + y_i) \, y_i \, \signal(k|\state) \, \prior(\state) 
\nonumber \\
& \leq  \sum_{k, \, \state} \, \congfunc_{\state,j}(x^{(k)}_j + y_j) \, y_i \,  \signal(k|\state) \, \prior(\state), \quad i, j \in [\nlinks] \label{info-design-atomic:nash}
\end{align}
\end{subequations}
%\begin{subequations}
%\label{info-design-atomic:main}
%\begin{align}
%\min_{\substack{x^{(k)} \in \simplex(\pfrac), \, k \in [\nmesgs] \\ y \in \simplex(1-\pfrac) \\ \signal \in \allsignals(\nmesgs)}} & \sum_{k, \, \state, \, i}  \left(x^{(k)}_i + y_i\right) \, \congfunc_{\state,i} (x^{(k)}_i + y_i)  \signal(k|\state) \prior(\state) \label{info-design-atomic:cost}\\
%\text{s.t. }  \, \sum_{k, \, \state} \, & \congfunc_{\state,i}(x^{(k)}_i + y_i) \, x^{(k)}_i \, \signal(k|\state) \, \prior(\state) \nonumber \\
%\leq & \sum_{k, \, \state} \, \congfunc_{\state,j}(x^{(k)}_j + y_j) \, x^{(k)}_i \,  \signal(k|\state) \, \prior(\state), \quad i, j \in [\nlinks] \label{info-design-atomic:obedience}\\
%\sum_{k, \, \state} \, & \congfunc_{\state,i}(x^{(k)}_i + y_i) \, y_i \, \signal(k|\state) \, \prior(\state) 
%\nonumber \\
%\leq & \sum_{k, \, \state} \, \congfunc_{\state,j}(x^{(k)}_j + y_j) \, y_i \,  \signal(k|\state) \, \prior(\state), \quad i, j \in [\nlinks] \label{info-design-atomic:nash}
%\end{align}
%\end{subequations}
}

In particular, for \eqref{eq:latency-affine} with $D=1$, i.e., affine link latency functions, the polynomials in the cost functions and the constraints are of degree 3. 
%\kscomment{Similar to \eqref{info-design-atomic:main}, \eqref{eq:info-design-pub} is a third degree polynomial optimization problem for affine link latency functions.}

\eqref{info-design-atomic:main} can also be solved (approximately) using \texttt{GloptiPoly}. \eqref{info-design-atomic:main} gives an increasingly tighter upper bound to \eqref{eq:info-design-joint} with increasing $\nmesgs \in \NN$. While it is natural to expect the gap between \eqref{info-design-atomic:main} and \eqref{eq:info-design-joint} to go to zero as $m \to + \infty$, the gap in fact becomes zero for finite $m$.

\begin{theorem}
\label{prop:natoms-upper-bound}
\eqref{eq:info-design-joint} is equivalent to \eqref{info-design-atomic:main} for $\nmesgs \geq \nstates \, {D+\nlinks \choose D+1}$.
\end{theorem}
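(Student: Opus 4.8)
The plan is to exploit the fact that, once the non-participating flow $y$ is frozen, \eqref{eq:info-design-joint} is a generalized problem of moments in which the objective \eqref{info-design:main} and both families of constraints \eqref{obedience-hetero-v2:obed}--\eqref{obedience-hetero-v2:nash} are \emph{linear} functionals of the measures $\signal_\state$, and every integrand is a polynomial in $x$ of degree at most $D+1$. Indeed, $(x_i+y_i)\congfunc_{\state,i}(x_i+y_i)$ and $\congfunc_{\state,i}(x_i+y_i)\,x_i$ have degree $D+1$ in $x$, while $\congfunc_{\state,i}(x_i+y_i)\,y_i$ has degree $D$; hence, for fixed $y$, the data of the problem depend on each $\signal_\state$ only through its moments up to order $D+1$. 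This is precisely the regime in which a Tchakaloff-type quadrature result applies.

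First I would establish existence of a minimizer $(\signal^\ast,y^\ast)$ of \eqref{eq:info-design-joint}. Since $\simplex(\pfrac)$ and $\simplex(1-\pfrac)$ are compact, the set ${\triangle(\simplex(\pfrac))}^{\nstates}\times\simplex(1-\pfrac)$ is weak-$\ast$ compact, and because every relevant functional is the integral of a bounded continuous (polynomial) function against the $\signal_\state$, both the cost and the feasibility constraints are continuous under weak-$\ast$ convergence; the feasible set is nonempty by the Remark following \eqref{eq:info-design-joint}. Attainment then follows.

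Next, freezing $y=y^\ast$, I would atomize each $\signal^\ast_\state$ separately. The key reduction is to use the affine constraint $\sum_i x_i=\pfrac$ defining $\simplex(\pfrac)$: eliminating $x_\nlinks=\pfrac-\sum_{i<\nlinks}x_i$ turns every integrand into a polynomial of degree at most $D+1$ in the $\nlinks-1$ free variables $x_1,\dots,x_{\nlinks-1}$. Applying the Tchakaloff theorem \cite{Bayer.Teichmann:06} in $\real^{\nlinks-1}$ to $\signal^\ast_\state$ with degree $D+1$ yields a finitely supported measure $\hat\signal_\state$ with nonnegative weights, supported on points of $\simplex(\pfrac)$, whose moments up to order $D+1$ coincide with those of $\signal^\ast_\state$, and whose number of atoms is at most $\dim\mathcal{P}_{D+1}(\real^{\nlinks-1})=\binom{(\nlinks-1)+(D+1)}{\nlinks-1}=\binom{\nlinks+D}{D+1}$. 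Matching the degree-$0$ moment guarantees $\int d\hat\signal_\state=1$, so $\hat\signal_\state\in\triangle(\simplex(\pfrac))$. Because the cost and every constraint in \eqref{obedience-hetero-v2:main} are determined by these matched moments, $(\hat\signal,y^\ast)$ attains the same objective value and is feasible.

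Finally I would assemble the atomic policy: taking the union of the at most $\binom{\nlinks+D}{D+1}$ atoms over the $\nstates$ states gives a common support $\{x^{(k)}\}$ of cardinality $\nmesgs\le\nstates\binom{\nlinks+D}{D+1}$, with state-dependent weights $\signal(k|\state)$ (setting unused weights to zero). This $(\hat\signal,x,y^\ast)$ is feasible for \eqref{info-design-atomic:main} and achieves the value of \eqref{eq:info-design-joint}, so the optimum of \eqref{info-design-atomic:main} is no larger than that of \eqref{eq:info-design-joint}. Conversely, $\allsignals(\nmesgs)\subset\allsignals$ and the constraints \eqref{info-design-atomic:obedience}--\eqref{info-design-atomic:nash} (resp. cost \eqref{info-design-atomic:cost}) are exactly the specializations of \eqref{obedience-hetero-v2:main} (resp. \eqref{info-design:main}) to atomic measures, so \eqref{info-design-atomic:main} is a restriction of \eqref{eq:info-design-joint} and its optimum is no smaller; equality follows. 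The hard part will be invoking the quadrature theorem in the correct form --- ensuring the atoms remain inside $\simplex(\pfrac)$ (support, not merely convex hull) and that a \emph{single} atomization simultaneously preserves the objective and all $2\nlinks^2$ constraints --- together with extracting the sharp count $\binom{\nlinks+D}{D+1}$ rather than the naive $\binom{\nlinks+D+1}{\nlinks}$, which is exactly what the dimension reduction by the simplex constraint buys.
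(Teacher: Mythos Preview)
Your proposal is correct and follows essentially the same approach as the paper: freeze $y$ at an optimizer, use the simplex constraint to eliminate one variable so that all integrands become polynomials of degree at most $D+1$ in $\nlinks-1$ variables, and then invoke the Bayer--Teichmann/Tchakaloff result to replace each $\signal^\ast_\state$ by an atomic measure on at most $\binom{D+\nlinks}{D+1}$ points with identical moments up to that degree. Your write-up is in fact more careful than the paper's terse proof, explicitly treating existence, the converse inequality, and the issue of the atoms lying in $\supp(\signal^\ast_\state)\subset\simplex(\pfrac)$.
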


The upper bound in Theorem~\ref{prop:natoms-upper-bound} on the number of atoms required to realize an optimal signaling policy can be tightened in some cases, as we show in the next section. 

%\ifthenelse{\equal{\numofcolumns}{1}}
%{}{
%\vspace{-0.275in}
%}
\subsection{Diagonal Atomic Private Signaling Policies}
An atomic policy which has attracted particular attention is when $\signal$ is the identity matrix of size $\nstates$. We shall refer to such a policy as a \emph{diagonal atomic signaling policy}, and denote its finite support as $x^{\state}$, $\state \in \allstates$. 
\revisionchange{These policies are among the simplest policies which do not reveal the true state. They simplify the process of route recommendation for the planner, and also reduce the complexity of the information design problem. Besides, as shown in Section~\ref{sec:monotone}, they are an important medium for showing monotonicity of cost with increasing fraction of participating agents. Moreover, as simulations in Section~\ref{sec:simulations} suggest, it might be sufficient to focus on them for optimal performance, however a formal study is left to future work.} 
The polynomial optimization problem in \eqref{info-design-atomic:main} for diagonal atomic policies simplifies to:
%\begin{equation}
\ifthenelse{\equal{\numofcolumns}{1}}
{
\begin{subequations}
\label{eq:info-design-diagonal}
\begin{align}
\min_{\substack{x^{\state} \in \simplex(\pfrac), \, \state \in \allstates \\ y \in \simplex(1-\pfrac)}} & \, \, \,  \sum_{\state, \, i}  \left(x^{\state}_i + y_i\right) \, \congfunc_{\state,i} (x^{\state}_i + y_i) \, \prior(\state) 
\label{eq:info-design-diagonal:cost}\\
\text{s.t. } & \, \, \sum_{\state} \, \congfunc_{\state,i}(x^{\state}_i + y_i) \, x^{\state}_i \, \prior(\state) \leq \sum_{\state} \, \congfunc_{\state,j}(x^{\state}_j + y_j) \, x^{\state}_i \, \prior(\state), \quad i, j \in [\nlinks] 
\label{eq:info-design-diagonal:obedience}\\
& \, \, \sum_{\state} \, \congfunc_{\state,i}(x^{\state}_i + y_i) \, y_i \, \prior(\state) \leq \sum_{\state} \, \congfunc_{\state,j}(x^{\state}_j + y_j) \, y_i \, \prior(\state), \quad i, j \in [\nlinks]
\label{eq:info-design-diagonal:nash}
\end{align}
\end{subequations}
}{
\begin{subequations}
\label{eq:info-design-diagonal}
\begin{align}
\min_{\substack{x^{\state} \in \simplex(\pfrac), \, \state \in \allstates \\ y \in \simplex(1-\pfrac)}} & \, \, \,  \sum_{\state, \, i}  \left(x^{\state}_i + y_i\right) \, \congfunc_{\state,i} (x^{\state}_i + y_i) \, \prior(\state) 
\label{eq:info-design-diagonal:cost}\\
\text{s.t. } \, \, \sum_{\state} \, & \congfunc_{\state,i}(x^{\state}_i + y_i) \, x^{\state}_i \, \prior(\state) \nonumber \\
\leq & \sum_{\state} \, \congfunc_{\state,j}(x^{\state}_j + y_j) \, x^{\state}_i \, \prior(\state), \quad i, j \in [\nlinks] 
\label{eq:info-design-diagonal:obedience}\\
\sum_{\state} \, & \congfunc_{\state,i}(x^{\state}_i + y_i) \, y_i \, \prior(\state) 
\nonumber \\
\leq & \sum_{\state} \, \congfunc_{\state,j}(x^{\state}_j + y_j) \, y_i \, \prior(\state), \quad i, j \in [\nlinks]
\label{eq:info-design-diagonal:nash}
\end{align}
\end{subequations}
}
%\end{equation}
In general, \eqref{eq:info-design-diagonal} gives an upper bound to \eqref{info-design-atomic:main} for $\nmesgs \geq \nstates$, and hence also for \eqref{eq:info-design-joint}. 

\begin{remark}
\label{rem:public-private-connection}
It is interesting to compare the formulations in \eqref{info-design-atomic:main} and \eqref{eq:info-design-pub} for $\nmesgs$-atomic private signaling policies and public signaling policies with $\nmesgs$ messages respectively. 
\ifthenelse{\equal{\isarxiv}{1}}
{While the next result implies that every public signaling policy with $\nmesgs$ messages 
can be equivalently realized by an $\nmesgs$-atomic private signaling policy, the converse is not true in general.

%\subsection{Full Information}
%\input{public-signal-flow-v3b}

%\subsection{Public to Private}
\begin{proposition}
\label{prop:pub-to-private}
%\begin{enumerate}
%\item[(i)] 
Given a $\pfrac \in [0,1]$, for every public signaling policy $\pubsignal$ with $\nmesgs$ messages, there exists an $\nmesgs$-atomic direct private signaling policy with the same cost.
%\item[(ii)] For every $\pfrac \in [0,1]$, there exists a feasible $(\pubsignal,x,y)$. 
%\end{enumerate}
\end{proposition}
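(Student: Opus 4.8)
The plan is to realize the target private policy by a direct relabeling of the public one: take the $\nstates \times \nmesgs$ row-stochastic matrix of the atomic private policy to be exactly $\signal(k|\state) := \pubsignal(k|\state)$, take its finite support to be the public Bayes Nash flows $x^{(k)}$, $k \in [\nmesgs]$, and leave the non-participating flow $y$ unchanged. Operationally, this means that whenever the state is $\state$, the planner draws $k$ with probability $\pubsignal(k|\state)$ and recommends route $i$ to a volume $x^{(k)}_i$ of participating agents; since $x^{(k)} \in \simplex(\pfrac)$, this is a valid partition of the $\pfrac$-mass, so $\signal \in \allsignals(\nmesgs)$ is a well-defined $\nmesgs$-atomic direct private policy.

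First I would check the cost. The atomic private cost \eqref{info-design-atomic:cost} and the public cost \eqref{eq:social-cost-pub} are the same triple sum over $(k,\state,i)$ of $(x^{(k)}_i+y_i)\,\congfunc_{\state,i}(x^{(k)}_i+y_i)$ weighted by $\signal(k|\state)\prior(\state)$; with $\signal=\pubsignal$ and the shared flows $x^{(k)},y$, these expressions coincide term by term, so the two policies have identical social cost.

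The substance is feasibility, i.e. that the constructed policy satisfies the obedience constraints \eqref{info-design-atomic:obedience}--\eqref{info-design-atomic:nash}. The key observation is that the public participating-agent equilibrium condition \eqref{eq:BNE-public-receiving} holds \emph{per message} $k$, and is in fact stronger than the aggregate obedience required of the private policy. Concretely, since the flow $x^{(k)}_i$ does not depend on the state $\state$, I can move it inside the $\state$-sum in \eqref{eq:BNE-public-receiving} and then sum the resulting inequalities over $k \in [\nmesgs]$; using $\signal(k|\state)=\pubsignal(k|\state)$, the result is precisely \eqref{info-design-atomic:obedience} in the rearranged form $\sum_{k,\state} x^{(k)}_i\big(\congfunc_{\state,i}(x^{(k)}_i+y_i)-\congfunc_{\state,j}(x^{(k)}_j+y_j)\big)\signal(k|\state)\prior(\state)\le 0$. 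For the non-participating agents the manipulation is even more direct: the non-receiving equilibrium condition \eqref{eq:BNE-public-nonreceiving} is literally \eqref{info-design-atomic:nash} after pulling the state-independent factor $y_i$ inside the sum over $(k,\state)$. Hence both families of obedience constraints hold, and the private policy is obedient.

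I do not expect a genuine obstacle, because the already-simplified obedience forms \eqref{info-design-atomic:obedience}--\eqref{info-design-atomic:nash} have folded the posterior \eqref{eq:posterior-expr} into the multiplicative factors $x^{(k)}_i$ and $y_i$, so matching them to the public conditions is pure bookkeeping. The only point requiring care is conceptual rather than computational: one must note that the per-message public condition is a refinement of the aggregate obedience condition (summing over $k$ discards information but preserves the inequality), which is exactly why the implication runs from public to private and not conversely, consistent with the remark that the converse inclusion fails in general.
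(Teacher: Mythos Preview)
Your proposal is correct and follows essentially the same approach as the paper: set $\signal=\pubsignal$ with atoms at the public BNE flows $x^{(k)}$, keep $y$, observe that \eqref{info-design-atomic:nash} coincides with \eqref{eq:BNE-public-nonreceiving}, obtain \eqref{info-design-atomic:obedience} by summing \eqref{eq:BNE-public-receiving} over $k$, and note that the costs are term-by-term identical. Your additional commentary on why the implication runs only from public to private is apt and consistent with the paper's remark.
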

\begin{proof}
%\begin{enumerate}
%\item[(i)] 
Let $x=(x^{(1)},\ldots,x^{(\nmesgs)})$ and $y$ be a set of link flows induced by $\pubsignal$, e.g., as given by Proposition~\ref{prop:revelation}.
Consider the private signaling policy $\signal=\pubsignal$ supported on atoms at $(x^{(1)},\ldots,x^{(\nmesgs)})$. Therefore, \eqref{info-design-atomic:nash} is satisfied by the same $y$ as in \eqref{eq:BNE-public-nonreceiving}. Summing \eqref{eq:BNE-public-receiving} over $k \in [\nmesgs]$ gives \eqref{info-design-atomic:obedience}. That is, $(\signal, x,y)$ is feasible for \eqref{info-design-atomic:main}. The equality of expected costs is trivial given that the induced link flows $(x,y)$ and the signaling policies $\signal=\pubsignal$ are the same.
%\item[(ii)]
%Let $y^0$ denote a Bayes Nash flow for $\pfrac=0$, i.e., it satisfies
%\begin{equation}
%\label{eq:pfrac-zero}
%y^0_i \sum_{\state} \left(\congfunc_{\state,i}(y^0_i) - \congfunc_{\state,j}(y^0_j) \right) \prior(\state) \leq 0, \qquad i, j \in [\nlinks]
%\end{equation}
%\kscomment{The fact that such a $y^0$ exists follows from Lemma~\ref{lem:link-flows-pub}.}
%We shall now show that $(\signal^{\text{pub, no}},x=\pfrac y^0,y=(1-\pfrac) y^0)$ is feasible, i.e., it satisfies \eqref{eq:nash-receiving}-\eqref{eq:nash-not-receiving}, for every $\pfrac \in [0,1]$. Since $x+y=y^0$, $\congfunc_{\state,i}(x_i+y_i)=\congfunc_{\state,i}(y^0_i)$, and therefore the left hand side of \eqref{eq:nash-receiving} is $\pfrac y^0_i \sum_{\state} \left(\congfunc_{\state,i}(y^0_i) - \congfunc_{\state,j}(y^0_j)  \right) \prior(\state) \leq 0$, where the inequality follows from \eqref{eq:pfrac-zero}. \eqref{eq:nash-not-receiving} is satisfied similarly. 
%\end{enumerate}
\end{proof}

\begin{remark}
\label{rem:feasible-non-empty}
Proposition~\ref{prop:pub-to-private} implies that, for every $\pfrac \in [0,1]$, there exists a feasible 1-atomic private signaling policy corresponding to $\signal^{\text{pub, no}}$ in \eqref{eq:pub-signal-eg} with $\nmesgs=1$. Therefore, \eqref{eq:info-design-joint} is feasible for every $\pfrac \in [0,1]$. Considering $\nstates$ duplicates of the same atom as for $\nmesgs=1$ case implies that \eqref{eq:info-design-diagonal} is feasible for all $\pfrac \in [0,1]$. Feasibility of \eqref{info-design-atomic:main} can be established along similar lines.
\end{remark}
}{
\revisionchange{Every public signaling policy with $\nmesgs$ messages 
can be equivalently realized by an $\nmesgs$-atomic private signaling policy, but the converse is not true in general. Formally, given a $\pfrac \in [0,1]$, for every public signaling policy $\pubsignal$ with $\nmesgs$ messages, there exists an $\nmesgs$-atomic direct private signaling policy with the same cost.
In particular, there exists a feasible 1-atomic private signal corresponding to $\signal^{\text{pub, no}}$ in \eqref{eq:pub-signal-eg} with $\nmesgs=1$, and hence \eqref{eq:info-design-joint} is feasible for every $\pfrac \in [0,1]$. Considering $\nstates$ duplicates of the same atom as for $\nmesgs=1$ case implies that \eqref{eq:info-design-diagonal} is feasible for all $\pfrac \in [0,1]$. Feasibility of \eqref{info-design-atomic:main} can be established along similar lines.
} 
}
\end{remark}

\begin{remark}
If $\pfrac=1$, i.e., $y=\zerobf$, then Proposition~\ref{prop:gpm-sdp-exact} can be strengthened by identifying tractable sufficient conditions on the coefficients of link latency functions such that \eqref{info-design-gmp-hetero:main} admits a diagonal atomic optimal solution, using the approach of the proof of Proposition~\ref{thm:diagonal-optimal}. Specifically, one can rewrite \eqref{info-design-gmp-hetero:main} only in terms of scalar $x_1$ and then note that it is sufficient to ensure that $A_{\state}^{(i,j)} zz^T$ in \eqref{info-design-gmp-hetero:obedience} is concave in $x_1$ for all $i, j \in [\nlinks=2]$ and $\state \in \allstates$. The latter can be written in terms of non-negativity of polynomials corresponding to the second derivative of $A_{\state}^{(i,j)} zz^T$. These in turn can be equivalently written as semidefinite constraints, e.g., using \cite[Theorems 9 and 10]{Nesterov:00}.
\end{remark}
%Yet, \eqref{eq:info-design-diagonal} has been used as a framework for optimal design of private signals for $\pfrac=1$, e.g., in \cite{Das.Kamenica.ea:17}, without formal justification for its equivalence to \eqref{eq:info-design-joint}. 
The next result establishes the equivalence between \eqref{eq:info-design-diagonal} and \eqref{eq:info-design-joint} in a special case, and also establishes that \eqref{eq:info-design-diagonal} is equivalent to the following semidefinite program: 
\begin{subequations}
\label{gmp-to-sdp-hetero:main}
\begin{align}
\min_{M \succeq 0} & \, \, \hat{J}(M):=  C \cdot M \label{gmp-to-sdp-hetero:cost} \\ 
\text{s.t.} \quad &  A^{(i,j)} \cdot M  \geq 0, \quad i, j \in [\nlinks] \label{gmp-to-sdp-hetero:obedience}
\\
&  B^{(i,j)} \cdot M  \geq 0, \quad i, j \in [\nlinks] \label{gmp-to-sdp-hetero:nash}
\\
%& M_{\state} \succeq 0, \qquad \state \in \allstates  \label{gmp-to-sdp-hetero:psd}\\
& M(1,1) = 1  \label{gmp-to-sdp-hetero:unit-mass}\\
& M(i,j) \geq 0, \qquad i, j \in [(\nstates+1)\nlinks +1]  \label{gmp-to-sdp-hetero:non-negative}\\
& S_x^{(k)} \cdot M = 0, \quad S_y \cdot M = 0, \qquad k \in [\nmesgs] \label{gmp-to-sdp-hetero:simplex}\\
%& S_y \cdot M = 0  \label{gmp-to-sdp-hetero:simplex-y} \\
& T_x^{(i,k)} \cdot M=0, \quad T_y^{(i)} \cdot M = 0 \qquad i \in [\nlinks], \, k \in [\nmesgs]
\label{gmp-to-sdp-hetero:second-moment} 
%\\
%& \rank(M_{\state})=1, \qquad \state \in \allstates\label{gmp-to-sdp-hetero:rank} 
\end{align}
\end{subequations}
where the expressions for symmetric matrices $C$, $A^{(i,j)}$, $B^{(i,j)}$, $S_x^{(k)}$, $S_y$, $T_x^{(i,k)}$ and $T_y^{(i)}$ for the special case $D=1$ are provided in \ifthenelse{\equal{\isarxiv}{1}}
{Appendix~\ref{sec:matrix-expressions}}{\revisionchange{the extended version~\cite{Zhu.Savla:infodesign-arxiv20}}}.

\begin{proposition}
\label{thm:diagonal-optimal}
If $\nlinks=2$, then \eqref{eq:info-design-diagonal}, \eqref{eq:info-design-joint} and \eqref{gmp-to-sdp-hetero:main} are all equivalent to each other for \eqref{eq:latency-affine} with $D=1$, i.e., for affine link latency functions. 
\end{proposition}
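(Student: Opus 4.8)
The plan is to prove the three problems coincide by two separate reductions, reading the equivalences as \eqref{eq:info-design-joint}$=$\eqref{eq:info-design-diagonal}$=$\eqref{gmp-to-sdp-hetero:main}. Two of the inequalities are free: a diagonal atomic policy is a particular $\signal\in\allsignals$, so \eqref{eq:info-design-joint}$\le$\eqref{eq:info-design-diagonal}, and \eqref{gmp-to-sdp-hetero:main} is the degree-two moment relaxation of \eqref{eq:info-design-diagonal}, hence a lower bound. Everything rests on the two reverse inequalities. Because $\nlinks=2$ I would first eliminate $x_2=\pfrac-x_1$ and $y_2=(1-\pfrac)-y_1$, so that every cost and constraint integrand becomes a univariate quadratic in $x_1$ with $y_1$ as a parameter.

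The computation that organizes the proof is the $x_1^2$-coefficient of each such quadratic. With affine latencies the per-state cost has $x_1^2$-coefficient $\alpha_{1,\state,1}+\alpha_{1,\state,2}\ge0$, so it is \emph{convex}; the obedience integrand of \eqref{obedience-hetero-v2:obed} has $x_1^2$-coefficient $-(\alpha_{1,\state,1}+\alpha_{1,\state,2})\le0$, so it is \emph{concave}; and the Nash integrand of \eqref{obedience-hetero-v2:nash} is \emph{affine} in $x_1$. For the first reduction I take an optimal $(\signal^\ast,y^\ast)$---atomic by Theorem~\ref{prop:natoms-upper-bound}---and, holding $y^\ast$ fixed, replace each $\signal^\ast_\state$ by a point mass at its mean $\bar x^\state=\E_{\signal^\ast_\state}[x]$. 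Jensen then gives a weakly smaller cost (convexity), preserves \eqref{obedience-hetero-v2:obed} (concavity), and reproduces \eqref{obedience-hetero-v2:nash} with equality (affineness), so the collapsed policy is diagonal and feasible with no larger value; this yields \eqref{eq:info-design-diagonal}$\le$\eqref{eq:info-design-joint} and incidentally sharpens the atom count from $\nstates\binom{3}{2}=3\nstates$ down to $\nstates$.

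For the second reduction I would certify that the first-level relaxation \eqref{gmp-to-sdp-hetero:main} is tight by forcing the optimal moment matrix onto its rank-one boundary. For fixed $y_1$ this is exactly the SDP-exactness of Proposition~\ref{prop:gpm-sdp-exact}; the new content is that lifting $y_1$ into $M$ opens no gap. The same sign pattern shows the pure second moments $\E[(x^\state_1)^2]$ and $\E[y_1^2]$ enter $\hat J$ with nonnegative coefficients and enter \eqref{gmp-to-sdp-hetero:obedience}, resp.\ \eqref{gmp-to-sdp-hetero:nash}, with nonpositive coefficients, while \eqref{gmp-to-sdp-hetero:simplex}--\eqref{gmp-to-sdp-hetero:second-moment} together with $M\succeq0$ and \eqref{gmp-to-sdp-hetero:non-negative} pin the two simplex sums almost surely. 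The engine is the kernel lemma: if $M\succeq0$, $M(1,1)=1$, and $M(k,k)=M(1,k)^2$, then $e_k-M(1,k)e_1\in\ker M$, so $M(j,k)=M(1,j)M(1,k)$ for all $j$; applying it at every zero-variance coordinate would collapse $M$ to $ww^\top$ with the first-moment block of $w$ feasible for \eqref{eq:info-design-diagonal} at the SDP value.

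The step I expect to be the genuine obstacle is that sending the variances to zero simultaneously forces $\E[x^\state_1 y_1]\to\E[x^\state_1]\E[y_1]$, and since \eqref{gmp-to-sdp-hetero:obedience} carries this cross moment with coefficient $-(\alpha_{1,\state,1}+\alpha_{1,\state,2})$, the induced change in the obedience slack equals $\sum_\state(\alpha_{1,\state,1}+\alpha_{1,\state,2})\big[\mathrm{Var}(x^\state_1)+\mathrm{Cov}(x^\state_1,y_1)\big]\prior(\state)$, whose sign the variance terms alone do not control. This bilinear $x^\state_1 y_1$ coupling---present precisely because the non-participant flow $y$ is common to all states---is what defeats a naive ``replace $M$ by its means'' argument and is the crux of first-level exactness. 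I would resolve it by bringing in the Nash inequalities \eqref{gmp-to-sdp-hetero:nash}, which constrain the same quantity $\sum_\state(\alpha_{1,\state,1}+\alpha_{1,\state,2})\E[x^\state_1 y_1]\prior(\state)$ jointly with $\E[y_1^2]$, together with entrywise nonnegativity, to bound the adverse covariance; dually, this amounts to exhibiting a sum-of-squares certificate for the Lagrangian at the optimum. Verifying that these extra constraints exactly absorb the bilinear term is where the affine two-link structure is used in full, and where the detailed work would lie.
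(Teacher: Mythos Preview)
Your reduction \eqref{eq:info-design-joint}$\le$\eqref{eq:info-design-diagonal} by collapsing each $\signal_\state$ to a point mass at its mean is exactly the paper's argument: after the substitution $x_2=\pfrac-x_1$, the per-state cost integrand is convex in $x_1$, the obedience integrand is concave, and the Nash integrand is affine, so Jensen handles all three. That half is fine and matches the paper.

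The second half---exactness of the SDP \eqref{gmp-to-sdp-hetero:main}---is where the proposal has a genuine gap. You correctly isolate the obstruction: replacing $M^0$ by $\hat\eta\hat\eta^T$ shifts the obedience slack by $\sum_\state(\alpha_{1,\state,1}+\alpha_{1,\state,2})\big[(M^0_{p,p}-\hat\eta_p^2)+(M^0_{p,q}-\hat\eta_p\hat\eta_q)\big]\prior(\state)$, with $p$ indexing $x^\state_1$ and $q$ indexing $y_1$, and the covariance term has no a priori sign. But you then defer the resolution to ``detailed work'' invoking the Nash inequalities \eqref{gmp-to-sdp-hetero:nash} and entrywise nonnegativity, without actually carrying it out. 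Since this is exactly the step the proposition hinges on, the proposal is incomplete.

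The paper's route differs from your sketch and does \emph{not} use the Nash constraints to rescue obedience. It first uses the simplex moment identities \eqref{gmp-to-sdp-hetero:second-moment} to eliminate $M^0_{p,p+1}$ and $M^0_{p,q+1}$, so that \eqref{gmp-to-sdp-hetero:obedience} depends on second moments only through the single combination $M^0_{p,p}+M^0_{p,q}$, entering with the nonnegative coefficient $\alpha_{1,\state,1}+\alpha_{1,\state,2}$. It then works directly from $M\succeq 0$: the $3\times 3$ principal submatrix on rows $\{1,p,q\}$ is PSD, its Schur complement (the $2\times 2$ conditional-covariance block) is PSD, and from that the paper extracts $M^0_{p,p}+M^0_{p,q}\ge\hat\eta_p^2+\hat\eta_p\hat\eta_q$ termwise in $\state$. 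The cost comparison $\hat J(M)\ge\hat J(\hat M)$ is obtained separately by observing that the principal block $C^0$ of $C$ is itself PSD, so $C^0\cdot(M^0-\hat\eta\hat\eta^T)\ge 0$. Your kernel-lemma picture (force each variance to zero and let the row collapse) is a reasonable heuristic, but the paper's argument proceeds in one shot---compare $M$ with the rank-one $\hat M$ constraint by constraint---and the Nash block \eqref{gmp-to-sdp-hetero:nash} plays no role in verifying obedience feasibility of $\hat M$.
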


\begin{remark}
\label{rem:diagonal-atomic}
\begin{enumerate}
\item[(i)] For $\nlinks=2$ and $D=1$, Proposition~\ref{thm:diagonal-optimal} implies that an optimal signaling policy can be realized with $\nstates$ atoms, which is much less than the bound ${3 \choose 2} \nstates=3 \nstates$ given by Theorem~\ref{prop:natoms-upper-bound}. 
\item[(ii)] In spite of its apparent limited scope, Proposition~\ref{thm:diagonal-optimal} is the first to provide a complete characterization of solution to the information design problem even in the most basic setting. 
%\item[(ii)] For $\nlinks=2$ and $D=1$, if $M^* = 
%\begin{bmatrix}
%1 & {\hat{\mom}^*}^T \\
%\hat{\mom}^* & M^{0,*}
%\end{bmatrix}$ is an optimal solution to \eqref{gmp-to-sdp-hetero:main}, then $[x^{\state_1}_1, x^{\state_1}_2, \ldots, x^{\state_{\nstates}}_1, x^{\state_{\nstates}}_2, y_1, y_2]^T = \hat{\mom}^*$ is an optimal solution for \eqref{eq:info-design-diagonal}, and hence also for \eqref{eq:info-design-joint}. 
%The numerical implication of the proof of Proposition~\ref{thm:diagonal-optimal} are that \texttt{GloptiPoly} solves \eqref{eq:info-design-diagonal} with relaxation order two. This is utilized for simulations in Section~\ref{sec:simulations}.
%The proof (in Section~\ref{sec:one-atomic-proof}) implies that 
%Theorem~\ref{thm:one-atomic-hetero-v2} can be extended to any $C_{\state}$, $\state \in \allstates$, and any finite constraints of the form $\sum_{\state} A_{\state} \cdot M_{\state} \geq 0$, if, for all $\state$, the principal submatrices of $C_{\state}$ and $A_{\state}$ obtained by removing the first row and first column are, respectively, positive semidefinite and negative semidefinite.  
Indeed, Proposition~\ref{thm:diagonal-optimal} and its proof approach might appear to be generalization of an observation in \cite{Tavafoghi.Teneketzis:20}, which was made for $\pfrac=1$, and for limited affine link latency functions. Not only do we remove these restrictions, but more importantly, our proof implicitly highlights that the obedience constraint needs more careful treatment than suggested in \cite{Tavafoghi.Teneketzis:20}. 
Finally, one can follow the proof of Proposition~\ref{thm:diagonal-optimal} to show that \eqref{eq:info-design-joint} admits a diagonal atomic optimal signaling policy also for $\nlinks=2$ and $D=2$ if $\alpha_{2,\state,1}=\alpha_{2,\state,2} \geq 0$ for all $\state \in \allstates$.
\item[(iii)] It is informative to contrast the different approaches of Proposition~\ref{prop:gpm-sdp-exact} and Proposition~\ref{thm:diagonal-optimal} for establishing tightness of the natural semidefinite relaxation of the corresponding variants of the information design problem. Proposition~\ref{prop:natoms-upper-bound} simply relies on the ability to rewrite the problem in terms of univariate probability measures with compact support, whereas Proposition~\ref{thm:diagonal-optimal} relies on the tightness of the GPM obtained by relaxation because it has optimal probability measures supported on single atoms.
\end{enumerate}
\end{remark}

\subsection{Monotonicity of Optimal Cost Value under Diagonal Atomic Private Signaling Policies}
\label{sec:monotone}
Let $J^{\diag}(x,y)$ denote the cost function in \eqref{eq:info-design-diagonal:cost}, and let $J^{\diag,*}(\pfrac)$ denote the optimal value for a given $\pfrac$. 
%The next result establishes monotonicity of $J^{\diag,*}(\pfrac)$.

\begin{theorem}
\label{thm:cost-monotonicity}
%If $\nlinks=2$, then 
$J^{\diag,*}(\pfrac)$ is continuous and monotonically non-increasing with respect to $\pfrac \in [0,1]$.
\end{theorem}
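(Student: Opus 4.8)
The plan is to prove the two assertions—monotonicity (non-increasing) and continuity—separately, with the monotonicity construction also supplying half of the continuity argument.

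\emph{Monotonicity.} Fix $0\le\pfrac\le\pfrac'\le1$ and let $(x^\state,y)$, with $x^\state\in\simplex(\pfrac)$ and $y\in\simplex(1-\pfrac)$, be optimal for \eqref{eq:info-design-diagonal} at $\pfrac$. I would construct a feasible point at $\pfrac'$ with the \emph{same} cost by moving mass out of the non-participating flow into the participating flow while holding the aggregate flow fixed. Set $\lambda:=(1-\pfrac')/(1-\pfrac)\in[0,1]$ (take $\lambda:=0$ if $\pfrac=1$), and define $\tilde y_i:=\lambda y_i$ and $\tilde x^\state_i:=x^\state_i+(1-\lambda)y_i$. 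Then $\tilde y\in\simplex(1-\pfrac')$ and $\tilde x^\state\in\simplex(\pfrac')$ (the column sum is $\pfrac+(1-\lambda)(1-\pfrac)=\pfrac'$), and crucially $\tilde x^\state_i+\tilde y_i=x^\state_i+y_i$, so the cost \eqref{eq:info-design-diagonal:cost} is unchanged. It remains to check \eqref{eq:info-design-diagonal:obedience}–\eqref{eq:info-design-diagonal:nash}, where every latency is evaluated at the unchanged aggregate flow. Since $\tilde y_i=\lambda y_i$, the Nash inequality \eqref{eq:info-design-diagonal:nash} at $\pfrac'$ is just $\lambda$ times the one at $\pfrac$, hence preserved. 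For obedience, with $c_i:=(1-\lambda)y_i\ge0$, the right-minus-left gap at $\pfrac'$ equals the gap at $\pfrac$ (which is $\ge0$) plus $c_i\sum_\state(\congfunc_{\state,j}(\cdot)-\congfunc_{\state,i}(\cdot))\prior(\state)$; that sum is exactly the quantity \eqref{eq:info-design-diagonal:nash} forces to be $\ge0$ whenever $y_i>0$, and $c_i=0$ when $y_i=0$, so the extra term is nonnegative and obedience survives. Hence $J^{\diag,*}(\pfrac')\le J^{\diag,*}(\pfrac)$.

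\emph{Continuity, the easy half.} I would get lower semicontinuity of the value from compactness: all feasible sets lie in the fixed compact set $\prod_\state\simplex_\nlinks\times\simplex_\nlinks$, the objective and every constraint in \eqref{eq:info-design-diagonal} are jointly continuous in $(x,y,\pfrac)$ (Assumption~\ref{ass:continuous}), and the simplex sizes depend continuously on $\pfrac$; so along $\pfrac_n\to\pfrac$ any optimizers have a convergent subsequence whose limit is feasible at $\pfrac$, giving $J^{\diag,*}(\pfrac)\le\liminf_n J^{\diag,*}(\pfrac_n)$. Combined with monotonicity this already yields \emph{right}-continuity: for $\pfrac_n\downarrow\pfrac$, $J^{\diag,*}(\pfrac_n)\le J^{\diag,*}(\pfrac)\le\liminf_n J^{\diag,*}(\pfrac_n)$ forces equality. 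The Paragraph-1 construction is itself continuous in $\pfrac'$ (as $\pfrac'\downarrow\pfrac$, $\lambda\to1$ and $(\tilde x,\tilde y)\to(x,y)$), which is inner semicontinuity of the feasible correspondence from the right.

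\emph{Continuity, the hard half and main obstacle.} The delicate direction is left-continuity—upper semicontinuity of the value as $\pfrac$ increases to its limit—where from an optimizer at $\pfrac$ I must build feasible points at $\pfrac_n\uparrow\pfrac$ with cost tending to $J^{\diag,*}(\pfrac)$. The obstruction is structural: lowering participation pushes mass into the non-participating flow $y$, which is constrained to be state-independent, so the aggregate flow cannot in general be preserved (the slack $\sum_i\min_\state x^\state_i$ may vanish), and enlarging $y$ is precisely the direction threatening \eqref{eq:info-design-diagonal:obedience}. A matched scaling $\tilde x^\state=(\pfrac_n/\pfrac)x^\state$, $\tilde y=y+(\pfrac-\pfrac_n)\rho$ with $\rho\in\simplex_\nlinks$ converges to the optimizer and perturbs the cost by only $O(\pfrac-\pfrac_n)$, but it need not be exactly feasible, and a plain Slater repair is unavailable because both \eqref{eq:info-design-diagonal:nash} and \eqref{eq:info-design-diagonal:obedience} behave like equalities across the links that carry flow. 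The route I would take is to first eliminate \eqref{eq:info-design-diagonal:nash} via the potential characterization \eqref{eq:nash-potential}: under strictly increasing latencies the non-participating flow is the unique minimizer of a strictly convex program, hence a continuous single-valued map $y=Y(\{x^\state\},\pfrac)$, leaving only the obedience inequalities in $x$; I would then restore feasibility of the scaled point by an $O(\pfrac-\pfrac_n)$ correction re-equating the active obedience constraints, justified by an implicit-function/degree argument at the active set, and recover general monotone latencies by adding a small strictly increasing term to each $\congfunc_{\state,i}$ and passing to the limit using the semicontinuities already established. I expect this left inner-semicontinuity—reconciling the state-independent non-participating flow with the equality-like equilibrium and obedience constraints—to be the main obstacle, with the reparametrization through \eqref{eq:nash-potential} the key enabling device.
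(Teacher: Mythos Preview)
Your monotonicity argument is correct and is essentially the paper's construction, but organized more cleanly. The paper defines a one-parameter family $x^{\state}_i(\epsilon,\pfrac)=x^{*,\state}_i(\pfrac_1)+\epsilon_i(\pfrac-\pfrac_1)$, $y_i(\epsilon,\pfrac)=y^*_i(\pfrac_1)-\epsilon_i(\pfrac-\pfrac_1)$ with $\epsilon\in\simplex_{\nlinks}$, and then does a case split on the sign of $C_{ij}^*(\pfrac_1):=\sum_{\state}\prior(\state)(\congfunc_{\state,i}-\congfunc_{\state,j})$ to argue that $\epsilon_i$ must be taken to be zero on links with $y^*_i(\pfrac_1)=0$. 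Your choice $\tilde y=\lambda y$, $\tilde x^{\state}=x^{\state}+(1-\lambda)y$ corresponds exactly to $\epsilon=y^*(\pfrac_1)/(1-\pfrac_1)$, which the paper itself notes is an admissible choice; by fixing this $\epsilon$ from the start and using the Nash inequality \eqref{eq:info-design-diagonal:nash} directly to sign the extra obedience term, you bypass the case analysis entirely. There is also a structural difference: the paper first proves continuity (to obtain a continuous selection of optimizers and hence continuous $C_{ij}^*$, needed for the sign-based interval decomposition) and only then monotonicity, whereas your monotonicity proof is self-contained and works for any two fixed $\pfrac\le\pfrac'$ without any appeal to continuity.

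For continuity the two approaches diverge more substantially. The paper does \emph{not} split into right/left continuity or single out inner semicontinuity from below as an obstacle; it simply asserts that the feasible correspondence is continuous in $\pfrac$ because ``the only ones which depend on $\pfrac$ are the linear equalities and inequalities associated with the characterization of $\simplex(\pfrac)$ and $\simplex(1-\pfrac)$'', and then invokes \cite[Theorem~2.2.2]{Fiacco:83}, with a footnote conceding that ``we forego excessive formalism''. Your treatment---lower semicontinuity from compactness/closed graph, right-continuity from lsc plus monotonicity, and a proposed potential-reparametrization/implicit-function repair for the left limit---is more scrupulous, and the difficulty you flag (building feasible points at $\pfrac_n\uparrow\pfrac$ while the state-independent $y$ must grow) is genuine; the paper's one-line assertion does not address it explicitly. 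What the paper's route buys is brevity; what yours buys is a cleaner logical structure (monotonicity first, no need for continuous optimal selections) and an honest accounting of where the analytic work lies.
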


\begin{remark}
\begin{enumerate}
\item[(i)] Note that Theorem~\ref{thm:cost-monotonicity} does not require the link latency functions to be polynomial.
\item[(ii)] In light of Proposition~\ref{thm:diagonal-optimal}, Theorem~\ref{thm:cost-monotonicity} implies that, if $\nlinks=2$ and if the link latency functions are affine, then the optimal cost value under all, i.e., not necessarily (diagonal) atomic, private signaling policies is continuous and monotonically non-increasing in $\pfrac \in [0,1]$. However, this is not necessarily the case with public signaling policies, as we illustrate in Section~\ref{sec:simulations}.
\item[(iii)] The proof of Theorem~\ref{thm:cost-monotonicity}, in Appendix~\ref{sec:proof-non-increasing}, implies that for a (not necessarily optimal) atomic diagonal signaling policy $\signal^{\diag}$ for some $\pfrac_1 \in [0,1]$, one can construct a simple $\pfrac$-dependent atomic diagonal signaling policy with the same social cost as $\signal^{\diag}$ for all $\pfrac \in [\pfrac_1,1]$. In other words, one can construct a simple feedback (using $\pfrac$) atomic diagonal signaling policy around a \emph{nominal} $\signal^{\diag}$ under which the social cost does not increase due to higher than nominal fraction of participating agents for which $\signal^{\diag}$ is designed. 
This is to be contrasted with existing results according to which the cost of participating agents may increase with their increasing fraction under a fixed (open-loop) signaling policy, e.g., see \cite{Mahmassani.Jayakrishnan:91,Wu.Amin.ea:18}.
\end{enumerate}
\end{remark}

\revisionchange{
\section{Extension to Non-Parallel Networks and Computational Complexity}
\label{sec:extension}
%\input{computational-complexity-v3}

%\subsection{Extension to Non-Parallel Networks and Computational Complexity}
The computational framework for information design extends to non-parallel networks with a single origin-destination pair. A route in this case potentially consists of multiple links. Accordingly, the obedience condition in \eqref{obedience-hetero-v2:obed} becomes $\sum_{\state} \int_x \, \sum_{e \in i} \, \ell_{\state,e}(\tilde{x}_e + \tilde{y}_e) \, x_i \, \signal_{\state}(x) \de x \, \prior(\state)  \leq \sum_{\state} \int_x \, \sum_{e \in j} \, \ell_{\state,e}(\tilde{x}_e + \tilde{y}_e) \, x_i \, \signal_{\state}(x) \de x\, \prior(\state)$ for all $i, j \in [\nlinks]$, where $e \in i$ denotes all the links $e$ constituting route $i$, and $\tilde{x}_e$ and $\tilde{y}_e$ denote the flow on link $e$ induced by the participating and non-participating agents respectively. $\tilde{x}$ and $\tilde{y}$ are linear in $x$ and $y$ respectively. 
%\begin{equation*}
%\begin{split}
%\sum_{\state} \int_x \, \ell_{\state,i}(x_i + y_i) \, x_i \, \signal_{\state}(x) \de x \, \prior(\state) & \leq \sum_{\state} \int_x \, \ell_{\state,j}(x_j + y_j) \, x_i \, \signal_{\state}(x) \de x\, \prior(\state), \quad i, j \in [\npaths] \\
%\sum_{\state} \int_x\, \ell_{\state,i}(x_i + y_i) \, y_i \, \signal_{\state}(x) \de x \, \prior(\state) &
%\leq \sum_{\state} \int_x\, \ell_{\state,j}(x_j + y_j) \, y_i \, \signal_{\state}(x) \de x \, \prior(\state), \quad i, j \in [\npaths] 
%\end{split}
%\end{equation*}
Therefore, \eqref{obedience-hetero-v2:obed} is polynomial for polynomial link latency functions also for non-parallel networks. 
The same observation holds true for \eqref{obedience-hetero-v2:nash}, as well as for the cost function in \eqref{info-design:main}. This extends to the indirect policy setup as well. 
Indeed, such generalizations are used in the simulations for the wheatstone network in Section~\ref{sec:simulations}. In particular, for networks with $\nlinks$ routes and link latency functions of degree $D$, the polynomials involved in cost functions and constraints are of degree $D+1$ and have $\nlinks$ variables. Therefore, the lower bound on the number of atoms in Theorem~\ref{prop:natoms-upper-bound} holds as is for non-parallel networks (note that $\nlinks$ is the number of routes and not the number of links).

The optimal signaling policy is to be computed offline, and the messages or recommendations to be generated in real-time as the state changes are obtained through sampling from a given policy. Nevertheless, it is important to examine the complexity of computing optimal policy. 
The worst-case complexity of semidefinite program, under standard solution methods, scales no worse than the square root of problem size, e.g., see \cite{Vandenberghe.Boyd:96}. The sum of the sizes of all the variables in \eqref{info-design-gmp-hetero:main} is $\nlinks(\nmesgs+1)+\nstates \nmesgs$. The maximum degree of the monomials for writing the $r$-th semidefinite relaxation when the link latency functions are of degree $D$ is $\tilde{D}+r-1$, where $\tilde{D}=(D+1)/2$ if $D$ is odd and $=D/2+1$ if $D$ is even. Therefore, the worst case complexity for solving the $r$-th semidefinite relaxation of \eqref{info-design-atomic:main} grows no worse than $\left(\nlinks(\nmesgs+1) + \nstates \nmesgs\right)^{\frac{\tilde{D}+r-1}{2}}$. Semidefinite relaxations of small order have been found in practice to give reasonable lower bound for polynomial optimization problems~\cite{Lasserre:01}, and the practical performance of semidefinite solvers has been found to be much better than indicated by the worse-case complexity~\cite{Vandenberghe.Boyd:96}. %Section~\ref{sec:simulations} provides an illustration of the practical scaling of computation time with $\nlinks$.

}

\ifthenelse{\equal{\isarxiv}{1}}
{
\section{Simulations}
\label{sec:simulations}
We compare the minimum cost achievable under private signals, public signals, and full information over two parallel links for affine and BPR latency \ifthenelse{\equal{\numofcolumns}{1}}
{functions (Section~\ref{sec:parallel}) and over a Wheatstone network for affine and quadratic latency functions (Section~\ref{sec:wheatstone}).}{functions.} We also provide a practical scaling of runtime with network size for parallel networks (Section~\ref{sec:runtime}). 
The simulations were performed using a combination of \texttt{GloptiPoly} and the \texttt{MultiStart} function (with \texttt{fmincon} solver) in MATLAB \revisionchange{on a high performance computing facility}.\footnote{The simulation code is available at \url{https://github.com/YixianZhu2016/Information-Design-Simulations}.} In particular, the upper bound computed by \texttt{MultiStart} allows to certify optimality of the lower bound obtained from \texttt{GloptiPoly}, especially when the solution from \texttt{GloptiPoly} does not come with an explicit certificate of optimality. In all the instances in Sections~\ref{sec:parallel} and \ref{sec:wheatstone}, it was found sufficient to have  
100 starting points for \texttt{MultiStart} and relaxation order of 3 for \texttt{GloptiPoly}. We chose $2$ starting points in Section~\ref{sec:runtime}. 
The no information signal corresponds to $\pfrac=0$, when all the costs are expectedly equal. 

\subsection{Parallel Network}
\label{sec:parallel}

For both the scenarios in this case, the total demand is set to be $5$. 
\begin{figure}[htb!]
\begin{center}
\begin{minipage}[c]{.475\textwidth}
\begin{center}
\includegraphics[width=0.85\textwidth]{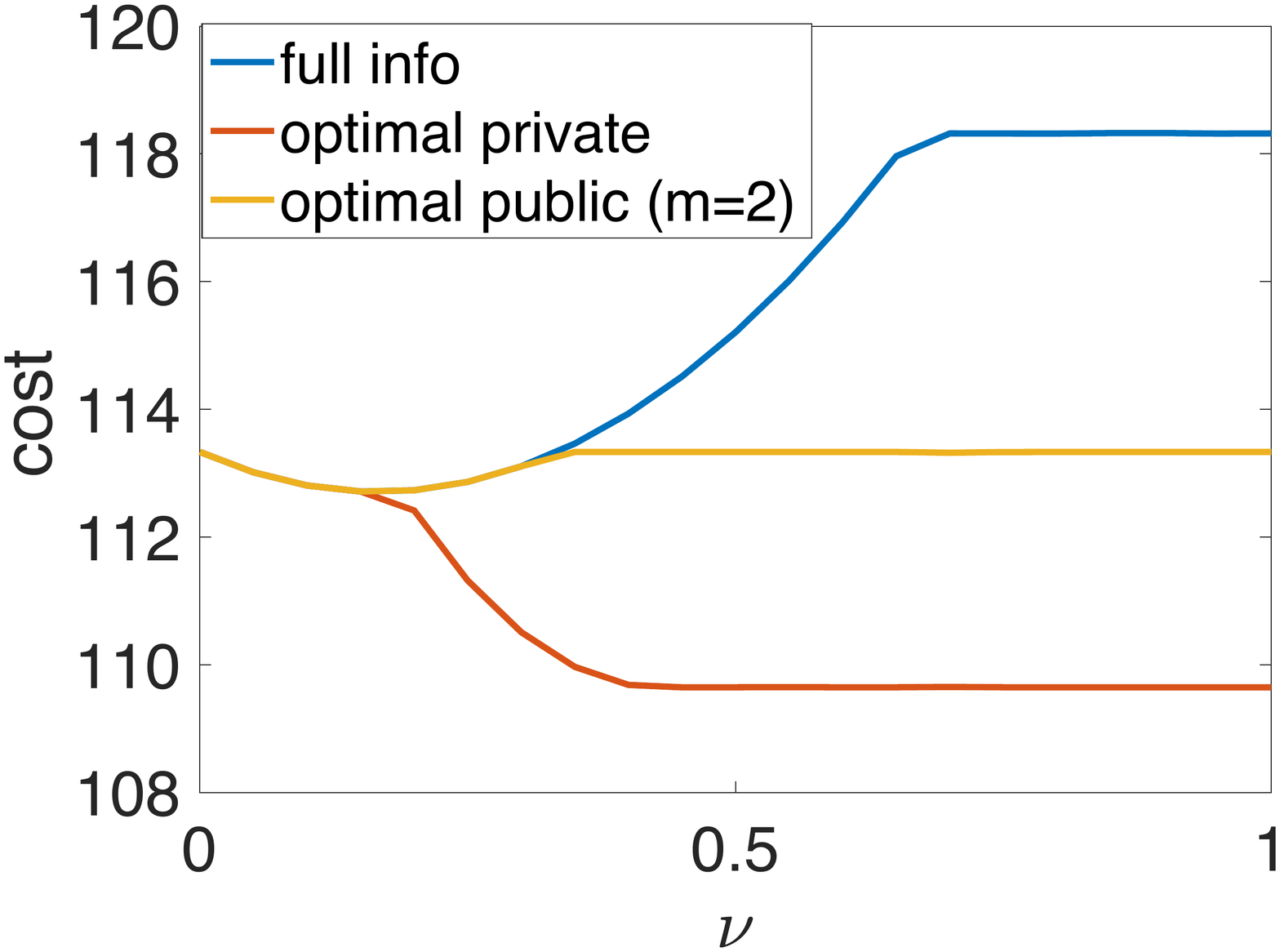} 
\\
(a)
\end{center}
\end{minipage}
\begin{minipage}[c]{.4\textwidth}
\begin{center}
\includegraphics[width=1.0\textwidth]{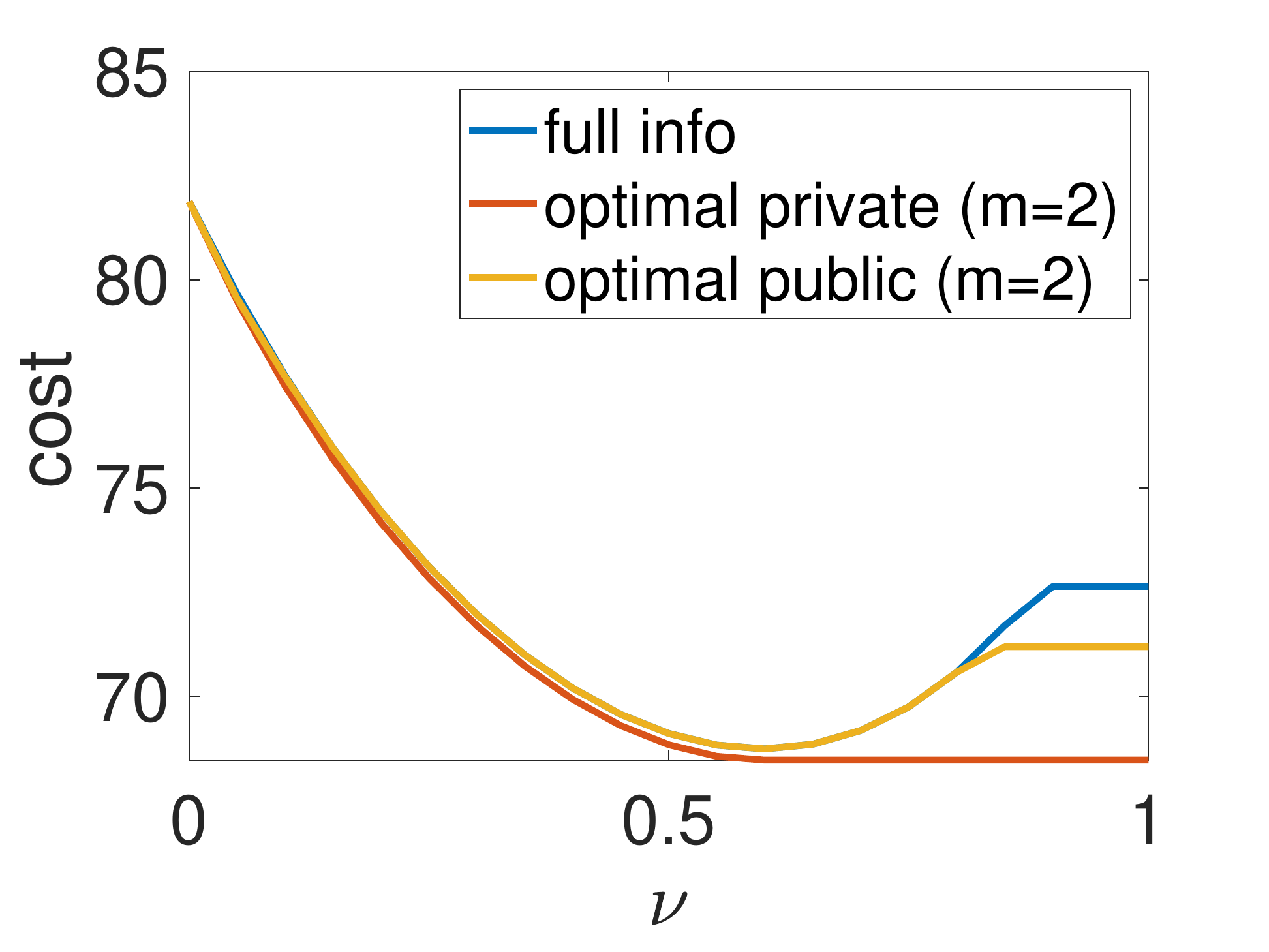}
\\
(b)
\end{center}
\end{minipage}
\end{center}
\caption{\sf Comparison of minimum cost achievable under private signals, public signals and full information over two parallel links, under different $\pfrac$ for (a) affine latency functions and (b) BPR latency functions.}
\label{fig:cost-comparison}
\end{figure}
%The optimal costs under private and public signals for every $\pfrac \in [0,1]$ were computed using \texttt{GloptiPoly}. 
%We present simulation results for 
%
%case studies for three settings: the one in Section~\ref{sec:synthetic} is synthetic, and the ones in Section~\ref{sec:parallel} and Section~\ref{sec:wheatstone} are taken from \cite{Das.Kamenica.ea:17}. In each case study, 

%\subsection{A Synthetic Parallel Network with Two Links}
%\label{sec:synthetic}
\subsubsection{Affine Latency Functions}
Figure~\ref{fig:cost-comparison}(a) provides comparison between social costs for the following parameters: 
 $\prior(\state_1)=0.6=1-\prior(\state_2)$,
 \ifthenelse{\equal{\numofcolumns}{1}}
{ 
%$\nlinks=2$, $\nstates=2$, 
\begin{equation*}
\alpha_0 = 
\kbordermatrix{&i=1&i=2\\
\state_1&5&25\\
\state_2&20&15}, \qquad 
\alpha_1 = 
\kbordermatrix{&i=1&i=2\\
\state_1&4&2\\
\state_2&1&2}
%, \qquad 
%\prior=
%\kbordermatrix{&\\
%\state_1&0.6\\
%\state_2&0.4}
\end{equation*}
}{
$\alpha_0 = 
\kbordermatrix{&i=1&i=2\\
\state_1&5&25\\
\state_2&20&15}$,  
$\alpha_1 = 
\kbordermatrix{&i=1&i=2\\
\state_1&4&2\\
\state_2&1&2}$. 
}
The minimum social cost, i.e., the social cost when the planner can mandate which route every (receiving as well as non-receiving) agent takes for every realization of $\state$,\footnote{This is also referred to as the \emph{first-best} strategy.} for these parameters is $83.33$.
Following Proposition~\ref{thm:diagonal-optimal}, optimal private signal is computed using \eqref{eq:info-design-diagonal}. The approximation to optimal social cost under public signals using \eqref{eq:info-design-pub} was found to be identical for $\nmesgs=2, 3, 4$.
%, and therefore these values are plotted under \emph{optimal} public signal in Figure~\ref{fig:cost-comparison}(a). 
Optimal public signals underlying Figure~\ref{fig:cost-comparison}(a) for $\pfrac = 0.25, 0.5, 0.75, 1$ are, respectively:
\begin{equation*}
\begin{split}
%\pfrac=0.25: & \quad 
x&=\kbordermatrix{&k=1&k=2\\
i=1&1.25&0\\
i=2&0&1.25}, \,
y=
\begin{bmatrix}
3.23 \\ 0.52
\end{bmatrix}
%\kbordermatrix{&\\
%i=1&3.23\\
%i=2&0.52}
, \,
\pubsignal=\kbordermatrix{&k=1&k=2\\
\state_1&1&0\\
\state_2&0&1}
\\
%\pfrac=0.5: & \quad 
x&=\kbordermatrix{&k=1&k=2\\
i=1&2.06&2.06\\
i=2&0.44&0.44}, \,
y=
\begin{bmatrix}
2.11 \\ 0.39
\end{bmatrix}
%\kbordermatrix{&\\
%i=1&2.38\\
%i=2&0.12}
, \,
\pubsignal=\kbordermatrix{&k=1&k=2\\
\state_1&1&0\\
\state_2&1&0}
\\
%\pfrac=0.75: & \quad 
x&=\kbordermatrix{&k=1&k=2\\
i=1&3.75&0\\
i=2&0&3.75}, \,
y=
\begin{bmatrix}
0.42 \\ 0.83
\end{bmatrix}
%\kbordermatrix{&\\
%i=1&1.25\\
%i=2&0}
, \,
\pubsignal=\kbordermatrix{&k=1&k=2\\
\state_1&1&0\\
\state_2&1&0}
\\
%\pfrac=1: & \quad 
x&=\kbordermatrix{&k=1&k=2\\
i=1&4.17&0.2\\
i=2&0.83&4.8}, \,
y=
\begin{bmatrix}
0 \\ 0
\end{bmatrix}
%\kbordermatrix{&\\
%i=1&0\\
%i=2&0}
, \, 
\pubsignal=\kbordermatrix{&k=1&k=2\\
\state_1&1&0\\
\state_2&1&0}
\end{split}
\end{equation*}

and optimal private signals for the same $\pfrac$ are:
\begin{equation*}
\begin{split}
%\pfrac=0.25: & \quad 
x&=\kbordermatrix{&k=1&k=2\\
i=1&0.32&0\\
i=2&0.93&1.25}, \,
y=
\begin{bmatrix}
3.75 \\ 0
\end{bmatrix}
%\kbordermatrix{&\\
%i=1&3.75\\
%i=2&0}
, \,
\signal=\kbordermatrix{&k=1&k=2\\
\state_1&1&0\\
\state_2&0&1}
\\
%\pfrac=0.5: & \quad 
x&=\kbordermatrix{&k=1&k=2\\
i=1&1.58 & 0.37\\
i=2&0.92 &2.13}, \,
y=
\begin{bmatrix}
2.5 \\ 0
\end{bmatrix}
%\kbordermatrix{&\\
%i=1&2.5\\
%i=2&0}
, \, 
\signal=\kbordermatrix{&k=1&k=2\\
\state_1&1&0\\
\state_2&0&1}
\\
%=0.75: & \quad 
x&=\kbordermatrix{&k=1&k=2\\
i=1&2.83&1.62\\
i=2&0.92&2.13}, \,
y=
\begin{bmatrix}
1.25 \\ 0
\end{bmatrix}
%\kbordermatrix{&\\
%i=1&1.25\\
%i=2&0}
, \, 
\signal=\kbordermatrix{&k=1&k=2\\
\state_1&1&0\\
\state_2&0&1}
\\
%\pfrac=1: & \quad 
x&=\kbordermatrix{&k=1&k=2\\
i=1&4.08&2.87\\
i=2&0.92&2.13}, \,
y=
\begin{bmatrix}
0 \\ 0
\end{bmatrix}
%\kbordermatrix{&\\
%i=1&0\\
%i=2&0}
, \, 
\signal=\kbordermatrix{&k=1&k=2\\
\state_1&1&0\\
\state_2&0&1}
\end{split}
\end{equation*}

While the cost in Figure~\ref{fig:cost-comparison}(a) shows non-monotonic behavior with respect to $\pfrac$ in the full information case as well as under optimal public signal, the optimal cost is monotonically non-decreasing under private signals. Expectedly, the optimal cost under public signal is no greater than the cost under full information, and the optimal cost under private signal is no greater than under public signal. Interestingly, in this case, full information is an optimal public signal for small values of $\pfrac$, and gives the same cost as an optimal private signal for even smaller values of $\pfrac$.

\subsubsection{BPR Latency Functions}
\label{sec:sims-bpr}
Figure~\ref{fig:cost-comparison}(b) provides comparison between social costs for the following parameters:  $\prior(\state_1)=0.6=1-\prior(\state_2)$, 
%$\nlinks=2$, $\nstates=2$, 
\begin{equation*}
\alpha_0 = 
\kbordermatrix{&i=1&i=2\\
\state_1&5&25\\
\state_2&20&15}, \qquad 
%\alpha_1 = \alpha_2 = \alpha_3 = \zerobf, \qquad
\alpha_4 = 
\kbordermatrix{&i=1&i=2\\
\state_1&0.047&0.025\\
\state_2&0.037&0.058}
%, \qquad 
%\prior=
%\kbordermatrix{&\\
%\state_1&0.6\\
%\state_2&0.4}
\end{equation*}
and $\alpha_1 = \alpha_2 = \alpha_3 = \zerobf$. 
These parameters correspond to free flow travel times and capacities being equal to $\alpha_0$ and $\kbordermatrix{&i=1&i=2\\
\state_1&2&3.5\\
\state_2&3&2.5}$ respectively. The minimum social cost for these parameters is $52.78$.

The approximation to optimal social cost under private signals using \eqref{info-design-atomic:main} was found to be identical for $\nmesgs=2,3,4$, suggesting that $\nmesgs=2$ atoms are possibly sufficient to realize optimal private signal in this case. This is much less than the upper bound of $2 {6 \choose 5}=12$ atoms given by Theorem~\ref{prop:natoms-upper-bound}.
Similarly, the approximation to optimal social cost under public signals using \eqref{eq:info-design-pub} was found to be identical for $\nmesgs=2,3,4$. 
%Therefore, values for $\nmesgs=2$ are plotted under optimal private and optimal public, respectively, in Figure~\ref{fig:cost-comparison}(b). 
Optimal public signals underlying Figure~\ref{fig:cost-comparison}(b) for $\pfrac=0.25, 0.5, 0.75, 1$ are, respectively:
\begin{equation*}
\begin{split}
%\pfrac=0.25: & \quad 
x&=\kbordermatrix{&k=1&k=2\\
i=1&1.25&0\\
i=2&0&1.25}, \,
y=
\begin{bmatrix}
3.75 \\ 0
\end{bmatrix}
%\kbordermatrix{&\\
%i=1&3.75\\
%i=2&0}
, \, 
\pubsignal=\kbordermatrix{&k=1&k=2\\
\state_1&1&0\\
\state_2&0&1}
\\
%\pfrac=0.5: & \quad 
x&=\kbordermatrix{&k=1&k=2\\
i=1&2.5&0\\
i=2&0&2.5}, \,
y=
\begin{bmatrix}
2.5 \\ 0
\end{bmatrix}
%\kbordermatrix{&\\
%i=1&2.5\\
%i=2&0}
, \,
\pubsignal=\kbordermatrix{&k=1&k=2\\
\state_1&1&0\\
\state_2&0&1}
\\
%=0.75: & \quad 
x&=\kbordermatrix{&k=1&k=2\\
i=1&3.75&0\\
i=2&0&3.75}, \,
y=
\begin{bmatrix}
1.25 \\ 0
\end{bmatrix}
%\kbordermatrix{&\\
%i=1&1.25\\
%i=2&0}
, \, 
\pubsignal=\kbordermatrix{&k=1&k=2\\
\state_1&1&0\\
\state_2&0&1}
\\
%\pfrac=1: & \quad 
x&=\kbordermatrix{&k=1&k=2\\
i=1&5.0&2.08\\
i=2&0.0&2.92}, \,
y=
\begin{bmatrix}
0 \\ 0
\end{bmatrix}
%\kbordermatrix{&\\
%i=1&0\\
%i=2&0}
, \,
\pubsignal=\kbordermatrix{&k=1&k=2\\
\state_1&0.87&0\\
\state_2&0.13&1}
\end{split}
\end{equation*}

and optimal private signals for the same $\pfrac$ are:
\begin{equation*}
\begin{split}
%\pfrac=0.25: & \quad 
x&=\kbordermatrix{&k=1&k=2\\
i=1&0.99&0\\
i=2&0.26&1.25}, \,
y=
\begin{bmatrix}
3.75 \\ 0
\end{bmatrix}
%\kbordermatrix{&\\
%i=1&3.75\\
%i=2&0}
, \, 
\signal=\kbordermatrix{&k=1&k=2\\
\state_1&1&0\\
\state_2&0&1}
\\
%\pfrac=0.5: & \quad 
x&=\kbordermatrix{&k=1&k=2\\
i=1&2.24&0.0\\
i=2&0.26&2.5}, \,
y=
\begin{bmatrix}
2.5 \\ 0
\end{bmatrix}
%\kbordermatrix{&\\
%i=1&2.5\\
%i=2&0}
, \, 
\signal=\kbordermatrix{&k=1&k=2\\
\state_1&1&0\\
\state_2&0&1}
\\
%\pfrac=0.75: & \quad 
x&=\kbordermatrix{&k=1&k=2\\
i=1&3.49&0.76\\
i=2&0.26&2.99}, \,
y=
\begin{bmatrix}
1.25 \\ 0
\end{bmatrix}
%\kbordermatrix{&\\
%i=1&1.25\\
%i=2&0}
, \, 
\signal=\kbordermatrix{&k=1&k=2\\
\state_1&1&0\\
\state_2&0&1}
\\
%\pfrac=1: & \quad 
x&=\kbordermatrix{&k=1&k=2\\
i=1&4.74&2.01\\
i=2&0.26&2.99}, \,
y=
\begin{bmatrix}
0 \\ 0
\end{bmatrix}
%\kbordermatrix{&\\
%i=1&0\\
%i=2&0}
, \, 
\signal=\kbordermatrix{&k=1&k=2\\
\state_1&1&0\\
\state_2&0&1}
\end{split}
\end{equation*}
The social cost profile in Figure~\ref{fig:cost-comparison}(b) shows similar qualitative dependence on $\pfrac$ as in Figure~\ref{fig:cost-comparison}(a). Since diagonal atomic private signals are observed to be optimal (based on the sample values reported above), monotonicity of the corresponding cost is consistent with Theorem~\ref{thm:cost-monotonicity}.

%\kscomment{While the cost in Figure~\ref{fig:cost-comparison}(b) is non-monotonic under the full information signal, interestingly, it is identical and non-increasing under both optimal private and optimal public signals. Since the optimal private signal is  observed to be diagonal (based on the sample values reported above), monotonicity of the corresponding cost is consistent with Theorem~\ref{thm:cost-monotonicity}.
%However, monotonicity of the cost under optimal public signals is in contrast to the scenario in Figure~\ref{fig:cost-comparison}(a).}

%\input{simulations-parallel-kamenica}
\subsection{Wheatstone Network}
\label{sec:wheatstone}

\ifthenelse{\equal{\numofcolumns}{1}}
{
\begin{figure}[htb!]
\begin{center}
\begin{minipage}[c]{.3\textwidth}
\begin{center}
\includestandalone[width=0.9\linewidth]{./fig/wheatstone} 
\end{center}
\end{minipage}
\begin{minipage}[c]{.3\textwidth}
\begin{center}
\includegraphics[width=1\linewidth]{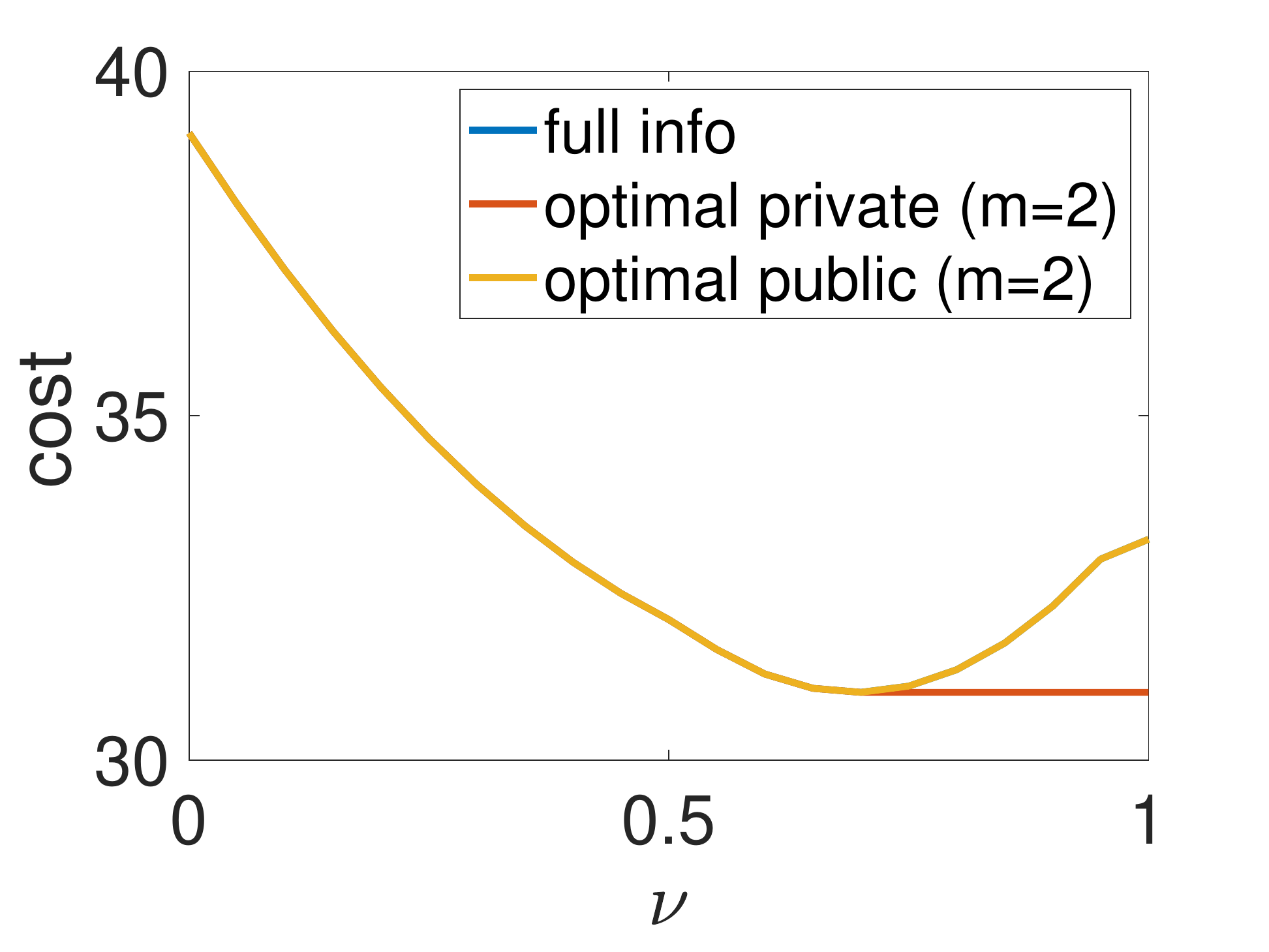} 
\end{center}
\end{minipage}
\begin{minipage}[c]{.3\textwidth}
\begin{center}
\includegraphics[width=0.9\linewidth]{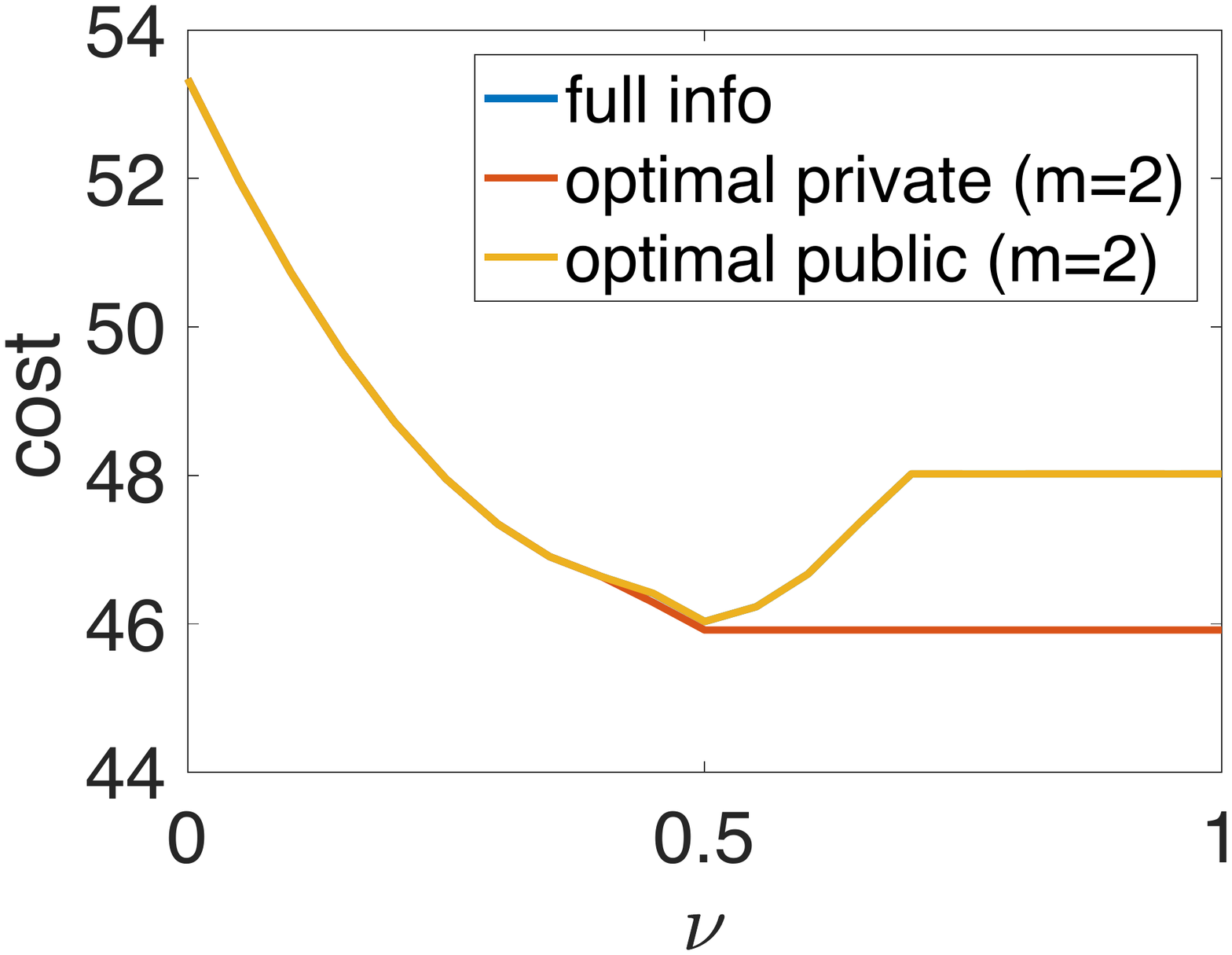} 
\end{center}
\end{minipage}
%\begin{minipage}[c]{.3\textwidth}
%\begin{center}
%\includegraphics[width=0.9\linewidth]{wheatstone-kamenica-etal-degree4} 
%\end{center}
%\end{minipage}
\begin{minipage}[c]{.3\textwidth}
\begin{center}
(a)
\end{center}
\end{minipage}
\begin{minipage}[c]{.3\textwidth}
\begin{center}
(b)
\end{center}
\end{minipage}
\begin{minipage}[c]{.3\textwidth}
\begin{center}
(c)
\end{center}
\end{minipage}
%\begin{minipage}[c]{.3\textwidth}
%\begin{center}
%(c)
%\end{center}
%\end{minipage}
\end{center}
\caption{\sf (a) Wheatstone network; (b) comparison of social costs under affine link latency functions; (c) comparison of social costs under quadratic link latency functions.}
\label{fig:wheatstone}
\end{figure}
}{
\begin{figure}[htb!]
\begin{center}
\begin{minipage}[c]{.2\textwidth}
\begin{center}
\includestandalone[width=0.9\linewidth]{./fig/wheatstone} 
\\
(a)
\end{center}
\end{minipage}
\begin{minipage}[c]{.275\textwidth}
\begin{center}
\includegraphics[width=0.9\linewidth]{./fig/wheatstone-affine} \\
(b)
\end{center}
\end{minipage}
\begin{minipage}[c]{.275\textwidth}
\begin{center}
\includegraphics[width=0.9\linewidth]{./fig/wheatstone_quadratic} \\
(c)
\end{center}
\end{minipage}
%\begin{minipage}[c]{.3\textwidth}
%\begin{center}
%\includegraphics[width=0.9\linewidth]{wheatstone-kamenica-etal-degree4} 
%\end{center}
%\end{minipage}
%\begin{minipage}[c]{.4\textwidth}
%\begin{center}
%(a)
%\end{center}
%\end{minipage}
%\begin{minipage}[c]{.4\textwidth}
%\begin{center}
%(b)
%\end{center}
%\end{minipage}
%\begin{minipage}[c]{.3\textwidth}
%\begin{center}
%(c)
%\end{center}
%\end{minipage}
\end{center}
\caption{\sf (a) Wheatstone network; (b) comparison of social costs under affine link latency functions; (c) comparison of social costs under quadratic link latency functions.}
\label{fig:wheatstone}
\end{figure}
}

\subsubsection{Affine Latency Functions}
Consider the Wheatstone network shown in Figure~\ref{fig:wheatstone}(a), where a demand of 2.5 needs to be routed from $o$ to $d$. Figure~\ref{fig:wheatstone}(b) shows comparison between the costs for the following simulation parameters: $\prior(\state_1)=0.5=1-\prior(\state_2)$, 
%which are taken from \cite{Das.Kamenica.ea:17}:
\begin{equation*}
\begin{split}
%P & = \kbordermatrix{&i=1&i=2 & i=3 & i=4 & i=5\\
%\text{path 1}&1 & 1 & 0 & 0 & 0\\
%\text{path 2}&0 & 0 & 0 & 1 & 1 \\
%\text{path 3}&1 & 0 & 0 & 1 & 1}, 
%\quad 
%\prior=
%\kbordermatrix{&\\
%\state_1&0.5\\
%\state_2&0.5}
%\\
\alpha_0 & =\kbordermatrix{&i=1&i=2 & i=3 & i=4 & i=5\\
\state_1&1 & 15 & 24 & 1 & 2\\
\state_2&1 & 0.5 & 4 & 1 & 20},
\\
\alpha_1 & =\kbordermatrix{&i=1&i=2 & i=3 & i=4 & i=5\\
\state_1&3 & 1 & 1.5 & 0.5 & 5\\
\state_2&3 & 0.5 & 1.5 & 0.5 & 5}
%\quad 
%\prior=
%\kbordermatrix{&\\
%\state_1&0.5\\
%\state_2&0.5}
\end{split}
\end{equation*}

Consider paths 1, 2 and 3 consisting of $i=\{1,2\}$, $i=\{3,4\}$ and $i=\{1,5,4\}$ respectively. 
The minimum social cost for these simulation parameters is $19.67$. 
%The approximation to optimal social cost under private signals using \eqref{info-design-atomic:main} was found to be identical for $\nmesgs=2,3,4$, suggesting that $\nmesgs=2$ atoms are possibly sufficient to realize optimal private signal in this case. 
%%This is much less than the upper bound of $2 {6 \choose 5}=12$ atoms given by Theorem~\ref{prop:natoms-upper-bound}.
%Similarly, the approximation to optimal social cost under public signals using \eqref{eq:info-design-pub} was found to be identical for $\nmesgs=2,3,4$. 
The optimal social cost under public and private signals for $\nmesgs=2$ atoms are plotted in Figure~\ref{fig:wheatstone}(b). Optimal public signals for $\pfrac=0.25, 0.5, 0.75, 1$ are, respectively:
\begin{equation*}
\begin{split}
%\pfrac=0.25: & \quad 
x&=\kbordermatrix{&k=1&k=2\\
\text{path 1}&0&0.625\\
\text{path 2}&0.625&0\\
\text{path 3}&0&0
}, \,
y=
\begin{bmatrix}
1.53 \\ 0.34 \\ 0
\end{bmatrix}
%\kbordermatrix{&\\
%\text{path 1}&1.53\\
%\text{path 2}&0.34\\
%\text{path 3}& 0
%}
, \,
\pubsignal=\kbordermatrix{&k=1&k=2\\
\state_1&1&0\\
\state_2&0&1}
\\
%\pfrac=0.5: & \quad 
x&=\kbordermatrix{&k=1&k=2\\
\text{path 1}&0&1.25\\
\text{path 2}&1.25&0\\
\text{path 3}&0&0
}, \,
y=
\begin{bmatrix}
1.23 \\ 0.02 \\ 0
\end{bmatrix}
%\kbordermatrix{&\\
%\text{path 1}&1.23\\
%\text{path 2}&0.02\\
%\text{path 3}&0}
, \,
\pubsignal=\kbordermatrix{&k=1&k=2\\
\state_1&1&0\\
\state_2&0&1}
\\
%\pfrac=0.75: & \quad 
x&=\kbordermatrix{&k=1&k=2\\
\text{path 1}&0&1.875\\
\text{path 2}&1.875&0\\
\text{path 3}&0&0
}, \,
y=
\begin{bmatrix}
0.625 \\ 0 \\ 0
\end{bmatrix}
%\kbordermatrix{&\\
%\text{path 1}&0.625\\
%\text{path 2}&0\\
%\text{path 3}&0}
, \,
\pubsignal=\kbordermatrix{&k=1&k=2\\
\state_1&1&0\\
\state_2&0&1}
\\
%\pfrac=1: & \quad 
x&=\kbordermatrix{&k=1&k=2\\
\text{path 1}&0.08&2.5\\
\text{path 2}&2.42&0\\
\text{path 3}&0&0
}, \,
y=\begin{bmatrix}
0 \\ 0 \\ 0
\end{bmatrix}
%%\kbordermatrix{&\\
%\text{path 1}&0\\
%\text{path 2}&0\\
%\text{path 3}&0}
, \, 
\pubsignal=
\kbordermatrix{&k=1&k=2\\
\state_1&1&0\\
\state_2&0&1}
\end{split}
\end{equation*}

and a set of optimal private signals for the same $\pfrac$ are:
\begin{equation*}
\begin{split}
%\pfrac=0.25: & \quad 
x&=\kbordermatrix{&k=1&k=2\\
\text{path 1}&0.02&0.61\\
\text{path 2}&0.61&0.02\\
\text{path 3}&0&0
}, \,
y=
\begin{bmatrix}
1.53 \\ 0.34 \\ 0
\end{bmatrix}
%\kbordermatrix{&\\
%\text{path 1}&1.53\\
%\text{path 2}&0.34\\
%\text{path 3}& 0
%}
, \, 
\signal=\kbordermatrix{&k=1&k=2\\
\state_1&1&0\\
\state_2&0&1}
\\
%\pfrac=0.5: & \quad 
x&=\kbordermatrix{&k=1&k=2\\
\text{path 1}&0&1.25\\
\text{path 2}&1.25&0\\
\text{path 3}&0&0
}, \,
y=
\begin{bmatrix}
1.25 \\ 0 \\ 0
\end{bmatrix}
%\kbordermatrix{&\\
%\text{path 1}&1.25\\
%\text{path 2}&0\\
%\text{path 3}&0}
, \, 
\signal=\kbordermatrix{&k=1&k=2\\
\state_1&1&0\\
\state_2&0&1}
\\
%\pfrac=0.75: & \quad 
x&=\kbordermatrix{&k=1&k=2\\
\text{path 1}&0.14&1.87\\
\text{path 2}&1.73&0\\
\text{path 3}&0&0
}, \,
y=
\begin{bmatrix}
0.63 \\ 0 \\ 0
\end{bmatrix}
%\kbordermatrix{&\\
%\text{path 1}&0.63\\
%\text{path 2}&0\\
%\text{path 3}&0}
, \, 
\signal=\kbordermatrix{&k=1&k=2\\
\state_1&1&0\\
\state_2&0&1}
\\
%\pfrac=1: & \quad 
x&=\kbordermatrix{&k=1&k=2\\
\text{path 1}&0.76&2.5\\
\text{path 2}&1.74&0\\
\text{path 3}&0&0
}, \,
y=
\begin{bmatrix}
0 \\ 0 \\ 0
\end{bmatrix}
%\kbordermatrix{&\\
%\text{path 1}&0\\
%\text{path 2}&0\\
%\text{path 3}&0}
, \,
\signal=\kbordermatrix{&k=1&k=2\\
\state_1&1&0\\
\state_2&0&1}
\end{split}
\end{equation*}

The social cost profile in Figure~\ref{fig:wheatstone}(b) shows similar qualitative dependence on $\pfrac$ as in Figure~\ref{fig:cost-comparison}(a), with the exception that the full information signal is an optimal public signal for all $\pfrac \in [0,1]$ in this case.

%Figure~\ref{fig:wheatstone}(c) compares various costs for the following parameters:
%\kscomment{
%\begin{equation*}
%\begin{split}
%P & = \kbordermatrix{&i=1&i=2 & i=3 & i=4 & i=5\\
%\text{path 1}&1 & 1 & 0 & 0 & 0\\
%\text{path 2}&0 & 0 & 0 & 1 & 1 \\
%\text{path 3}&1 & 0 & 0 & 1 & 1}, 
%\quad 
%\prior=
%\kbordermatrix{&\\
%\state_1&0.5\\
%\state_2&0.5}, \quad 
%\alpha(0) =\kbordermatrix{&i=1&i=2 & i=3 & i=4 & i=5\\
%\state_1&0 & 1 & 1 & 0 & 0\\
%\state_2&0 & 1 & 1 & 0 & 0},
%\\
%\alpha(1) & = \zerobf, \quad
%\alpha(2) = \zerobf, \quad
%\alpha(3) = \zerobf, \quad
%\alpha(4) =\kbordermatrix{&i=1&i=2 & i=3 & i=4 & i=5\\
%\state_1&1 & 0 & 0 & 1 & 0\\
%\state_2&1.75 & 0 & 0 & 1.75 & 0}
%\end{split}
%\end{equation*}
%}
%
%These parameters correspond to BPR link latency functions whose free flow travel times and capacities are \ksmargin{BPR can not have only one zero coefficient }
%\kscomment{
%$\kbordermatrix{&i=1&i=2 & i=3 & i=4 & i=5\\
%\state_1&1 & 0 & 0 & 1 & 0\\
%\state_2&1.75 & 0 & 0 & 1.75 & 0}$ and $\kbordermatrix{&i=1&i=2 & i=3 & i=4 & i=5\\
%\state_1&1 & 0 & 0 & 1 & 0\\
%\state_2&1.75 & 0 & 0 & 1.75 & 0}$ respectively.} 

\subsubsection{Quadratic Latency Functions}
Consider the same Wheatstone network setup as before, except for quadratic link latency functions with the following coefficients:
%Wheatstone network shown in Figure~\ref{fig:wheatstone}(a), where a unit demand needs to be routed from $o$ to $d$. Figure~\ref{fig:wheatstone}(b) shows comparison between the costs for the following simulation parameters: $\prior(\state_1)=0.5=1-\prior(\state_2)$, 
%which are taken from \cite{Das.Kamenica.ea:17}:
\begin{equation*}
\begin{split}
%P & = \kbordermatrix{&i=1&i=2 & i=3 & i=4 & i=5\\
%\text{path 1}&1 & 1 & 0 & 0 & 0\\
%\text{path 2}&0 & 0 & 0 & 1 & 1 \\
%\text{path 3}&1 & 0 & 0 & 1 & 1}, 
%\quad 
%\prior=
%\kbordermatrix{&\\
%\state_1&0.5\\
%\state_2&0.5}
%\\
\alpha_0 & =\kbordermatrix{&i=1&i=2 & i=3 & i=4 & i=5\\
\state_1&1 & 15 & 24 & 1 & 2\\
\state_2&1 & 0.5 & 4 & 1 & 20}, \quad 
\alpha_1 =\kbordermatrix{&i=1&i=2 & i=3 & i=4 & i=5\\
\state_1&2 & 3 & 5 & 4 & 2\\
\state_2&4 & 3 & 1 & 4 & 3},
\\
\alpha_2 & =\kbordermatrix{&i=1&i=2 & i=3 & i=4 & i=5\\
\state_1&0.4314 & 0.1818 & 0.1455 & 0.8693 & 0.5499\\
\state_2&0.9106 & 0.2638 & 0.1361 & 0.5797 & 0.1450}
%\quad 
%\prior=
%\kbordermatrix{&\\
%\state_1&0.5\\
%\state_2&0.5}
\end{split}
\end{equation*}

%Consider paths 1, 2 and 3 consisting of $i=\{1,2\}$, $i=\{3,4\}$ and $i=\{1,4,5\}$ respectively. The total demand is set to be 2.5.
The minimum social cost in this case is found to be $29.40$. The optimal social cost under public and private signals for $\nmesgs=2$ atoms are plotted in Figure~\ref{fig:wheatstone}(c). Optimal public signals for $\pfrac=0.25, 0.5, 0.75, 1$ are, respectively:
\begin{equation*}
\begin{split}
%\pfrac=0.25: & \quad 
x&=\kbordermatrix{&k=1&k=2\\
\text{path 1}&0&0\\
\text{path 2}&0&0.625\\
\text{path 3}&0.625&0
}, \,
y=
\begin{bmatrix}
1.521 \\ 0.354 \\ 0
\end{bmatrix}
%\kbordermatrix{&\\
%\text{path 1}&1.53\\
%\text{path 2}&0.34\\
%\text{path 3}& 0
%}
, \,
\pubsignal=\kbordermatrix{&k=1&k=2\\
\state_1&1&0\\
\state_2&0&1}
\\
%\pfrac=0.5: & \quad 
x&=\kbordermatrix{&k=1&k=2\\
\text{path 1}&0&0.017\\
\text{path 2}&0&1.233\\
\text{path 3}&1.25&0
}, \,
y=
\begin{bmatrix}
1.25 \\ 0 \\ 0
\end{bmatrix}
%\kbordermatrix{&\\
%\text{path 1}&1.23\\
%\text{path 2}&0.02\\
%\text{path 3}&0}
, \,
\pubsignal=\kbordermatrix{&k=1&k=2\\
\state_1&1&0\\
\state_2&0&1}
\\
%\pfrac=0.75: & \quad 
x&=\kbordermatrix{&k=1&k=2\\
\text{path 1}&0.160&0.642\\
\text{path 2}&0&1.233\\
\text{path 3}&1.715&0
}, \,
y=
\begin{bmatrix}
0.625 \\ 0 \\ 0
\end{bmatrix}
%\kbordermatrix{&\\
%\text{path 1}&0.625\\
%\text{path 2}&0\\
%\text{path 3}&0}
, \,
\pubsignal=\kbordermatrix{&k=1&k=2\\
\state_1&1&0\\
\state_2&0&1}
\\
%\pfrac=1: & \quad 
x&=\kbordermatrix{&k=1&k=2\\
\text{path 1}&0.785&1.267\\
\text{path 2}&0&1.233\\
\text{path 3}&1.715&0
}, \,
y=\begin{bmatrix}
0 \\ 0 \\ 0
\end{bmatrix}
%%\kbordermatrix{&\\
%\text{path 1}&0\\
%\text{path 2}&0\\
%\text{path 3}&0}
, \, 
\pubsignal=
\kbordermatrix{&k=1&k=2\\
\state_1&1&0\\
\state_2&0&1}
\end{split}
\end{equation*}

and a set of optimal private signals for the same $\pfrac$ are:
\begin{equation*}
\begin{split}
%\pfrac=0.25: & \quad 
x&=\kbordermatrix{&k=1&k=2\\
\text{path 1}&0&0\\
\text{path 2}&0&0.625\\
\text{path 3}&0.625&0
}, \,
y=
\begin{bmatrix}
1.521 \\ 0.354 \\ 0
\end{bmatrix}
%\kbordermatrix{&\\
%\text{path 1}&1.53\\
%\text{path 2}&0.34\\
%\text{path 3}& 0
%}
, \, 
\signal=\kbordermatrix{&k=1&k=2\\
\state_1&1&0\\
\state_2&0&1}
\\
%\pfrac=0.5: & \quad 
x&=\kbordermatrix{&k=1&k=2\\
\text{path 1}&0.025&0.04\\
\text{path 2}&0.108&1.21\\
\text{path 3}&1.117&0
}, \,
y=
\begin{bmatrix}
1.25 \\ 0 \\ 0
\end{bmatrix}
%\kbordermatrix{&\\
%\text{path 1}&1.25\\
%\text{path 2}&0\\
%\text{path 3}&0}
, \, 
\signal=\kbordermatrix{&k=1&k=2\\
\state_1&1&0\\
\state_2&0&1}
\\
%\pfrac=0.75: & \quad 
x&=\kbordermatrix{&k=1&k=2\\
\text{path 1}&0.653&0.664\\
\text{path 2}&0.104&1.211\\
\text{path 3}&1.118&0
}, \,
y=
\begin{bmatrix}
0.625 \\ 0 \\ 0
\end{bmatrix}
%\kbordermatrix{&\\
%\text{path 1}&0.63\\
%\text{path 2}&0\\
%\text{path 3}&0}
, \, 
\signal=\kbordermatrix{&k=1&k=2\\
\state_1&1&0\\
\state_2&0&1}
\\
%\pfrac=1: & \quad 
x&=\kbordermatrix{&k=1&k=2\\
\text{path 1}&1.277&1.290\\
\text{path 2}&0.108&1.210\\
\text{path 3}&1.115&0
}, \,
y=
\begin{bmatrix}
0 \\ 0 \\ 0
\end{bmatrix}
%\kbordermatrix{&\\
%\text{path 1}&0\\
%\text{path 2}&0\\
%\text{path 3}&0}
, \,
\signal=\kbordermatrix{&k=1&k=2\\
\state_1&1&0\\
\state_2&0&1}
\end{split}
\end{equation*}

The social cost profile in Figure~\ref{fig:wheatstone}(c) shows similar qualitative dependence on $\pfrac$ as in Figure~\ref{fig:cost-comparison}(a), with the exception that the full information signal is an optimal public signal for all $\pfrac \in [0,1]$ in this case.

%Figure~\ref{fig:wheatstone}(c) compares various costs for the following parameters:
%\kscomment{
%\begin{equation*}
%\begin{split}
%P & = \kbordermatrix{&i=1&i=2 & i=3 & i=4 & i=5\\
%\text{path 1}&1 & 1 & 0 & 0 & 0\\
%\text{path 2}&0 & 0 & 0 & 1 & 1 \\
%\text{path 3}&1 & 0 & 0 & 1 & 1}, 
%\quad 
%\prior=
%\kbordermatrix{&\\
%\state_1&0.5\\
%\state_2&0.5}, \quad 
%\alpha(0) =\kbordermatrix{&i=1&i=2 & i=3 & i=4 & i=5\\
%\state_1&0 & 1 & 1 & 0 & 0\\
%\state_2&0 & 1 & 1 & 0 & 0},
%\\
%\alpha(1) & = \zerobf, \quad
%\alpha(2) = \zerobf, \quad
%\alpha(3) = \zerobf, \quad
%\alpha(4) =\kbordermatrix{&i=1&i=2 & i=3 & i=4 & i=5\\
%\state_1&1 & 0 & 0 & 1 & 0\\
%\state_2&1.75 & 0 & 0 & 1.75 & 0}
%\end{split}
%\end{equation*}
%}
%
%These parameters correspond to BPR link latency functions whose free flow travel times and capacities are \ksmargin{BPR can not have only one zero coefficient }
%\kscomment{
%$\kbordermatrix{&i=1&i=2 & i=3 & i=4 & i=5\\
%\state_1&1 & 0 & 0 & 1 & 0\\
%\state_2&1.75 & 0 & 0 & 1.75 & 0}$ and $\kbordermatrix{&i=1&i=2 & i=3 & i=4 & i=5\\
%\state_1&1 & 0 & 0 & 1 & 0\\
%\state_2&1.75 & 0 & 0 & 1.75 & 0}$ respectively.} 

\subsection{Scaling of Runtime with Network Size}
\label{sec:runtime}

\ifthenelse{\equal{\numofcolumns}{1}}
{
\begin{figure}[htb!]
\begin{center}
\begin{minipage}[c]{.4\textwidth}
\begin{center}
\includegraphics[width=0.9\linewidth]{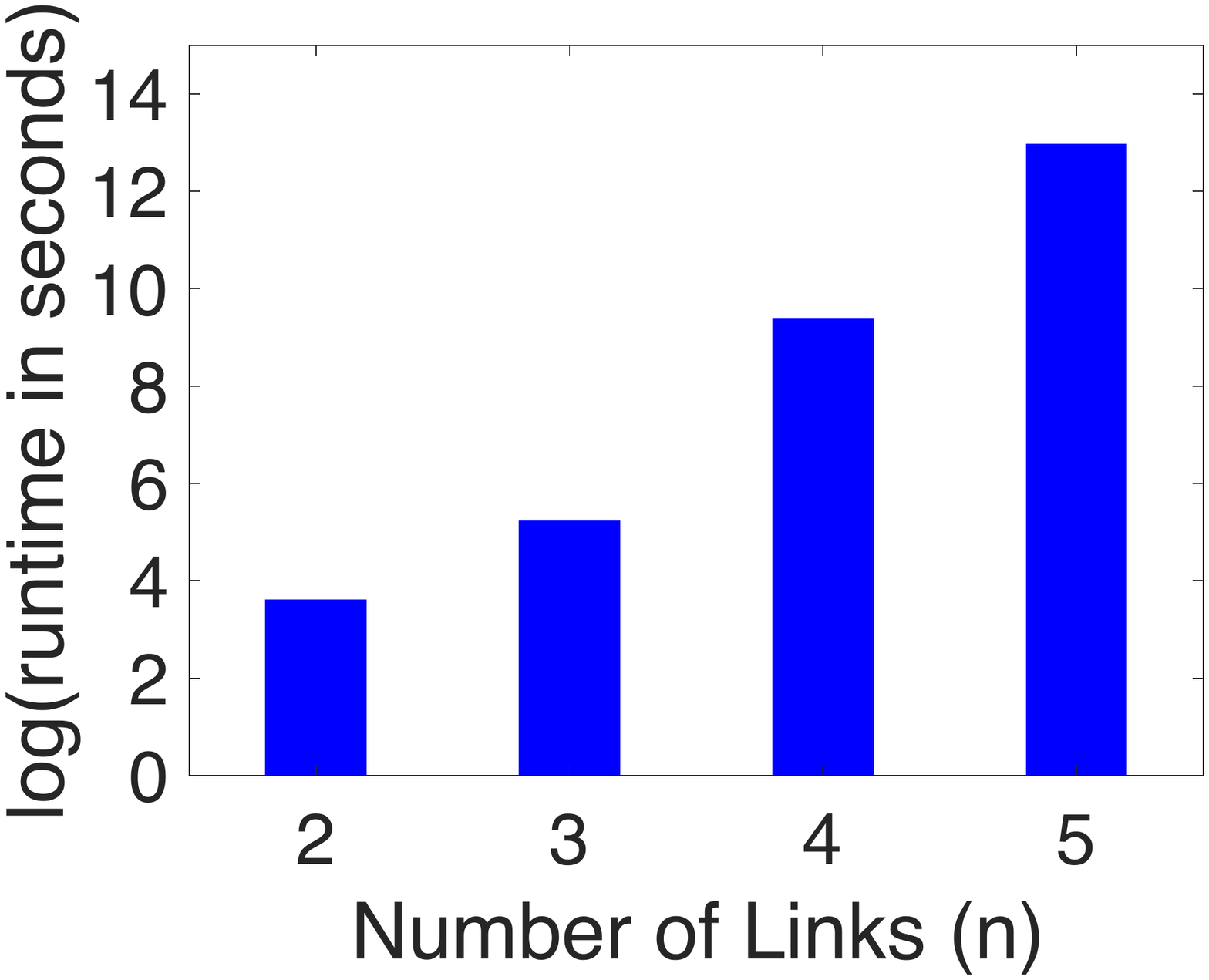}
\end{center}
\end{minipage}
\end{center}
\caption{\sf Log linear plot of run time versus $\nlinks$ for parallel networks.}
%\caption{\sf log scale of computation times for increasing number of parallel links.}
\label{fig:computation_complexity}
\end{figure}
}{
\begin{figure}[htb!]
\begin{center}
\begin{minipage}[c]{.2\textwidth}
\begin{center}
\includegraphics[width=0.9\linewidth]{./fig/computational_complexity} 
\end{center}
\end{minipage}
\end{center}
\caption{\sf Log linear plot of run time versus $\nlinks$ for parallel networks.}
%\caption{\sf log scale of computation times for increasing number of parallel links.}
\label{fig:computation_complexity}
\end{figure}
}
We revisit the parallel network setup and report runtime versus number of links. The link latency functions are affine with coefficients for $\nlinks$-link network to be the first $\nlinks$ columns of 
\begin{equation*}
\begin{split}
\alpha_0 & =\kbordermatrix{&i=1&i=2 & i=3 & i=4 & i=5\\
\state_1&5 & 25 & 4 & 24 & 17\\
\state_2&20 & 15 & 24 & 12 & 19},
\\
\alpha_1 & =\kbordermatrix{&i=1&i=2 & i=3 & i=4 & i=5\\
\state_1&4 & 2 & 1 & 2 & 4\\
\state_2&1 & 2 & 3 & 5 & 2}
\end{split}
\end{equation*}
The total demand is set to be $2.5 \nlinks$, and the prior is $\prior(\state_1)=0.6=1-\prior(\state_2)$ in all the instances. The log-linear plot in Figure~\ref{fig:computation_complexity} suggests that runtime grows exponentially with $\nlinks$. This apparent inconsistency with the discussion in Section~\ref{sec:extension} is to be understood in the context of complex resource management strategies embedded in the high performance computing facility used for these simulations. Furthermore, implicit in the analysis of  Section~\ref{sec:extension} is a large $\nlinks$ assumption, and a fixed absolute accuracy level which is independent of $\nlinks$. Relaxing these assumptions in the context of practical solvers and hardware limitations is outside the scope of this paper, and will be pursued in future work.  

%Figure~\ref{fig:computation_complexity} shows logarithmic scale of computation times for increasing number of parallel links under affine link latency functions and 2-atoms signaling policies for the following simulation parameters: $\prior(\state_1)=0.6=1-\prior(\state_2)$, 
%%which are taken from \cite{Das.Kamenica.ea:17}:
%
%The first $n$ columns of the above parameter matrices are used for the simulation of $n$-parallel link case. The total demand is proportional to the number of links, i.e. $2.5n$ for $n$-parallel network. 
%
%Due to the increasing amount of computational resources required for simulations of increasing number of links, we used the Discovery high-performance computing (HPC) cluster provided by USC Center for Advanced Research Computing to do the simulations for computational complexity analysis.
%
%The log scale of the computation times are close to linear, thus the original computation times grow exponentially. The result is different from our theoretical analysis, and might be explained by hardware limitations of HPC. HPC allocates different memory sizes and different CPUs for computing the simulation cases of different number of links. For example, 16 GB of memory is used for 2-link case while 1TB of memory is needed for 5-link case. Differences in CPUs and memory sizes might lead to this non-linear complexity result. We also observe that, even when running the same case using the same memory size and CPUs, the computation times vary from one run to another.

}
{
\input{simulations-v3}
}

\section{Conclusion and Future Work}
\label{sec:conclusions}
Existing works on information design for non-atomic routing games provide useful insights, whose generalization however is not readily apparent. 
Relatedly, a computational approach to operationalize optimal information design for general settings does not exist to the best of our knowledge. By making connection to semidefinite programming (SDP), this paper not only fills this gap, but also allows to leverage computational tools developed by the SDP community. The latter is particularly relevant for extending the approach to non-atomic games beyond routing.   

There are several directions for future work. The bound in Theorem~\ref{prop:natoms-upper-bound} may be computationally prohibitive for large networks. Proposition~\ref{thm:diagonal-optimal}, related discussion in Remark~\ref{rem:diagonal-atomic}, and Section~\ref{sec:simulations} on the other hand suggest the possibility of exploring problem structure to tighten the bound. A counterpart to Theorem~\ref{prop:natoms-upper-bound} for public signaling policies is open. A relatively unexplored direction is sub-optimality bounds for simple classes of signaling policies such as diagonal atomic. 
%The observations in Section~\ref{sec:simulations} point to several interesting conjectures. One such observation is that the optimal cost is monotonically non-increasing in $\pfrac$ under private and public signals, even if the cost under a specific signal may not exhibit monotonicity. It also remains to investigate the gap between optimal cost under private and public signals as $\nmesgs \to \infty$; this gap is zero for values of $\pfrac$ closer to one for the example in Section~\ref{sec:simulations}. 
%
%
% on the convergence properties of the alternating procedure. Extension to higher order polynomial latency functions is natural to pursue.  It would also be interesting to extend the framework to other constraints on feasible signals. One such constraint that has attracted attention recently is that of public signals, e.g., see \cite{Massicot.Langbort:19,Wu.Amin:19}. 
 Finally, it would be interesting to utilize the approach in this paper to quantify the reduction in \emph{price of anarchy} under information design. This will complement, e.g., preliminary analysis in \cite{Vasserman.Feldman.ea:15}.

\vspace{-0.025in}
\section*{Acknowledgment}
The high performance computing support provided by USC's Center for Advanced Research Computing for running simulations is gratefully acknowledged.

\vspace{-0.025in}
\bibliographystyle{ieeetr}
\bibliography{../bib/ksmain,../bib/savla}

\appendix

\subsection{Matrix Expressions}
\label{sec:matrix-expressions}
In the \ifthenelse{\equal{\numofcolumns}{1}}
{matrices below,}{matrices,} the lower triangular entries, generically represented as *, are equal to their upper triangular counterparts, and $e_{i}$ is the standard $i$-th basis vector in $\RR^n$, i.e., its $i$-th entry is one and all the other entries are zero. 

\ifthenelse{\equal{\isarxiv}{1}}
{
Expressions for matrices in \eqref{info-design-gmp-hetero:main} when $D=1$ are as follows:
\begin{equation*}
\label{eq:C-matrix-expr}
\begin{split}
C_{\state}(y) & = \prior(\state)
\begin{bmatrix}
y^T \diag(\alpha_{1,\state})y + y^T \alpha_{0,\state}
%\sum_i \, y_i^2 \alpha^{\state}_i + y_i \beta^{\state}_i 
& \frac{{\alpha_{0,\state}}^T}{2} + y^T \diag(\alpha_{1,\state}) \\
*  & \diag(\alpha_{1,\state})
\end{bmatrix}, \quad 
\alpha_{d,\state} =[\alpha_{d,\state,1}, \ldots, \alpha_{d,\state,\nlinks}]^T, \quad d=0,1
\end{split}
\end{equation*}

Expressions for matrices in \eqref{gmp-to-sdp-hetero:main} when $D=1$ are as follows:

}{
%Expressions for matrices in \eqref{info-design-gmp-hetero:main} when $D=1$ are as follows:
%\begin{figure*}[!htb]
%\begin{equation*}
%C_{\state}(y) = \prior(\state)
%\begin{bmatrix}
%y^T \diag(\alpha_{1,\state})y + y^T \alpha_{0,\state}
%%\sum_i \, y_i^2 \alpha^{\state}_i + y_i \beta^{\state}_i 
%& \frac{{\alpha_{0,\state}}^T}{2} + y^T \diag(\alpha_{1,\state}) \\
%*  & \diag(\alpha_{1,\state})
%\end{bmatrix}, \quad 
%\alpha_{d,\state} =[\alpha_{d,\state,1}, \ldots, \alpha_{d,\state,\nlinks}]^T, \quad d=0,1
%\end{equation*}
%\end{figure*}
}

\ifthenelse{\equal{\numofcolumns}{1}}
{
\begin{minipage}[c]{.79\textwidth}
\begin{equation*}
\begin{split}
%Z & = 
%\begin{bmatrix}
%1 & z^T \\
%z & z z^T
%\end{bmatrix}, \qquad z = [x^{\state_1}_1, \ldots, x^{\state_1}_{\nlinks}, \ldots, x^{\state_{\nstates}}_1, \ldots, x^{\state_{\nstates}}_{\nlinks}, y_1, \ldots, y_{\nlinks},\multmu_1, \ldots,\multmu_{\nlinks},\multlambda]^T \\
C & = 
\kbordermatrix{ & &x^{\state_1}& & x^{\state_{\nstates}} & y\\
& 0 & \frac{\prior(\state_1)}{2} {\alpha_{0,\state_1}}^T & \ldots & \frac{\prior(\state_{\nstates})}{2} {\alpha_{0,\state_{\nstates}}}^T & \frac{{\overline{\alpha}_0}^T}{2} \\
x^{\state_1} & * & \prior(\state_1) \diag(\alpha_{1,\state_1})  & \ldots & \zerobf & \prior(\state_1)\diag(\alpha_{1,\state_1}) \\
& \vdots & \vdots & \ldots & \vdots & \vdots  \\
x^{\state_{\nstates}} & * &  * &  \ldots & \prior(\state_{\nstates})\diag(\alpha_{1,\state_{\nstates}}) & \prior(\state_{\nstates})\diag(\alpha_{1,\state_{\nstates}})  \\
y & * & * &  \ldots & * & \diag(\overline{\alpha}_1) 
}, 
\end{split}
\end{equation*}
\end{minipage}
\begin{minipage}[c]{.2\textwidth}
\begin{equation*}
\begin{split}
%\overline{\alpha}_{1,i} & :=\sum_{\state} \prior(\state) \, \alpha_{1,\state,i}, \, 
%%\\
%%& \hspace{0.75in} 
%i \in [\nlinks]
%\\
 \overline{\alpha}_d & :=\sum_{\state} \prior(\state) \, \alpha_{d,\state} 
  \\
 \alpha_{d,\state} & := [\alpha_{d,\state,1}, \ldots, \alpha_{d,\state,\nlinks}]^T
  \end{split}
  \end{equation*}
  \end{minipage}
  }{
  \begin{figure*}[!htb]
  \begin{center}
  \scalebox{.8}{
  \begin{minipage}[c]{.75\textwidth}
\begin{equation*}
\begin{split}
%Z & = 
%\begin{bmatrix}
%1 & z^T \\
%z & z z^T
%\end{bmatrix}, \qquad z = [x^{\state_1}_1, \ldots, x^{\state_1}_{\nlinks}, \ldots, x^{\state_{\nstates}}_1, \ldots, x^{\state_{\nstates}}_{\nlinks}, y_1, \ldots, y_{\nlinks},\multmu_1, \ldots,\multmu_{\nlinks},\multlambda]^T \\
C & = 
\kbordermatrix{ & &x^{\state_1}& & x^{\state_{\nstates}} & y\\
& 0 & \frac{\prior(\state_1)}{2} {\alpha_{0,\state_1}}^T & \ldots & \frac{\prior(\state_{\nstates})}{2} {\alpha_{0,\state_{\nstates}}}^T & \frac{{\overline{\alpha}_0}^T}{2} \\
x^{\state_1} & * & \prior(\state_1) \diag(\alpha_{1,\state_1})  & \ldots & \zerobf & \prior(\state_1)\diag(\alpha_{1,\state_1}) \\
& \vdots & \vdots & \ldots & \vdots & \vdots  \\
x^{\state_{\nstates}} & * &  * &  \ldots & \prior(\state_{\nstates})\diag(\alpha_{1,\state_{\nstates}}) & \prior(\state_{\nstates})\diag(\alpha_{1,\state_{\nstates}})  \\
y & * & * &  \ldots & * & \diag(\overline{\alpha}_1) 
}, 
\end{split}
\end{equation*}
\end{minipage} \quad
\begin{minipage}[c]{.175\textwidth}
\begin{equation*}
\begin{split}
%\overline{\alpha}_{1,i} & :=\sum_{\state} \prior(\state) \, \alpha_{1,\state,i}, \, 
%%\\
%%& \hspace{0.75in} 
%i \in [\nlinks]
%\\
 \overline{\alpha}_d & :=\sum_{\state} \prior(\state) \, \alpha_{d,\state} 
  \\
 \alpha_{d,\state} & := [\alpha_{d,\state,1}, \ldots, \alpha_{d,\state,\nlinks}]^T
  \end{split}
  \end{equation*}
  \end{minipage}
  }
  \end{center}
  \end{figure*}
  }
  \ifthenelse{\equal{\numofcolumns}{1}}
{
\begin{equation*}
\begin{split}
A^{(i,j)} & = 
\kbordermatrix{ & &x^{\state_1}& & x^{\state_{\nstates}} & y\\
& 0 & \prior(\state_1)\frac{\alpha_{0,\state_1,j}-\alpha_{0,\state_1,i}}{2} e^T_i  & \ldots & \prior(\state_{\nstates}) \frac{\alpha_{0,\state_{\nstates},j}-\alpha_{0,\state_{\nstates},i}}{2} e^T_i & \zerobf  \\
x^{\state_1} & * & \tilde{A}^{(i,j)}_{\state_1} + {\tilde{A}^{(i,j)}_{\state_1}}^T  & \ldots & \zerobf & \tilde{A}^{(i,j)}_{\state_1} \\
 & \vdots &\ldots & \vdots & \vdots  \\
x^{\state_{\nstates}} & * & * & \ldots &\tilde{A}^{(i,j)}_{\state_{\nstates}} + {\tilde{A}^{(i,j)}_{\state_{\nstates}}}^T & \tilde{A}^{(i,j)}_{\state_{\nstates}}  \\
y & * & * & \ldots & * & \zerobf 
} 
%\\
%%\tilde{A}^{(i,j,\state_k)} & = -\alpha_i^{\state_k} \, e_i e_i^T + \frac{\alpha_j^{\state_k}}{2}(e_i e_j^T + e_j e_i^T), \qquad 
%\tilde{A}^{(i,j,\state_k)} & = \prior(\state_k) \left( \frac{\alpha_j^{\state}}{2} e_i e_j^T - \frac{\alpha_i^{\state_k}}{2} e_i e_i^T \right)
\end{split}
\end{equation*}
}{
\begin{figure*}[!htb]
\begin{center}
\scalebox{.8}{
\begin{minipage}[c]{.99\textwidth}
\begin{equation*}
%\begin{split}
A^{(i,j)} = 
\kbordermatrix{ & &x^{\state_1}& & x^{\state_{\nstates}} & y\\
& 0 & \prior(\state_1)\frac{\alpha_{0,\state_1,j}-\alpha_{0,\state_1,i}}{2} e^T_i  & \ldots & \prior(\state_{\nstates}) \frac{\alpha_{0,\state_{\nstates},j}-\alpha_{0,\state_{\nstates},i}}{2} e^T_i & \zerobf  \\
x^{\state_1} & * & \tilde{A}^{(i,j)}_{\state_1} + {\tilde{A}^{(i,j)}_{\state_1}}^T  & \ldots & \zerobf & \tilde{A}^{(i,j)}_{\state_1} \\
 & \vdots &\ldots & \vdots & \vdots  \\
x^{\state_{\nstates}} & * & * & \ldots &\tilde{A}^{(i,j)}_{\state_{\nstates}} + {\tilde{A}^{(i,j)}_{\state_{\nstates}}}^T & \tilde{A}^{(i,j)}_{\state_{\nstates}}  \\
y & * & * & \ldots & * & \zerobf 
}, \qquad \tilde{A}_{\state}^{(i,j)} = \prior(\state) \left(\frac{\alpha_{1,\state,j}}{2} e_i e_j^T - \frac{\alpha_{1,\state,i}}{2} e_i e_i^T \right) 
%\\
%%\tilde{A}^{(i,j,\state_k)} & = -\alpha_i^{\state_k} \, e_i e_i^T + \frac{\alpha_j^{\state_k}}{2}(e_i e_j^T + e_j e_i^T), \qquad 
%\tilde{A}^{(i,j,\state_k)} & = \prior(\state_k) \left( \frac{\alpha_j^{\state}}{2} e_i e_j^T - \frac{\alpha_i^{\state_k}}{2} e_i e_i^T \right)
%\end{split}
\end{equation*}
\end{minipage}
}
\end{center}
\end{figure*}
}
 \ifthenelse{\equal{\numofcolumns}{1}}
{
\begin{equation*}
\begin{split}
\tilde{A}_{\state}^{(i,j)} = \prior(\state) \left(\frac{\alpha_{1,\state,j}}{2} e_i e_j^T - \frac{\alpha_{1,\state,i}}{2} e_i e_i^T \right), \qquad 
B^{(i,j)} = 
\kbordermatrix{ & &x^{\state_1}& & x^{\state_{\nstates}} & y\\
& 0 & \zerobf  & \ldots & \zerobf & \frac{\overline{\alpha}_{0,j} - \overline{\alpha}_{0,i}}{2} e_i^T \\
x^{\state_1} & * & \zerobf & \ldots & \zerobf & \tilde{A}^{(i,j)}_{\state_1} \\
& \vdots & \vdots  &\ldots & \vdots & \vdots  \\
x^{\state_{\nstates}} & * & * &  \ldots & \zerobf & \tilde{A}^{(i,j)}_{\state_{\nstates}}  \\
y & * & * &  \ldots & * & \sum_k \, \tilde{A}^{(i,j)}_{\state_k}
}
\end{split}
\end{equation*}
}{
\begin{equation*}
\begin{split}
B^{(i,j)} = 
\kbordermatrix{ & &x^{\state_1}& & x^{\state_{\nstates}} & y\\
& 0 & \zerobf  & \ldots & \zerobf & \frac{\overline{\alpha}_{0,j} - \overline{\alpha}_{0,i}}{2} e_i^T \\
x^{\state_1} & * & \zerobf & \ldots & \zerobf & \tilde{A}^{(i,j)}_{\state_1} \\
& \vdots & \vdots  &\ldots & \vdots & \vdots  \\
x^{\state_{\nstates}} & * & * &  \ldots & \zerobf & \tilde{A}^{(i,j)}_{\state_{\nstates}}  \\
y & * & * &  \ldots & * & \sum_k \, \tilde{A}^{(i,j)}_{\state_k}
}
\end{split}
\end{equation*}
}
\ifthenelse{\equal{\numofcolumns}{1}}
{
\begin{equation*}
\begin{split}
%D^{(i)} = 
%\begin{bmatrix}
%0 & \zerobf & \ldots & \zerobf & \zerobf & \zerobf & 0 \\
%\zerobf & \zerobf & \ldots & \zerobf & \zerobf & \zerobf & \zerobf \\
%\vdots & \vdots & \ldots & \vdots & \vdots & \vdots & \vdots \\
%\zerobf & \zerobf & \ldots & \zerobf & \zerobf & \zerobf & \zerobf \\
%\zerobf & \zerobf & \ldots & \zerobf & \zerobf & e_i e_i^T & \zerobf \\
%\zerobf & \zerobf & \ldots & \zerobf & e_i e_i^T & \zerobf & \zerobf \\
%0 & \zerobf & \ldots & \zerobf & \zerobf & \zerobf & 0
%\end{bmatrix}, \quad 
S_x^{(k)} = 
\kbordermatrix{ & &x^{\state_1}& & x^{\state_k}&  & x^{\state_{\nstates}} & y\\
& - \pfrac & \zerobf & \ldots & \onebf^T/2 & \ldots & \zerobf & \zerobf  \\
x^{\state_1}& * & \zerobf & \ldots & \zerobf & \ldots & \zerobf & \zerobf  \\
& \vdots & \vdots & \ldots & \vdots & \ldots & \vdots & \vdots  \\
x^{\state_k}& * & * & \ldots & \zerobf & \ldots & \zerobf & \zerobf  \\
& \vdots & \vdots & \ldots & \vdots & \ldots & \vdots & \vdots  \\
x^{\state_{\nstates}}& * & * & \ldots & * & \ldots & \zerobf & \zerobf \\
y& * & * & \ldots & * & \ldots & * & \zerobf 
}, \quad 
S_y = 
\kbordermatrix{ & &x^{\state_1}& & x^{\state_{\nstates}} & y\\
& \pfrac - 1 & \zerobf & \ldots & \zerobf & \onebf^T/2  \\
x^{\state_1} & * &  \zerobf & \ldots & \zerobf & \zerobf  \\
& \vdots & \vdots & \ldots & \vdots & \vdots  \\
x^{\state_{\nstates}} & * &  * & \ldots & \zerobf & \zerobf  \\
y & * &  * & \ldots & * & \zerobf 
}
\end{split}
\end{equation*}
\begin{equation*}
\begin{split}
T_x^{(i,k)} = 
\kbordermatrix{ & &x^{\state_1}& & x^{\state_k}&  & x^{\state_{\nstates}} & y\\
& 0 & \zerobf & \ldots & - \frac{\pfrac e_i^T}{2} & \ldots & \zerobf & \zerobf  \\
x^{\state_1} & * & \zerobf & \ldots & \zerobf & \ldots & \zerobf & \zerobf  \\
& \vdots & \vdots & \ldots & \vdots & \ldots & \vdots & \vdots  \\
x^{\state_k}& * & * & \ldots & \frac{\onebf e_i^T + e_i \onebf^T}{2} & \ldots & \zerobf & \zerobf  \\
& \vdots & \vdots & \ldots & \vdots & \ldots & \vdots & \vdots  \\
x^{\state_{\nstates}} & * & * & \ldots & * & \ldots & \zerobf & \zerobf  \\
y & * & * & \ldots & * & \ldots & * & \zerobf 
}, \quad
T_y^{(i)} = 
\kbordermatrix{ & &x^{\state_1}& & x^{\state_{\nstates}} & y\\
& 0 & \zerobf & \ldots & \zerobf & \frac{(1-\pfrac) e_i^T}{2}  \\
x^{\state_1}& * & \zerobf & \ldots & \zerobf & \zerobf  \\
& \vdots & \vdots & \ldots & \vdots & \vdots  \\
x^{\state_{\nstates}}& * & * & \ldots & \zerobf & \zerobf  \\
y& * & * & \ldots & * & \frac{\onebf e_i^T + e_i \onebf^T}{2} 
}
\end{split}
\end{equation*}
}{
\begin{figure*}[!htb]
 \begin{center}
 \scalebox{.85}{
 \begin{minipage}[c]{.99\textwidth}
\begin{equation*}
\begin{split}
S_x^{(k)} & = 
\kbordermatrix{ & &x^{\state_1}& & x^{\state_k}&  & x^{\state_{\nstates}} & y\\
& - \pfrac & \zerobf & \ldots & \onebf^T/2 & \ldots & \zerobf & \zerobf  \\
x^{\state_1}& * & \zerobf & \ldots & \zerobf & \ldots & \zerobf & \zerobf  \\
& \vdots & \vdots & \ldots & \vdots & \ldots & \vdots & \vdots  \\
x^{\state_k}& * & * & \ldots & \zerobf & \ldots & \zerobf & \zerobf  \\
& \vdots & \vdots & \ldots & \vdots & \ldots & \vdots & \vdots  \\
x^{\state_{\nstates}}& * & * & \ldots & * & \ldots & \zerobf & \zerobf \\
y& * & * & \ldots & * & \ldots & * & \zerobf 
}, \quad 
S_y = 
\kbordermatrix{ & &x^{\state_1}& & x^{\state_{\nstates}} & y\\
& \pfrac - 1 & \zerobf & \ldots & \zerobf & \onebf^T/2  \\
x^{\state_1} & * &  \zerobf & \ldots & \zerobf & \zerobf  \\
& \vdots & \vdots & \ldots & \vdots & \vdots  \\
x^{\state_{\nstates}} & * &  * & \ldots & \zerobf & \zerobf  \\
y & * &  * & \ldots & * & \zerobf 
}
%\end{split}
%\end{equation*}
%\begin{equation*}
%\begin{split}
\\
T_x^{(i,k)} & = 
\kbordermatrix{ & &x^{\state_1}& & x^{\state_k}&  & x^{\state_{\nstates}} & y\\
& 0 & \zerobf & \ldots & - \frac{\pfrac e_i^T}{2} & \ldots & \zerobf & \zerobf  \\
x^{\state_1} & * & \zerobf & \ldots & \zerobf & \ldots & \zerobf & \zerobf  \\
& \vdots & \vdots & \ldots & \vdots & \ldots & \vdots & \vdots  \\
x^{\state_k}& * & * & \ldots & \frac{\onebf e_i^T + e_i \onebf^T}{2} & \ldots & \zerobf & \zerobf  \\
& \vdots & \vdots & \ldots & \vdots & \ldots & \vdots & \vdots  \\
x^{\state_{\nstates}} & * & * & \ldots & * & \ldots & \zerobf & \zerobf  \\
y & * & * & \ldots & * & \ldots & * & \zerobf 
}, \quad  
T_y^{(i)} = 
\kbordermatrix{ & &x^{\state_1}& & x^{\state_{\nstates}} & y\\
& 0 & \zerobf & \ldots & \zerobf & \frac{(1-\pfrac) e_i^T}{2}  \\
x^{\state_1}& * & \zerobf & \ldots & \zerobf & \zerobf  \\
& \vdots & \vdots & \ldots & \vdots & \vdots  \\
x^{\state_{\nstates}}& * & * & \ldots & \zerobf & \zerobf  \\
y& * & * & \ldots & * & \frac{\onebf e_i^T + e_i \onebf^T}{2} 
}
\end{split}
\end{equation*}
\end{minipage}
}
\end{center}
\end{figure*}
}

\ifthenelse{\equal{\isarxiv}{1}}
{
\subsection{Proof of Proposition~\ref{prop:revelation}}
It is easy to see that \eqref{eq:BNE-flow-solution} is convex since the link latency functions are non-decreasing. Therefore, the KKT conditions for optimality become necessary and sufficient and can be shown to be equivalent to the BNE condition in \eqref{eq:BNE} following standard arguments.

In order to establish the second part of the proposition, consider the following set of aggregate link flows:
%\begin{equation*}
%\begin{split}
\begin{subequations}
\label{eq:agg-flow-set:main}
\begin{align}
\tilde{\mc X} := & \Big\{\tilde{x}^{(k)} \in \simplex(1), \, k \in [\nmesgs]: \, \tilde{x}^{(k)}=x^{(k)}+y,  \, x^{(k)} \in \simplex(\pfrac), \, k \in [\nmesgs], \, y \in \simplex(1-\pfrac) \Big\}
\label{eq:agg-flow-set:def}
\\
= & \Big\{\tilde{x}^{(k)} \in \simplex(1), \, k \in [\nmesgs]: \, \sum_i \min_k \tilde{x}_i^{(k)} \geq 1 - \pfrac
\Big\}
\label{eq:agg-flow-set:reformulation}
\end{align}
\end{subequations}
%\end{split}
%\end{equation*}
\eqref{eq:agg-flow-set:reformulation} follows from the following. For every $(\{x^{(k)}: k \in [\nmesgs]\}, y)$, $\sum_i \, \min_k \, (x_i^{(k)} + y_i) \geq \sum_i y_i = 1-\pfrac$. Vice-versa, for every $\tilde{x}$ satisfying the inequality in \eqref{eq:agg-flow-set:reformulation}, let $\tilde{y}_i := \min_k \tilde{x}_i^{(k)}$, $i \in [\nlinks]$. With this, $y_i=\frac{\tilde{y}_i}{\sum_j \tilde{y}_j} (1-\pfrac) \in \simplex(1-\pfrac)$ and $x^{(k)}=\tilde{x}^{(k)}-y \in \simplex(\pfrac)$, $k \in [\nmesgs]$. 

Convexity of $\tilde{\mc X}$ is established as follows. Consider any $\tilde{x}^{(1,k)}$ and $\tilde{x}^{(2,k)}$, $k \in [\nmesgs]$, belonging to $\tilde{\mc X}$. Therefore, for all $\beta \in [0,1]$,
\begin{equation*}
\begin{split}
\sum_i \, \min_k \, \left(\beta \tilde{x}_i^{(1,k)} + (1-\beta) \tilde{x}_i^{(2,k)}  \right) & = \sum_i \, \left(\beta \tilde{x}_i^{(1,k(i))} + (1-\beta) \tilde{x}_i^{(2,k(i))}  \right)
\\
& \geq \beta \sum_i \, \min_k \, \tilde{x}_i^{(1,k)} + (1-\beta) \sum_i \, \min_k \, \tilde{x}_i^{(2,k)}
\\
& = 1-\pfrac
\end{split}
\end{equation*}
where $k(i)$ in the first equality is a $k \in [\nmesgs]$ for which $\beta \tilde{x}_i^{(1,k)} + (1-\beta) \tilde{x}_i^{(2,k)}$ achieves the smallest value. 

Now consider the following:
\begin{equation}
\label{eq:BNE-flow-reformulation}
\begin{split}
\min_{\substack{\{\tilde{x}^{(k)}: \, k \in [\nmesgs]\} \in \tilde{\mc X}}} \quad & \sum_{i, \, \state} \, \, \int_{\bar{x}} \, \, \int_0^{\sum_k \bar{x}_k \, \tilde{x}^{(k)}_{i}} \, \congfunc_{\state,i}(z)  \, \indsignal_{\state}(\bar{x}) \, \prior(\state) \, \de z \, \de \bar{x}
%\\
%\text{s.t. } \quad & \sum_i \min_k \tilde{x}_i^{(k)} \geq 1 - \pfrac
\end{split} 
\end{equation}
where we use \eqref{eq:agg-flow-set:reformulation} for $\tilde{\mc X}$. Recall that $\tilde{\mc X}$ is convex. 
%\eqref{eq:BNE-flow-solution} and \eqref{eq:BNE-flow-reformulation} are equivalent as follows. 
%For every $(x,y)$ feasible for \eqref{eq:BNE-flow-solution}, $\tilde{x}^{(k)}=x^{(k)}+y$, $k \in [\nmesgs]$, is feasible for \eqref{eq:BNE-flow-reformulation} and has the same cost. The feasibility follows from the fact that $\sum_i \, \min_k \, (x_i^{(k)} + y_i) \geq \sum_i y_i = 1-\pfrac$. Vice-versa, for every $\tilde{x}$ feasible for \eqref{eq:BNE-flow-reformulation}, let $\tilde{y}_i := \min_k \tilde{x}_i^{(k)}$, $i \in [\nlinks]$. With this, $y_i=\frac{\tilde{y}_i}{\sum_j \tilde{y}_j} (1-\pfrac)$ and $x^{(k)}=\tilde{x}^{(k)}-y$, $k \in [\nmesgs]$ is feasible for \eqref{eq:BNE-flow-solution}. Moreover, the costs associated with $(x,y)$ and $\tilde{x}$ in \eqref{eq:BNE-flow-solution} and \eqref{eq:BNE-flow-reformulation}, respectively, are the same. Therefore, \eqref{eq:BNE-flow-solution} and \eqref{eq:BNE-flow-reformulation} achieve the same minimum cost.
The cost function in \eqref{eq:BNE-flow-reformulation}, say $F(\tilde{x})$, can be shown to be strictly convex for all $\tilde{x} \in \tilde{\mc X}$ as follows. The generic entry of the Hessian of $F$ is given by:
\begin{equation*}
\frac{\partial^2 F}{\partial \tilde{x}_h^{(k_1)}\partial \tilde{x}_j^{(k_2)}} 
= \begin{cases}
\sum_{\state} \, \, \int_{\bar{x}} \, \, \bar{x}_{k_1} \bar{x}_{k_2} \congfunc'_{\state,h}(\sum_k \bar{x}_k \tilde{x}_h^{(k)}) \, \indsignal_{\state}(\bar{x}) \, \prior(\state) \, \de \bar{x}, & h=j
\\
0, & h \neq j
\end{cases}
\end{equation*}
Therefore, $\tilde{x}^T \nabla^2 \tilde{x} = \sum_{i, \, \state} \, \, \int_{\bar{x}} \, \, \left(\sum_k \bar{x}_{k} \tilde{x}_i^{(k)}\right)^2 \congfunc'_{\state,i}(\sum_k \bar{x}_k \tilde{x}_i^{(k)}) \, \indsignal_{\state}(\bar{x}) \, \prior(\state) \, \de \bar{x}>0$ for all $\tilde{x} \in \tilde{\mc X}$, where the inequality holds because the integrand is non-negative, and for every $\bar{x} \in \simplex_{\nmesgs}$, $\sum_k \bar{x}_k \tilde{x}_i^{(k)}>0$ for at least one $i$. 

Therefore, \eqref{eq:BNE-flow-reformulation} is strictly convex. Following the definition of $\tilde{\mc X}$ in \eqref{eq:agg-flow-set:def}, it is also easy to see that \eqref{eq:BNE-flow-reformulation} and \eqref{eq:BNE-flow-solution} give the same solution. Therefore, for every global minimum of \eqref{eq:BNE-flow-solution}, i.e., BNE flow, there exists a unique aggregate link flow in $\tilde{\mc X}$, which is the unique global minimum of \eqref{eq:BNE-flow-reformulation}.

%Let $\tilde{x}^{(1,k)}$ and $\tilde{x}^{(2,k)}$, $k \in [\nmesgs]$, be two feasible points for \eqref{eq:BNE-flow-reformulation}. Therefore, for all $\beta \in [0,1]$,
%\begin{equation*}
%\begin{split}
%\sum_i \, \min_k \, \left(\beta \tilde{x}_i^{(1,k)} + (1-\beta) \tilde{x}_i^{(2,k)}  \right) & = \sum_i \, \left(\beta \tilde{x}_i^{(1,k(i))} + (1-\beta) \tilde{x}_i^{(2,k(i))}  \right)
%\\
%& \geq \beta \sum_i \, \min_k \, \tilde{x}_i^{(1,k)} + (1-\beta) \sum_i \, \min_k \, \tilde{x}_i^{(2,k)}
%\\
%& = 1-\pfrac
%\end{split}
%\end{equation*}
%where $k(i)$ in the first equality is a $k \in [\nmesgs]$ for which $\beta \tilde{x}_i^{(1,k)} + (1-\beta) \tilde{x}_i^{(2,k)}$ achieves the smallest value. 

}

\subsection{Proof of Proposition~\ref{prop:gpm-sdp-exact}}
Substituting $x_2=\pfrac-x_1$, \eqref{eq:info-design-joint} can be re-written in terms of probability measures $\tilde{\signal}=\{\tilde{\signal}_{\state}: \, \state \in \allstates\}$ over the single variable $x_1$, with the only constraint that each entry of $\tilde{\signal}$ is supported over $[0,\pfrac]$. Let $\tilde{\mom}:=\{(\tilde{\mom}_{\state}^0,\ldots,\tilde{\mom}_{\state}^{D+1}): \, \state \in \allstates\}$ be the reals corresponding to the first $D+1$ moments of $\tilde{\signal}$. The cost function in \eqref{info-design-gmp-hetero:cost} and the constraint in \eqref{info-design-gmp-hetero:obedience}-\eqref{info-design-gmp-hetero:nash} can be expressed as linear combinations of elements of $\tilde{\mom}$. The additional constraint that the elements of $\tilde{\mom}$ have to correspond to the first $D+1$ moments of probability measures supported on $[0,\pfrac]$ can be written in terms of linear equations and semidefinite matrix constraints, e.g., see \cite[Proposition A.6]{Stein.Parrilo.ea:GEB11}.

\subsection{Proof of Theorem~\ref{prop:natoms-upper-bound}}
We refer to Section~\ref{sec:technical-results} for the definition of a \emph{truncated moment sequence} used in this proof. 

Substituting $x_{\nlinks}=\pfrac-\sum_{i \in [\nlinks-1]}x_i$ and $y_{\nlinks}=\pfrac-\sum_{i \in [\nlinks-1]}y_i$, \eqref{eq:info-design-joint} can be equivalently rewritten in terms of $(x_1,\ldots,x_{\nlinks-1})$ and $(y_1,\ldots,y_{\nlinks-1})$. We use this reduced form of \eqref{eq:info-design-joint} for this proof. 
Let $(\signal^*,y^*)$ be an optimal solution to \eqref{eq:info-design-joint}. 
The polynomials in $x$ appearing in the cost and constraints in \eqref{eq:info-design-joint} have highest degree $D+1$. Consider a $(\tilde{\signal}^*,y^*)$, where, for every $\state \in [\nstates]$, $\tilde{\signal}_{\state}^*$ has the same truncated moment sequence of degree $D+1$ as $\signal_{\state}^*$. Such a $(\tilde{\signal}^*,y^*)$ satisfies the constraints in \eqref{eq:info-design-joint} and gives the same cost value as $(\signal^*,y^*)$, and is therefore also optimal. The theorem then follows from \cite[Theorem 2]{Bayer.Teichmann:06} according to which, a truncated moment sequence in $\nlinks-1$ variables of degree $D+1$ admits a feasible measure if and only if it admits a feasible measure with support consisting of at most ${D+\nlinks \choose D+1}$ atoms. 

\subsection{Proof of Proposition~\ref{thm:diagonal-optimal}}

\subsubsection*{\underline{Equivalence between \eqref{eq:info-design-diagonal} and \eqref{eq:info-design-joint}}} Let $(\signal^*,y^*)$ be an optimal solution to \eqref{eq:info-design-joint}. We show that, for every $y \in \simplex(1-\pfrac)$, there exists an optimal solution to \eqref{info-design-gmp-hetero:main} which is diagonal atomic. When specialized to $y^*$, this establishes the equivalence. 

It is sufficient to show that for every $\signal=\{\signal_{\state}: \, \state \in \allstates\}$ feasible for \eqref{info-design-gmp-hetero:main}, the  diagonal atomic $\signal^{\text{at}}:= \{\signal_{\state}^{\text{at}}(x)=\delta(x-\E_{\signal_{\state}}(x)): \state \in \allstates \}$ is also feasible and satisfies $J(\signal^{\text{at}}) \leq J(\signal)$. 

For every $y$,  
\ifthenelse{\equal{\numofcolumns}{1}}
{
\begin{equation*}
\begin{split}
J(\signal)-J(\signal^{\text{at}})& =\sum_{\state, i} \prior(\state) y_i \left(\int \congfunc_{\state,i}(x_i) \signal_{\state}(x) \, \de x - \int \congfunc_{\state,i}(x_i) \signal^{\text{at}}_{\state}(x) \, \de x\right) 
\\
& \hspace{0.5in}+ \sum_{\state, i} \prior(\state) \left(\int x_i \congfunc_{\state,i}(x_i) \signal_{\state}(x) \, \de x - \int x_i \congfunc_{\state,i}(x_i) \signal^{\text{at}}_{\state}(x) \, \de x\right)  \geq 0
\end{split}
\end{equation*}
}{
$
J(\signal)-J(\signal^{\text{at}})=\sum_{\state, i} \prior(\state) y_i \left(\int \congfunc_{\state,i}(x_i) \signal_{\state}(x) \, \de x - \int \congfunc_{\state,i}(x_i) \signal^{\text{at}}_{\state}(x) \, \de x\right) + \sum_{\state, i} \prior(\state) \left(\int x_i \congfunc_{\state,i}(x_i) \signal_{\state}(x) \, \de x - \int x_i \congfunc_{\state,i}(x_i) \signal^{\text{at}}_{\state}(x) \, \de x\right)\geq 0
$,  
%\begin{equation*}
%\begin{split}
%& J(\signal)-J(\signal^{\text{at}}) 
%\\
%& \, \, =\sum_{\state, i} \prior(\state) y_i \left(\int \congfunc_{\state,i}(x_i) \signal_{\state}(x) \, \de x - \int \congfunc_{\state,i}(x_i) \signal^{\text{at}}_{\state}(x) \, \de x\right) 
%\\
%& \hspace{0.1in}+ \sum_{\state, i} \prior(\state) \left(\int x_i \congfunc_{\state,i}(x_i) \signal_{\state}(x) \, \de x - \int x_i \congfunc_{\state,i}(x_i) \signal^{\text{at}}_{\state}(x) \, \de x\right)  
%\\
%& \geq 0
%\end{split}
%\end{equation*}
}
where the inequality follows from Jensen's inequality due to convexity of $\congfunc_{\state,i}$ (since it is affine) and of $x_i \congfunc_{\state,i}$; it is easy to see that the latter follows from the convexity of $\congfunc_{\state,i}$.

%\kscomment{It is easy to see that $J(\signal)-J(\signal^{\text{at}})=\sum_{\state} \, \prior(\state) \, \diag(\alpha_{1,\state}) \cdot (\int \, x x^T \, \signal_{\state}(x) \de x - \int \, x x^T \, \signal^{\text{at}}_{\state}(x) \de x )$. $\int \, x x^T \, \signal_{\state}(x) \de x - \int \, x x^T \, \signal^{\text{at}}_{\state}(x) \de x$ is the covariance matrix of $\signal_{\state}$ and hence is positive semidefinite. $\diag(\alpha_{1,\state})$ is trivially positive semidefinite. Therefore, since the inner product of positive semidefinite matrices is non-negative, $J(\signal)-J(\signal^{\text{at}}) \geq 0$.} 

\eqref{info-design-gmp-hetero:obedience} for $\signal$, $i=1$ and $j=2$ is:
\ifthenelse{\equal{\numofcolumns}{1}}
{
\begin{equation*}
\sum_{\state} \, \int \, \Bigg(\alpha_{1,\state,1} \, x_1^2  - \alpha_{1,\state,2} \,  x_{2} x_{1} + \alpha_{1,\state,1} \, x_1 y_1 - \alpha_{1,\state,2} \, x_1 y_2 + \alpha_{0,\state,1} \, x_{1} - \alpha_{0,\state,2} \, x_{1}
\Bigg) \, \signal_{\state}(x) \de x \,\prior(\state) \leq 0
\end{equation*}
}{
$
\sum_{\state}\int \, \Big(\alpha_{1,\state,1} \, x_1^2  - \alpha_{1,\state,2} \,  x_{2} x_{1} + \alpha_{1,\state,1} \, x_1 y_1 - \alpha_{1,\state,2} \, x_1 y_2+ \alpha_{0,\state,1} \, x_{1} - \alpha_{0,\state,2} \, x_{1}
\Big) \, \signal_{\state}(x) \de x \,\prior(\state) \leq 0
$
%\begin{equation*}
%\begin{split}
%\sum_{\state} & \, \int \, \Bigg(\alpha_{1,\state,1} \, x_1^2  - \alpha_{1,\state,2} \,  x_{2} x_{1} + \alpha_{1,\state,1} \, x_1 y_1 - \alpha_{1,\state,2} \, x_1 y_2 \\ & \qquad + \alpha_{0,\state,1} \, x_{1} - \alpha_{0,\state,2} \, x_{1}
%\Bigg) \, \signal_{\state}(x) \de x \,\prior(\state) \leq 0
%\end{split}
%\end{equation*}
}
Plugging $x_2=\pfrac-x_1$, this is equivalent to:
\ifthenelse{\equal{\numofcolumns}{1}}
{
\begin{equation*}
\sum_{\state} \left(  (\alpha_{1,\state,1} + \alpha_{1,\state,2} ) \, \int  x_1^2 \, \signal_{\state}(x) \de x + (\alpha_{1,\state,1} y_1 + \alpha_{0,\state,1} - \pfrac \alpha_{1,\state,2} - y_2 \alpha_{1,\state,2} - \alpha_{0,\state,2})\int  x_1 \,  \signal_{\state}(x) \de x \right) \prior(\state) \leq 0
\end{equation*}
}{
$
\sum_{\state}\Big(  (\alpha_{1,\state,1} + \alpha_{1,\state,2} ) \, \int  x_1^2 \, \signal_{\state}(x) \de x + (\alpha_{1,\state,1} y_1 + \alpha_{0,\state,1}- \pfrac \alpha_{1,\state,2} - y_2 \alpha_{1,\state,2} - \alpha_{0,\state,2})\int  x_1 \,  \signal_{\state}(x) \de x \Big) \prior(\state) \leq 0
$
%\begin{equation*}
%\begin{split}
%\sum_{\state} & \Bigg(  (\alpha_{1,\state,1} + \alpha_{1,\state,2} ) \, \int  x_1^2 \, \signal_{\state}(x) \de x + (\alpha_{1,\state,1} y_1 + \alpha_{0,\state,1} \\ & \, - \pfrac \alpha_{1,\state,2} - y_2 \alpha_{1,\state,2} - \alpha_{0,\state,2})\int  x_1 \,  \signal_{\state}(x) \de x \Bigg) \prior(\state) \leq 0
%\end{split}
%\end{equation*}
}
$\int x_1 \, \signal_{\state}(x) \de x = \int x_1 \, \signal^{\text{at}}_{\state}(x) \de x$ by definition, and $\int  x_1^2 \, \signal_{\state}(x) \de x \geq (\int x_1 \, \signal_{\state}(x) \de x)^2=\int  x_1^2 \, \signal^{\text{at}}_{\state}(x) \de x$ by Jensen's inequality. Therefore, 
\ifthenelse{\equal{\numofcolumns}{1}}
{
\begin{equation*}
\sum_{\state} \left(  (\alpha_{1,\state,1} + \alpha_{1,\state,2} ) \, \int  x_1^2 \, \signal^{\text{at}}_{\state}(x) \de x + (\alpha_{1,\state,1} y_1 + \alpha_{0,\state,1} - \pfrac \alpha_{1,\state,2} - y_2 \alpha_{1,\state,2} - \alpha_{0,\state,2})\int  x_1 \, \signal^{\text{at}}_{\state}(x) \de x \right) \prior(\state) \leq 0
\end{equation*}
}{
$
\sum_{\state}\Big(  (\alpha_{1,\state,1} + \alpha_{1,\state,2} ) \, \int  x_1^2 \, \signal^{\text{at}}_{\state}(x) \de x + (\alpha_{1,\state,1} y_1 + \alpha_{0,\state,1}- \pfrac \alpha_{1,\state,2} - y_2 \alpha_{1,\state,2} - \alpha_{0,\state,2})\int  x_1 \, \signal^{\text{at}}_{\state}(x) \de x \Big) \prior(\state) \leq 0
$
%\begin{equation*}
%\begin{split}
%\sum_{\state} & \Bigg(  (\alpha_{1,\state,1} + \alpha_{1,\state,2} ) \, \int  x_1^2 \, \signal^{\text{at}}_{\state}(x) \de x + (\alpha_{1,\state,1} y_1 + \alpha_{0,\state,1} \\ & - \pfrac \alpha_{1,\state,2} - y_2 \alpha_{1,\state,2} - \alpha_{0,\state,2})\int  x_1 \, \signal^{\text{at}}_{\state}(x) \de x \Bigg) \prior(\state) \leq 0
%\end{split}
%\end{equation*}
}
which is equivalent to \eqref{info-design-gmp-hetero:obedience} for $\signal^{\text{at}}$, $i=1$ and $j=2$. The proof for $i=2$ and $j=1$ is identical. 

The coefficients in $B_{\state}^{(i,j)}$ corresponding to the quadratic terms are zero and therefore $\int B_{\state}^{(i,j)} zz^T \signal_{\state}(x) \, \de x = \int B_{\state}^{(i,j)} zz^T \signal_{\state}^{\text{at}}(x) \, \de x$. Hence, $\signal^{\text{at}}$ satisfies \eqref{info-design-gmp-hetero:nash} trivially.

\subsubsection*{\underline{Equivalence between \eqref{eq:info-design-diagonal} and \eqref{gmp-to-sdp-hetero:main}}}
\eqref{eq:info-design-diagonal} is equivalent to:
\begin{equation}
\label{info-design-gmp-hetero-one-atomic:main}
\begin{split}
\min_{\hat{\signal} \in \hat{\allsignals}} & \, \, \int \, C \cdot \hat{Z} \, \de \hat{\signal} \\ 
\text{s.t. } & \quad \int \, A^{(i,j)} \cdot \hat{Z} \, \de \hat{\signal} \geq 0, \quad i, j \in [\nlinks]\\
& \quad \int \, B^{(i,j)} \cdot \hat{Z} \, \de \hat{\signal} \geq 0, \quad i, j \in [\nlinks]\\
& \quad \hat{\signal} \text{ is 1-atomic}
\end{split}
\end{equation}
where the expressions for the symmetric matrices $C$, $A^{(i,j)}$ and $B^{(i,j)}$ are in \ifthenelse{\equal{\isarxiv}{1}}
{Appendix~\ref{sec:matrix-expressions}}{\revisionchange{\cite[Appendix A]{Zhu.Savla:infodesign-arxiv20}}}, 
\ifthenelse{\equal{\numofcolumns}{1}}
{
\begin{equation*}
\hat{Z} = 
\begin{bmatrix}
1 & \hat{z}^T \\
 \hat{z} & \hat{z} \hat{z}^T
\end{bmatrix}, \,  \hat{z} = [x^{\state_1}_1, \ldots, x^{\state_1}_{\nlinks}, \ldots, x^{\state_{\nstates}}_1, \ldots, x^{\state_{\nstates}}_{\nlinks}, y_1, \ldots, y_{\nlinks}]^T
\end{equation*}
}{
$
\hat{Z} = 
\begin{bmatrix}
1 & \hat{z}^T \\
 \hat{z} & \hat{z} \hat{z}^T
\end{bmatrix}, \,  \hat{z} = [x^{\state_1}_1, \ldots, x^{\state_1}_{\nlinks}, \ldots, x^{\state_{\nstates}}_1, \ldots, x^{\state_{\nstates}}_{\nlinks}, y_1, \ldots, y_{\nlinks}]^T
$, 
}
and $\hat{\allsignals}$ is the set of probability distributions over $\hat{z}$ satisfying $x^{\state_k} \in \simplex(\pfrac)$ for all $k \in [\nstates]$ and $y \in \simplex(1-\pfrac)$.

%\begin{remark}
%The constrained version in \eqref{info-design-gmp-hetero-one-atomic:main} has been used for optimal information design in previous studies, including \cite{Vasserman.Feldman.ea:15,Das.Kamenica.ea:17}. However, to the best of our knowledge, no formal proof has been provided for its equivalence to \eqref{info-design-gmp-hetero:main}.
%\end{remark}

It therefore suffices to establish the equivalence between \eqref{info-design-gmp-hetero-one-atomic:main} and \eqref{gmp-to-sdp-hetero:main}. We do this via a constrained version of \eqref{gmp-to-sdp-hetero:main}:
%
%
%The next result states the equivalence between \eqref{info-design-gmp-hetero-one-atomic:main} and \eqref{gmp-to-sdp-hetero:main}, as well as that solving \eqref{gmp-to-sdp-hetero:main} is equivalent to solving the following SDP obtained by relaxing the rank condition in \eqref{gmp-to-sdp-hetero:rank}:
\begin{equation}
\label{gmp-to-sdp-hetero-rank-constraint:main}
\min_{M \succeq 0} \, \hat{J}(M) \quad \text{s.t. } \eqref{gmp-to-sdp-hetero:obedience}-\eqref{gmp-to-sdp-hetero:second-moment}, \, \,  \rank(M)=1
\end{equation}
Specifically, 
(a) for every $\hat{\signal}$ feasible for \eqref{info-design-gmp-hetero-one-atomic:main}, $M(\hat{\signal}) := \int \hat{Z} \, \de \hat{\signal}$ 
%\begin{equation*}
%\label{eq:M-pi-relation-hetero-v2}
%M(\hat{\signal}) := \int \hat{Z} \, \de \hat{\signal} = 
%\begin{bmatrix}
%1 & \int \, \hat{z}^T \, \de \hat{\signal} \\
%\int \, \hat{z} \, \de \hat{\signal} & \int \hat{z} \hat{z}^T \, \de \hat{\signal}
%\end{bmatrix}
%\end{equation*}
is feasible for \eqref{gmp-to-sdp-hetero-rank-constraint:main}, and hence also for \eqref{gmp-to-sdp-hetero:main}; 
(b) for every $M = \begin{bmatrix} 
1 & \hat{\mom}^T \\ \hat{\mom} & \hat{\mom} \hat{\mom}^T
\end{bmatrix}$ feasible for \eqref{gmp-to-sdp-hetero-rank-constraint:main}, $\hat{\signal}=\delta(\hat{z}-\hat{\mom})$ is feasible for \eqref{info-design-gmp-hetero-one-atomic:main}; and 
(c) there exists an optimal solution $M^*$ for \eqref{gmp-to-sdp-hetero:main} such that $\rank(M^*)=1$. (a) and (b) together imply the equivalence between \eqref{info-design-gmp-hetero-one-atomic:main} and \eqref{gmp-to-sdp-hetero-rank-constraint:main}, and (c) implies the equivalence between \eqref{gmp-to-sdp-hetero-rank-constraint:main} and \eqref{gmp-to-sdp-hetero:main}. The proofs are as follows.

\begin{enumerate}
\item[(a)] For a 1-atomic $\hat{\signal}$, $M(\hat{\signal})=\begin{bmatrix} 1 & \int \hat{z}^T \, \de \hat{\signal} \\ \int \hat{z} \, \de \hat{\signal} & \int \hat{z} \de \hat{\signal} \, \int \hat{z}^T \de \hat{\signal} \end{bmatrix}$ implying that $M(\hat{\signal})$ is rank one and positive semidefinite. $M(\hat{\signal})$ satisfying \eqref{gmp-to-sdp-hetero:obedience} and \eqref{gmp-to-sdp-hetero:nash} follow from the corresponding constraints in \eqref{info-design-gmp-hetero-one-atomic:main}. 
\eqref{gmp-to-sdp-hetero:unit-mass} follows from the definition of $M(\hat{\signal})$, and the rest of the constraints in \eqref{gmp-to-sdp-hetero:main} follow from constraints on the support of $\hat{\signal}$. 
\item[(b)] Proposition~\ref{prop:tms-realization} implies that the 1-atomic $\hat{\signal}=\delta(\hat{z}-\hat{\mom})$ belongs to $\hat{\allsignals}$. Simple algebra shows the equivalence between the other constraints in \eqref{info-design-gmp-hetero-one-atomic:main} and the corresponding constraints in \eqref{gmp-to-sdp-hetero:main}. 
\item[(c)] 
%It is easy to see that a feasible solution to \eqref{gmp-to-sdp-hetero-rank-constraint:main} is of the form
%\begin{equation} 
%\label{eq:M-feasible-rank-one}
%\tilde{M} = \begin{bmatrix} 1 & \hat{m}^T \\ \hat{m} & 
%\hat{m} \hat{m}^T
%\end{bmatrix}
%\end{equation}
%
It is sufficient to show that, for every $M=\begin{bmatrix} 1 & \hat{\mom}^T \\ \hat{\mom} & M^0 \end{bmatrix}$ feasible for \eqref{gmp-to-sdp-hetero:main}, the rank one $\hat{M}=\begin{bmatrix} 1 & \hat{\mom}^T \\ \hat{\mom} & 
\hat{\mom} \hat{\mom}^T
\end{bmatrix}$
 is also feasible and satisfies $\hat{J}(M) \geq \hat{J}(\hat{M})$. 

$\hat{J}(M)-\hat{J}(\hat{M})=C^0 \cdot (M^0 - \hat{\mom} \hat{\mom}^T)$, where $C^0$ is the principal submatrix of $C$ obtained by removing the first row and the first column. $M \succeq 0$ implies $M^0 - \hat{\mom} \hat{\mom}^T \succeq 0$. It is easy to see that $C^0$ is positive semidefinite. Since the inner product of positive semidefinite matrices is non-negative, this implies that $\hat{J}(M)-\hat{J}(\hat{M}) \geq 0$. 

Feasibility of \eqref{gmp-to-sdp-hetero:unit-mass}-\eqref{gmp-to-sdp-hetero:non-negative} follows from the definition of $\hat{M}$. It is easy to see that $S_x^{(k)} \cdot \hat{M}= S_x^{(k)} \cdot M$ and $S_y \cdot \hat{M} = S \cdot M$, and therefore \eqref{gmp-to-sdp-hetero:simplex} is also satisfied. Also, for all $i \in [\nlinks]$ and $k \in [\nstates]$,
\ifthenelse{\equal{\numofcolumns}{1}}
{
\begin{equation*}
T_x^{(i,k)} \cdot \hat{M} = -\pfrac \hat{\mom}_{i} + (\onebf  e_i^T) \cdot (\hat{\mom} {\hat{\mom}}^T) = -\pfrac \hat{\mom}_{i} + \sum_{j} \hat{\mom}_{i} \hat{\mom}_{j} = -\pfrac \hat{\mom}_{i} + \pfrac \hat{\mom}_{i} = 0
\end{equation*}
}{
$
T_x^{(i,k)} \cdot \hat{M} = -\pfrac \hat{\mom}_{i} + (\onebf  e_i^T) \cdot (\hat{\mom} {\hat{\mom}}^T) = -\pfrac \hat{\mom}_{i} + \sum_{j} \hat{\mom}_{i} \hat{\mom}_{j}  = -\pfrac \hat{\mom}_{i} + \pfrac \hat{\mom}_{i} = 0
$. 
%\begin{equation*}
%\begin{split}
%T_x^{(i,k)} \cdot \hat{M} & = -\pfrac \hat{\mom}_{i} + (\onebf  e_i^T) \cdot (\hat{\mom} {\hat{\mom}}^T) = -\pfrac \hat{\mom}_{i} + \sum_{j} \hat{\mom}_{i} \hat{\mom}_{j} 
%\\ & = -\pfrac \hat{\mom}_{i} + \pfrac \hat{\mom}_{i} = 0
%\end{split}
%\end{equation*}
}
Similarly, $T_y^{(i)} \cdot \hat{M} = 0$ for all $i \in [\nlinks]$, implying \eqref{gmp-to-sdp-hetero:second-moment} is satisfied by $\hat{M}$. 
\eqref{gmp-to-sdp-hetero:obedience} for $M$ for $i=1$ and $j=2$ is:
\ifthenelse{\equal{\numofcolumns}{1}}
{
\begin{multline*}
\sum_{k} \left(\alpha_{1,\state_k,1} M^0_{2(k-1)+1,2(k-1)+1} - \alpha_{1,\state_k,2} M^0_{2(k-1)+1,2k} + \alpha_{1,\state_k,1} M^0_{2(k-1)+1,2 \nstates+1} - \alpha_{1,\state_k,2} M^0_{2(k-1)+1,2 \nstates+2} \right. \\ \left. + (\alpha_{0,\state_k,1} - \alpha_{0,\state_k,2}) \hat{\mom}_{2(k-1)+1} \right) \prior(\state_k) \leq 0
\end{multline*}
}{
$
\sum_{k}\Big(\alpha_{1,\state_k,1} M^0_{2(k-1)+1,2(k-1)+1} - \alpha_{1,\state_k,2} M^0_{2(k-1)+1,2k} + \alpha_{1,\state_k,1} M^0_{2(k-1)+1,2 \nstates+1} - \alpha_{1,\state_k,2} M^0_{2(k-1)+1,2 \nstates+2} + (\alpha_{0,\state_k,1} - \alpha_{0,\state_k,2}) \hat{\mom}_{2(k-1)+1} \Big) \prior(\state_k) \leq 0
$.
%\begin{equation*}
%\begin{split}
%\sum_{k} & \Bigg(\alpha_{1,\state_k,1} M^0_{2(k-1)+1,2(k-1)+1} - \alpha_{1,\state_k,2} M^0_{2(k-1)+1,2k} 
%\\ & + \alpha_{1,\state_k,1} M^0_{2(k-1)+1,2 \nstates+1} - \alpha_{1,\state_k,2} M^0_{2(k-1)+1,2 \nstates+2} 
%\\ & + (\alpha_{0,\state_k,1} - \alpha_{0,\state_k,2}) \hat{\mom}_{2(k-1)+1} \Bigg) \prior(\state_k) \leq 0
%\end{split}
%\end{equation*}
}
Plugging $M^0_{2(k-1)+1,2k} = \pfrac \hat{\mom}_{2(k-1)+1} - M^0_{2(k-1)+1,2(k-1)+1}$ and $M^0_{2(k-1)+1,2 \nstates+2} = (1-\pfrac) \hat{\mom}_{2(k-1)+1} - M^0_{2(k-1)+1,2 \nstates+1}$, this is equivalent to
\ifthenelse{\equal{\numofcolumns}{1}}
{
\begin{equation}
\label{eq:obed-crude}
\begin{split}
\sum_k \Big( (\alpha_{1,\state_k,1} + \alpha_{1,\state_k,2}) (M^0_{2(k-1)+1,2(k-1)+1} & +M^0_{2(k-1)+1,2 \nstates+1}) \\ & + (\alpha_{0,\state_k,1} - \alpha_{0,\state_k,2} -  \alpha_{1,\state_k,2}) \hat{\mom}_{2(k-1)+1} \Big) \prior(\state_k) \leq 0
\end{split}
\end{equation}
}{
\begin{equation}
\label{eq:obed-crude}
\begin{split}
\sum_k & \Bigg( (\alpha_{1,\state_k,1} + \alpha_{1,\state_k,2}) (M^0_{2(k-1)+1,2(k-1)+1} \\ & \quad +M^0_{2(k-1)+1,2 \nstates+1}) + (\alpha_{0,\state_k,1} - \alpha_{0,\state_k,2} \\ & \quad -  \alpha_{1,\state_k,2}) \hat{\mom}_{2(k-1)+1} \Bigg) \prior(\state_k) \leq 0
\end{split}
\end{equation}
}
\ifthenelse{\equal{\numofcolumns}{1}}
{
\begin{align*}
M \succeq 0 \, & \implies \, 
\begin{bmatrix}
1 & \hat{\mom}_{2(k-1)+1} & \hat{\mom}_{2\nstates+1} \\
* & M^0_{2(k-1)+1,2(k-1)+1} & M^0_{2(k-1)+1,2 \nstates+1} \\
* & * & M^0_{2 \nstates+1,2 \nstates+1}
\end{bmatrix} \succeq 0 \\ &  \implies \, 
\begin{bmatrix}
M^0_{2(k-1)+1,2(k-1)+1} & M^0_{2(k-1)+1,2 \nstates+1} \\
* & M^0_{2 \nstates+1,2 \nstates+1}
\end{bmatrix}  - 
\begin{bmatrix}
\hat{\mom}_{2(k-1)+1} \\
\hat{\mom}_{2\nstates+1}
\end{bmatrix} \, 
\begin{bmatrix}
\hat{\mom}_{2(k-1)+1} & \hat{\mom}_{2\nstates+1}
\end{bmatrix} \succeq 0
\end{align*}
}{
\begin{align*}
& M \succeq 0 \, \implies  
\\
&
\begin{bmatrix}
1 & \hat{\mom}_{2(k-1)+1} & \hat{\mom}_{2\nstates+1} \\
* & M^0_{2(k-1)+1,2(k-1)+1} & M^0_{2(k-1)+1,2 \nstates+1} \\
* & * & M^0_{2 \nstates+1,2 \nstates+1}
\end{bmatrix} \succeq 0 \\ &  \implies \, 
\begin{bmatrix}
M^0_{2(k-1)+1,2(k-1)+1} & M^0_{2(k-1)+1,2 \nstates+1} \\
* & M^0_{2 \nstates+1,2 \nstates+1}
\end{bmatrix}  \\
& \qquad \qquad - 
\begin{bmatrix}
\hat{\mom}_{2(k-1)+1} \\
\hat{\mom}_{2\nstates+1}
\end{bmatrix} \, 
\begin{bmatrix}
\hat{\mom}_{2(k-1)+1} & \hat{\mom}_{2\nstates+1}
\end{bmatrix} \succeq 0
\end{align*}
}
Inner product with \\ $\begin{bmatrix} \alpha_{1,\state_k,1} + \alpha_{1,\state_k,2} & \alpha_{1,\state_k,1} + \alpha_{1,\state_k,2} \\ 0 & 0 \end{bmatrix} \succeq 0$ gives
\ifthenelse{\equal{\numofcolumns}{1}}
{
\begin{equation*}
(\alpha_{1,\state_k,1} + \alpha_{1,\state_k,2}) (M^0_{2(k-1)+1,2(k-1)+1}+M^0_{2(k-1)+1,2 \nstates+1}) \geq (\alpha_{1,\state_k,1} + \alpha_{1,\state_k,2}) (\hat{\mom}^2_{2(k-1)+1} + \hat{\mom}_{2(k-1)+1} \hat{\mom}_{2 \nstates +1})
\end{equation*}
}{
$M^0_{2(k-1)+1,2(k-1)+1}+M^0_{2(k-1)+1,2 \nstates+1} \geq \hat{\mom}^2_{2(k-1)+1} + \hat{\mom}_{2(k-1)+1} \hat{\mom}_{2 \nstates +1}$
}
Plugging into \eqref{eq:obed-crude} implies that \eqref{gmp-to-sdp-hetero:obedience} is satisfied by $\hat{M}$ for $i=1$ and $j=2$. The proof for $i=2$ and $j=1$, as well as for \eqref{gmp-to-sdp-hetero:nash}, follows similarly. 
%Along the same lines as in the proof of Theorem~\ref{thm:one-atomic-hetero-v2}, $\hat{J}(\mc M)-\hat{J}(\tilde{\mc M})=\sum_{\state} C^0_{\state} \cdot (M_{\state}^0 - m_{\state} m^T_{\state}) \geq 0$, where the inequality follows from $M_{\state}=\begin{bmatrix} 1 & m_{\state}^T \\ m_{\state} & M_{\state}^0\end{bmatrix} \succeq 0$.
\end{enumerate}

\subsection{Proof of Theorem~\ref{thm:cost-monotonicity}}
\label{sec:proof-non-increasing}
For every $\pfrac \in [0,1]$, the feasible set for \eqref{eq:info-design-diagonal} is non-empty. Among the constraints that characterize the feasible set, the only ones which depend on $\pfrac$ are the linear equalities and inequalities associated with the characterization of $\simplex(\pfrac)$ and $\simplex(1-\pfrac)$, and are therefore continuous in $\pfrac$. Therefore, the feasible set is continuous in $\pfrac \in [0,1]$.\footnote{We forego excessive formalism in arguing about continuity of the feasible set and the optimal solution set with respect to $\pfrac \in [0,1]$. A formal argument would require to define these sets as point to set mappings and study the continuity of such mappings, e.g., see \cite[Definition 2.2.1]{Fiacco:83}, but would not add further insight.}  Furthermore, continuity of link latency functions implies that $J^{\diag}(x,y)$ is continuous. Therefore, \cite[Theorem 2.2.2]{Fiacco:83} implies that $J^{\diag,*}(\pfrac)$ is continuous and the set of optimal solutions to \eqref{eq:info-design-diagonal} is upper semi-continuous in $\pfrac \in [0,1]$, which in turn implies that there exists optimal solution $(x^*(\pfrac),y^*(\pfrac))$ which is continuous in $\pfrac \in [0,1]$. For such a solution, $C_{ij}^*(\pfrac) := \sum_{\state} \, \prior(\state) \, \left(\congfunc_{\state,_i}(x^{*,\state}_i(\pfrac) + y^*_i(\pfrac)) - \congfunc_{\state,j}(x^{*,\state}_j(\pfrac) + y^*_j(\pfrac))\right)$, $i, j \in [\nlinks]$, are also continuous. Consequently, almost every $\pfrac \in [0,1]$ belongs to a non-zero interval over which, for every $i, j \in [\nlinks]$, $C_{ij}^*(\pfrac)$ is either non-positive or positive. Since $J^{\diag,*}(\pfrac)$ is continuous, it suffices to show that  $J^{\diag,*}(\pfrac)$ is monotonically non-increasing over such intervals.

Consider one such interval $[\pfrac_1,\pfrac_2] \subseteq [0,1]$, and define the following over it, for $\epsilon \in \simplex_{\nlinks}$:
\ifthenelse{\equal{\numofcolumns}{1}}
{
\begin{equation}
\label{eq:traj-case1}
\begin{split}
x^{\state}_{i}(\epsilon,\pfrac) & = x^{*,\state}_{i}(\pfrac_1) + \epsilon_i (\pfrac - \pfrac_1), \quad %x^{\state}_{2}(\epsilon,\pfrac) = x^{*,\state}_{2}(\pfrac_1) + (1-\epsilon) (\pfrac - \pfrac_1) 
y_i(\epsilon,\pfrac) = y^*_i(\pfrac_1) - \epsilon_i (\pfrac - \pfrac_1), \quad i \in [\nlinks], \, \state \in \Omega
%\quad y_2(\epsilon,\pfrac) = y^*_2(\pfrac_1) - (1-\epsilon) (\pfrac - \pfrac_1)
\end{split}
\end{equation}
}{
\begin{equation}
\label{eq:traj-case1}
\begin{split}
x^{\state}_{i}(\epsilon,\pfrac) & = x^{*,\state}_{i}(\pfrac_1) + \epsilon_i (\pfrac - \pfrac_1), \\%x^{\state}_{2}(\epsilon,\pfrac) = x^{*,\state}_{2}(\pfrac_1) + (1-\epsilon) (\pfrac - \pfrac_1) 
y_i(\epsilon,\pfrac) & = y^*_i(\pfrac_1) - \epsilon_i (\pfrac - \pfrac_1), \quad i \in [\nlinks], \, \state \in \Omega
%\quad y_2(\epsilon,\pfrac) = y^*_2(\pfrac_1) - (1-\epsilon) (\pfrac - \pfrac_1)
\end{split}
\end{equation}
}
\eqref{eq:traj-case1} implies that, for all $\epsilon \in \simplex_{\nlinks}$ and $\pfrac \in [\pfrac_1,\pfrac_2]$,
\ifthenelse{\equal{\numofcolumns}{1}}
{
\begin{equation}
\label{eq:C-constant}
\begin{split}
x^{\state}_{i}(\epsilon,\pfrac)+y_i(\epsilon,\pfrac) & =x^{*,\state}_{i}(\pfrac_1)+y^*_i(\pfrac_1), \qquad i \in [\nlinks], \, \state \in \Omega
%\\
%\congfunc^{\state}_{i}(x^{\state}_{i}(\epsilon,\pfrac)+y_i(\epsilon,\pfrac))& =\congfunc^{\state}_{i}(z^*(\pfrac_1)), \qquad i \in [\nlinks], \, k \in [\nstates]
\\
C_{ij}(\epsilon,\pfrac) & :=\sum_{\state} \prior(\state) (\congfunc_{\state,i}(x^{\state}_{i}(\epsilon,\pfrac)+y_i(\epsilon,\pfrac)) -\congfunc_{\state,j}(x^{\state}_{j}(\epsilon,\pfrac)+y_j(\epsilon,\pfrac))) 
\\
& = \sum_{\state} \prior(\state) (\congfunc_{\state,i}(x^{*,\state}_{i}(\pfrac_1)+y^*_i(\pfrac_1))-\congfunc_{\state,j}(x^{*,\state}_{j}(\pfrac_1)+y^*_j(\pfrac_1))) = C_{ij}^*(\pfrac_1), \quad i, \, j \in [\nlinks]
\end{split}
\end{equation}
}{
\begin{equation}
\label{eq:C-constant}
\begin{split}
&x^{\state}_{i}(\epsilon,\pfrac)+y_i(\epsilon,\pfrac)  =x^{*,\state}_{i}(\pfrac_1)+y^*_i(\pfrac_1), \qquad i \in [\nlinks], \, \state \in \Omega
%\\
%\congfunc^{\state}_{i}(x^{\state}_{i}(\epsilon,\pfrac)+y_i(\epsilon,\pfrac))& =\congfunc^{\state}_{i}(z^*(\pfrac_1)), \qquad i \in [\nlinks], \, k \in [\nstates]
\\
&C_{ij}(\epsilon,\pfrac)  :=\sum_{\state} \prior(\state) (\congfunc_{\state,i}(x^{\state}_{i}(\epsilon,\pfrac)+y_i(\epsilon,\pfrac)) \\ & \hspace{0.85in}-\congfunc_{\state,j}(x^{\state}_{j}(\epsilon,\pfrac)+y_j(\epsilon,\pfrac))) 
\\
& \hspace{0.5in}= \sum_{\state} \prior(\state) (\congfunc_{\state,i}(x^{*,\state}_{i}(\pfrac_1)+y^*_i(\pfrac_1)) \\ & \hspace{0.85in}-\congfunc_{\state,j}(x^{*,\state}_{j}(\pfrac_1)+y^*_j(\pfrac_1))) \\ & \hspace{0.5in}= C_{ij}^*(\pfrac_1), \quad i, \, j \in [\nlinks]
\end{split}
\end{equation}
}
\eqref{eq:info-design-diagonal:cost} and \eqref{eq:C-constant} imply that for all $\epsilon \in \simplex_{\nlinks}$ and $\pfrac \in [\pfrac_1,\pfrac_2]$,
\ifthenelse{\equal{\numofcolumns}{1}}
{
\begin{equation*}
\begin{split}
J^{\diag}(\epsilon,\pfrac) :=J^{\diag}(x(\epsilon,\pfrac),y(\epsilon,\pfrac)) = \sum_{i,\state} \prior(\state) (x^{*,\state}_{i}(\pfrac_1)+y^*_i(\pfrac_1)) \congfunc_{\state,i}(x^{*,\state}_{i}(\pfrac_1)+y^*_i(\pfrac_1)) = J^{\diag,*}(\pfrac_1)
\end{split}
\end{equation*}
}{
$
J^{\diag}(\epsilon,\pfrac)  :=J^{\diag}(x(\epsilon,\pfrac),y(\epsilon,\pfrac))= \sum_{i,\state} \prior(\state) (x^{*,\state}_{i}(\pfrac_1)+y^*_i(\pfrac_1)) \congfunc_{\state,i}(x^{*,\state}_{i}(\pfrac_1)+y^*_i(\pfrac_1))= J^{\diag,*}(\pfrac_1)
$. 
%\begin{equation*}
%\begin{split}
%J^{\diag}&(\epsilon,\pfrac)  :=J^{\diag}(x(\epsilon,\pfrac),y(\epsilon,\pfrac)) \\ & = \sum_{i,\state} \prior(\state) (x^{*,\state}_{i}(\pfrac_1)+y^*_i(\pfrac_1)) \congfunc_{\state,i}(x^{*,\state}_{i}(\pfrac_1)+y^*_i(\pfrac_1)) \\ & = J^{\diag,*}(\pfrac_1)
%\end{split}
%\end{equation*}
}
If $(x(\epsilon,\pfrac),y(\epsilon,\pfrac))$ is feasible, then $J^{\diag,*}(\pfrac) \leq J^{\diag}((x(\epsilon,\pfrac),y(\epsilon,\pfrac))) = J^{\diag,*}(\pfrac_1)$, thereby establishing the theorem. We now establish feasibility of $(x(\epsilon,\pfrac),y(\epsilon,\pfrac))$.

It is straightforward to check that $\sum_i x_i^{\state}(\epsilon,\pfrac) =\pfrac$ and $\sum_i y_i(\epsilon,\pfrac) =1-\pfrac$ for all $\epsilon \in [0,1]$ and $\state$.
By construction in \eqref{eq:traj-case1}, $x^{\state}_{i}(\epsilon,\pfrac) \geq x^{*,\state}_{i}(\pfrac_1) \geq 0$ for all $i, \, \state$, where the second inequality follows from optimality, and hence feasibility, of $x^{*,\state}_{i}(\pfrac_1)$. Noting from \eqref{eq:traj-case1} that $y_i(\epsilon,\pfrac)$, $i \in [\nlinks]$, is non-increasing in $\pfrac$, its non-negativity is ensured for all $\pfrac$ by ensuring non-negativity for $\pfrac=\pfrac_2$. This corresponds to choosing:
\begin{equation}
\label{eq:eps-choice}
\epsilon \in \setdef{\tilde{\epsilon} \in \simplex_{\nlinks}}{\tilde{\epsilon}_i \leq y_i^*(\pfrac_1)/(\pfrac_2-\pfrac_1), \, \, i \in [\nlinks]}
%\epsilon \in [0,1] \cap [1-\frac{y_2^*(\pfrac_1)}{\pfrac_2-\pfrac_1}, \frac{y_1^*(\pfrac_1)}{\pfrac_2-\pfrac_1}]
\end{equation}
The set in \eqref{eq:eps-choice} is non-empty because it contains $\epsilon=y^*(\pfrac_1)/(1-\pfrac_1)$. The feasibility of the inequalities in \eqref{eq:info-design-diagonal:obedience}-\eqref{eq:info-design-diagonal:nash} is established for a given $(i,j)$ by considering the sign of $C_{ij}^*(\pfrac_1)$ as follows.
\begin{itemize}
\item \underline{$C_{ij}^*(\pfrac_1) \leq 0$}. \eqref{eq:C-constant} implies $C_{ij}(\epsilon,\pfrac) \leq 0$, which in turn implies that \eqref{eq:info-design-diagonal:nash} hold true for $(i,j)$. Feasibility of \eqref{eq:info-design-diagonal:obedience} for $(i,j)$ also follows from 
\eqref{eq:C-constant}:
\ifthenelse{\equal{\numofcolumns}{1}}
{
\begin{equation*}
\begin{split}
%C_{1,2}(z(\epsilon,\pfrac)) & =\sum_k \mu_k (\congfunc_{2,k}(z(\epsilon,\pfrac))-\congfunc_{1,k}(z(\epsilon,\pfrac))) = \sum_k \mu_k (\congfunc_{2,k}(z^*(\pfrac_1))-\congfunc_{1,k}(z^*(\pfrac_1))) = 0 \\
& \sum_{\state} \prior(\state) \, x_i^{\state}(\epsilon,\pfrac) \left(\congfunc_{\state,i}(x_i^{\state}(\epsilon,\pfrac)+y_i(\epsilon,\pfrac)) - \congfunc_{\state,j}(x_j^{\state}(\epsilon,\pfrac)+y_j(\epsilon,\pfrac))\right)
\\
= & \sum_{\state}\prior(\state) \, x^{*,\state}_{i} \left(\congfunc_{\state,i}(x_i^{*,\state}(\pfrac_1)+y^*_i(\pfrac)) - \congfunc_{\state,j}(x_j^{*,\state}(\pfrac_1)+y^*_j(\pfrac_1))\right) + \epsilon_i (\pfrac - \pfrac_1) \, C_{ij}^*(\pfrac_1)
\\
& \leq \sum_{\state}\prior(\state) \, x^{*,\state}_{i} \left(\congfunc_{\state,i}(x_i^{*,\state}(\pfrac_1)+y^*_i(\pfrac)) - \congfunc_{\state,j}(x_j^{*,\state}(\pfrac_1)+y^*_j(\pfrac_1))\right) \leq 0
\end{split}
\end{equation*}
}{
$
\sum_{\state} \prior(\state) \, x_i^{\state}(\epsilon,\pfrac) \Big(\congfunc_{\state,i}(x_i^{\state}(\epsilon,\pfrac)+y_i(\epsilon,\pfrac)) - \congfunc_{\state,j}(x_j^{\state}(\epsilon,\pfrac)+y_j(\epsilon,\pfrac))\Big) = \sum_{\state}\prior(\state) \, x^{*,\state}_{i} \Big(\congfunc_{\state,i}(x_i^{*,\state}(\pfrac_1)+y^*_i(\pfrac)) - \congfunc_{\state,j}(x_j^{*,\state}(\pfrac_1)+y^*_j(\pfrac_1))\Big) + \epsilon_i (\pfrac - \pfrac_1) \, C_{ij}^*(\pfrac_1)\leq \sum_{\state}\prior(\state) \, x^{*,\state}_{i} \Big(\congfunc_{\state,i}(x_i^{*,\state}(\pfrac_1)+y^*_i(\pfrac))
- \congfunc_{\state,j}(x_j^{*,\state}(\pfrac_1)+y^*_j(\pfrac_1))\Big) \leq 0
$
%\begin{equation*}
%\begin{split}
%%C_{1,2}(z(\epsilon,\pfrac)) & =\sum_k \mu_k (\congfunc_{2,k}(z(\epsilon,\pfrac))-\congfunc_{1,k}(z(\epsilon,\pfrac))) = \sum_k \mu_k (\congfunc_{2,k}(z^*(\pfrac_1))-\congfunc_{1,k}(z^*(\pfrac_1))) = 0 \\
%& \sum_{\state} \prior(\state) \, x_i^{\state}(\epsilon,\pfrac) \Big(\congfunc_{\state,i}(x_i^{\state}(\epsilon,\pfrac)+y_i(\epsilon,\pfrac)) \\ & \hspace{0.3in}- \congfunc_{\state,j}(x_j^{\state}(\epsilon,\pfrac)+y_j(\epsilon,\pfrac))\Big)
%\\
%= & \sum_{\state}\prior(\state) \, x^{*,\state}_{i} \Big(\congfunc_{\state,i}(x_i^{*,\state}(\pfrac_1)+y^*_i(\pfrac)) \\ & \hspace{0.3in}- \congfunc_{\state,j}(x_j^{*,\state}(\pfrac_1)+y^*_j(\pfrac_1))\Big) + \epsilon_i (\pfrac - \pfrac_1) \, C_{ij}^*(\pfrac_1)
%\\
%& \leq \sum_{\state}\prior(\state) \, x^{*,\state}_{i} \Big(\congfunc_{\state,i}(x_i^{*,\state}(\pfrac_1)+y^*_i(\pfrac)) \\ & \hspace{0.3in}- \congfunc_{\state,j}(x_j^{*,\state}(\pfrac_1)+y^*_j(\pfrac_1))\Big) \leq 0
%\end{split}
%\end{equation*}
}
where the last inequality follows from the feasibility of $(x^*(\pfrac_1),y^*(\pfrac_1))$.
%The proof for $(i,j)=(2,1)$ follows along similar lines. 

\item 
\underline{$C_{ij}^*(\pfrac_1)>0$ and hence $y^*_i(\pfrac_1)=0$}. The only feasible solution to \eqref{eq:eps-choice} in this case is $\epsilon_i=0$. Therefore, $y_i(0,\pfrac)=y^*_i(\pfrac_1)=0$. \eqref{eq:C-constant} implies $C_{ij}(0,\pfrac)>0$ and hence \eqref{eq:info-design-diagonal:nash} is satisfied with equality for $(i,j)$.
Furthermore, 
\ifthenelse{\equal{\numofcolumns}{1}}
{
\begin{equation*}
\begin{split}
& \sum_{\state} \prior(\state) \, x_i^{\state}(0,\pfrac) \left(\congfunc_{\state,i}(x_i^{\state}(0,\pfrac)+y_i(0,\pfrac)) - \congfunc_{\state,j}(x_j^{\state}(0,\pfrac)+y_j(0,\pfrac))\right)
\\
= & \sum_{\state}\prior(\state) \, x^{*,\state}_{i} \left(\congfunc_{\state,i}(x_i^{*,\state}(\pfrac_1)+y^*_i(\pfrac)) - \congfunc_{\state,j}(x_j^{*,\state}(\pfrac_1)+y^*_j(\pfrac_1))\right) \leq 0
\end{split}
\end{equation*}
}{
$
\sum_{\state}\prior(\state) \, x_i^{\state}(0,\pfrac) \Big(\congfunc_{\state,i}(x_i^{\state}(0,\pfrac)+y_i(0,\pfrac)) - \congfunc_{\state,j}(x_j^{\state}(0,\pfrac)+y_j(0,\pfrac))\Big)=  \sum_{\state}\prior(\state) \, x^{*,\state}_{i} \Big(\congfunc_{\state,i}(x_i^{*,\state}(\pfrac_1)+y^*_i(\pfrac)) \\  - \congfunc_{\state,j}(x_j^{*,\state}(\pfrac_1)+y^*_j(\pfrac_1))\Big) \leq 0
$, 
%\begin{equation*}
%\begin{split}
%\sum_{\state} & \prior(\state) \, x_i^{\state}(0,\pfrac) \Big(\congfunc_{\state,i}(x_i^{\state}(0,\pfrac)+y_i(0,\pfrac)) \\ & \hspace{1in} - \congfunc_{\state,j}(x_j^{\state}(0,\pfrac)+y_j(0,\pfrac))\Big)
%\\
%=  \sum_{\state}& \prior(\state) \, x^{*,\state}_{i} \Big(\congfunc_{\state,i}(x_i^{*,\state}(\pfrac_1)+y^*_i(\pfrac)) \\ & \hspace{1in} - \congfunc_{\state,j}(x_j^{*,\state}(\pfrac_1)+y^*_j(\pfrac_1))\Big) \leq 0
%\end{split}
%\end{equation*}
}
which establishes \eqref{eq:info-design-diagonal:obedience} for $(i,j)$
\end{itemize}

\ifthenelse{\equal{\numofcolumns}{1}}
{\subsection{Technical Results}}{\subsection{Technical Result}}
\label{sec:technical-results}
%\begin{definition}
%Let $\Gamma$ be a point-to-set mapping from $T$ to subsets of $X$.
%\begin{enumerate}
%\item[(i)] $\Gamma$ is lower semicontinuous at $t_0 \in T$ if, for each open set $S \subset X$ satisfying $S \cap \Gamma(t_0) \neq \emptyset$ there exists a neighborhood $N$ of $t_0$, $N(t_0)$, such that for each $t$ in $N(t_0)$, $\Gamma(t) \cap S \neq \emptyset$.
%\item[(ii)] $\Gamma$ is upper semicontinuous at $t_0 \in T$ if, for each open set $S \subset X$ containing $\Gamma(t_0)$ there exists a neighborhood $N$ of $t_0$, $N(t_0)$, such that for each $t \in N(t_0)$, $\Gamma(t) \subset S$.
%\end{enumerate}
%Furthermore, if $\Gamma$ is lower semicontinuous at each point of $T$, then it is said to be lower semicontinuous in $T$; and if $\Gamma$ is upper semicontinuous at each point of $T$ with $\Gamma(t)$ compact for each $t$, then $\Gamma$ is said to be upper semicontinuous in $T$. The mapping $\Gamma$ is said to be continuous in $T$ if it is both upper and lower semicontinuous in $T$.
%\end{definition}
%

We need additional definitions for the next result. These are adapted from \cite{Helton.Nie:12}. 
A \emph{truncated moment sequence (tms)} in $\tilde{\nlinks}$ variables and of degree $d$ is a finite sequence $t=(t_{a})$ indexed by nonnegative integer vectors $a:=(a_1,\ldots,a_{\tilde{\nlinks}}) \in \NN^{\tilde{\nlinks}}$ with $|a|:=a_1+\ldots+a_{\tilde{\nlinks}} \leq d$. 
%Let $K$ be a semialgebraic set defined as:
%\begin{equation*}
%K=\setdef{z \in \RR^r}{g_1(z) \geq 0, \ldots, g_m(z) \geq 0}
%\end{equation*}
%with $g_1, \ldots, g_m$ polynomials in $z$. 
Given a set $K$, a tms $t$ is said to admit a $K$- probability measure $\zeta$, i.e., a nonnegative Borel measure supported in $K$ with $\int_K \de \zeta = 1$, if 
\begin{equation*}
t_{a} = \int_K \hat{z}^{a} \, \de \zeta, \qquad \forall \, a \in \NN^{\tilde{\nlinks}}: \, |a| \leq d
\end{equation*}
where $\hat{z}^{a}=\hat{z}_1^{a_1} \ldots \hat{z}_{\tilde{\nlinks}}^{a_{\tilde{\nlinks}}}$ for $\hat{z}=(\hat{z}_1,\ldots,\hat{z}_{\tilde{\nlinks}})$. 

We are interested in tms of degree $2$. Accordingly, for brevity in notation, for $i,\, j \in [\tilde{\nlinks}]$, let 
\begin{equation}
\label{eq:tms-compact-notation}
t_i:=t_{(0,\ldots,0,\underbrace{1}_{i},0,\ldots,0)}, \, t_{i,j}:=t_{(0,\ldots,0,\underbrace{1}_{i},0,\ldots,0,\underbrace{1}_{j},0,\ldots,0)}
\end{equation}
We are also specifically interested in probability measures over the set of all $\hat{z}$ in $\RR^{\tilde{\nlinks}}$, $\tilde{\nlinks}=\nlinks (\nstates+1)$, satisfying $\sum_{i \in [(k-1)\nlinks+1:k\nlinks]} \hat{z}_i = \pfrac$ for all $k \in [\nstates]$ and 
$\sum_{i \in [\nstates \nlinks+1:(\nstates+1)\nlinks]} \hat{z}_i = 1- \pfrac$. Let the set of such probability measures be denoted as $\PP(\pfrac)$.

\ifthenelse{\equal{\isarxiv}{1}}
{
\begin{figure*}[!htb]
 \begin{center}
 \scalebox{.9}{
 \begin{minipage}[c]{.99\textwidth}
\begin{equation}
\label{eq:flat-ext-a}
M(w) :=\begin{bmatrix}
1 & w_1 & \ldots & w_{\tilde{\nlinks}} & w_{1,1} & \ldots & w_{1,\tilde{\nlinks}} & \ldots & w_{\tilde{\nlinks},1} & \ldots & w_{\tilde{\nlinks},\tilde{\nlinks}}\\
w_1 & w_{1,1} & \ldots & w_{1,\tilde{\nlinks}} & w_{1,1,1} & \ldots & w_{1,1,\tilde{\nlinks}} & \ldots & w_{1,\tilde{\nlinks},1} & \ldots & w_{1,\tilde{\nlinks},\tilde{\nlinks}}\\
\vdots & \vdots & \ldots & \vdots & \vdots & \ldots & \vdots & \ldots & \vdots & \ldots & \vdots\\
w_{\tilde{\nlinks}} & w_{\tilde{\nlinks},1} & \ldots & w_{\tilde{\nlinks},\tilde{\nlinks}} & w_{\tilde{\nlinks},1,1} & \ldots & w_{\tilde{\nlinks},1,\tilde{\nlinks}} & \ldots & w_{\tilde{\nlinks},\tilde{\nlinks},1} & \ldots & w_{\tilde{\nlinks},\tilde{\nlinks},\tilde{\nlinks}}\\
w_{1,1} & w_{1,1,1} & \ldots & w_{1,1,\tilde{\nlinks}} & w_{1,1,1,1} & \ldots & w_{1,1,1,\tilde{\nlinks}} & \ldots & w_{1,1,\tilde{\nlinks},1} & \ldots & w_{1,1,\tilde{\nlinks},\tilde{\nlinks}}\\
\vdots & \vdots & \ldots & \vdots & \vdots & \ldots & \vdots & \ldots & \vdots & \ldots & \vdots\\
w_{1,\tilde{\nlinks}} & w_{1,\tilde{\nlinks},1} & \ldots & w_{1,\tilde{\nlinks},\tilde{\nlinks}} & w_{1,\tilde{\nlinks},1,1} & \ldots & w_{1,\tilde{\nlinks},1,\tilde{\nlinks}} & \ldots & w_{1,\tilde{\nlinks},\tilde{\nlinks},1} & \ldots & w_{1,\tilde{\nlinks},\tilde{\nlinks},\tilde{\nlinks}}\\
\vdots & \vdots & \ldots & \vdots & \vdots & \ldots & \vdots & \ldots & \vdots & \ldots & \vdots\\
w_{\tilde{\nlinks},1} & w_{\tilde{\nlinks},1,1} & \ldots & w_{\tilde{\nlinks},1,\tilde{\nlinks}} & w_{\tilde{\nlinks},1,1,1} & \ldots & w_{\tilde{\nlinks},1,1,\tilde{\nlinks}} & \ldots & w_{\tilde{\nlinks},1,\tilde{\nlinks},1} & \ldots & w_{\tilde{\nlinks},1,\tilde{\nlinks},\tilde{\nlinks}}\\
\vdots & \vdots & \ldots & \vdots & \vdots & \ldots & \vdots & \ldots & \vdots & \ldots & \vdots \\
w_{\tilde{\nlinks},\tilde{\nlinks}} & w_{\tilde{\nlinks},\tilde{\nlinks},1} & \ldots & w_{\tilde{\nlinks},\tilde{\nlinks},\tilde{\nlinks}} & w_{\tilde{\nlinks},\tilde{\nlinks},1,1} & \ldots & w_{\tilde{\nlinks},\tilde{\nlinks},1,\tilde{\nlinks}} & \ldots & w_{\tilde{\nlinks},\tilde{\nlinks},\tilde{\nlinks},1} & \ldots & w_{\tilde{\nlinks},\tilde{\nlinks},\tilde{\nlinks},\tilde{\nlinks}}
\end{bmatrix} \succeq 0
\end{equation}
\end{minipage}
}
\end{center}
\end{figure*}
}

\begin{proposition}
\label{prop:tms-realization}
If a tms $t$ in $\tilde{\nlinks}=\nlinks(\nstates+1)$ variables and of degree $2$ satisfies:
%\begin{equation}
%\kscomment{$t_{i,j}=t_i t_j$ if $i$ and $j$ both belong to $\{(k-1)\nlinks+1, \ldots, k \nlinks\}$ for some $k \in [\nstates+2]$.} 
%\end{
\ifthenelse{\equal{\isarxiv}{1}}
{
\begin{subequations}
\label{eq:measure-admissible-sufficient}
\begin{equation}
\label{eq:measure-admissible-sufficient-moments}
}
{\begin{equation*}}
\begin{split}
& M(t)  :=
\begin{bmatrix}
1 & t_1 & \ldots & t_{\tilde{\nlinks}} \\
t_1 & t_{1,1} & \ldots & t_{1,\tilde{\nlinks}} \\
\vdots & \vdots & \ldots & \vdots \\
t_{\tilde{\nlinks}} & t_{\tilde{\nlinks},1} & \ldots & t_{\tilde{\nlinks},\tilde{\nlinks}}
\end{bmatrix} \succeq 0 
\\
& t_i \geq 0, \, \, i \in [\tilde{\nlinks}]; \qquad t_{i,j} \geq 0, \quad i, j \in [\tilde{\nlinks}] \\
& \sum_{i \in [(k-1)\nlinks+1:k \nlinks]} t_i = \pfrac, \, \, k \in [\nstates]; \qquad \sum_{i \in [\nstates \nlinks+1:(\nstates+1)\nlinks]} t_i = 1-\pfrac \\ 
& \sum_{j \in [(k-1)\nlinks+1:k\nlinks]} t_{i,j} = \pfrac t_i, \, \, i \in [(k-1)\nlinks+1:\nlinks], \, k \in [\nstates] \\
& \sum_{j \in [\nstates \nlinks+1:(\nstates+1)\nlinks]} t_{i,j} = (1-\pfrac) t_i, \, \, i \in [\nstates \nlinks+1:(\nstates+1)\nlinks]
%& t_{i,j}  =t_i t_j, \quad \kscomment{i, j \in \{(k-1)\nlinks+1, \ldots, k \nlinks\} \text{ for some } k \in [\nstates+2]} \\
% & t_{\nlinks \nstates+i,\nlinks(\nstates+1)+i} = 0, \quad t_{\nlinks(\nstates+1)+i, \nlinks \nstates+i} = 0, \qquad i \in [\nlinks]
%\\
%\sum_{i \in [r]} t_i & = 1, \quad \sum_{j \in [r]} t_{i,j} = t_i, \, \, i \in [r]
\end{split}
\ifthenelse{\equal{\isarxiv}{1}}
{
\end{equation}
}
{\end{equation*}}

\ifthenelse{\equal{\isarxiv}{1}}
{
\begin{equation}
\label{eq:measure-admissible-sufficient-rank}}
{\begin{equation*}}
\rank(M(t)) = 1
\ifthenelse{\equal{\isarxiv}{1}}
{
\end{equation}
\end{subequations}}
{\end{equation*}}
then it admits a unique $\PP(\pfrac)$-probability measure, which is also 1-atomic and given by $\zeta(\hat{z})=\delta(\hat{z}-[t_1,\ldots,t_{\tilde{\nlinks}}]^T)$.
%Furthermore, this probability measure is unique.
%and $\rank \, \tilde{M}(s)$-atomic.
\end{proposition}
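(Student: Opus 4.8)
The plan is to use the rank-one hypothesis \eqref{eq:measure-admissible-sufficient-rank} to determine $M(t)$ completely, and thereby to read off that the unique candidate representing measure is the Dirac mass at the first-moment vector. First I would recall that a positive semidefinite matrix of rank one is an outer product. Since $M(t)\succeq 0$, $\rank(M(t))=1$, and the $(1,1)$ entry of $M(t)$ equals $1$ by \eqref{eq:measure-admissible-sufficient-moments}, there is a vector $w\in\real^{\tilde{\nlinks}+1}$ with $M(t)=ww^T$ and $w_1^2=1$; fixing the sign so that $w_1=1$ and matching the first row of $M(t)$ forces $w=[1,\,t_1,\ldots,t_{\tilde{\nlinks}}]^T$. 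Writing $v:=[t_1,\ldots,t_{\tilde{\nlinks}}]^T$, this yields $t_{i,j}=v_iv_j=t_it_j$ for all $i,j\in[\tilde{\nlinks}]$, i.e.\ every second moment factors through the first moments, which is exactly the moment structure of a single atom at $v$.

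Next I would verify that $\zeta(\hat{z})=\delta(\hat{z}-v)$ is a legitimate $\PP(\pfrac)$-probability measure that admits $t$. Nonnegativity $v_i=t_i\ge 0$ comes from \eqref{eq:measure-admissible-sufficient-moments}, and the block-sum equalities on the first moments in \eqref{eq:measure-admissible-sufficient-moments} state precisely that $\sum_{i\in[(k-1)\nlinks+1:k\nlinks]}v_i=\pfrac$ for $k\in[\nstates]$ and $\sum_{i\in[\nstates\nlinks+1:(\nstates+1)\nlinks]}v_i=1-\pfrac$; hence $v$ lies in the support set defining $\PP(\pfrac)$ and $\zeta$ is supported there. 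Computing moments of $\zeta$ gives $\int\hat{z}_i\,\de\zeta=v_i=t_i$ and $\int\hat{z}_i\hat{z}_j\,\de\zeta=v_iv_j=t_{i,j}$, so $\zeta$ admits $t$. (The second-moment block-sum constraints in \eqref{eq:measure-admissible-sufficient-moments} are then automatically consistent, since $\sum_{j}t_{i,j}=t_i\sum_{j}t_j$ collapses to $\pfrac\,t_i$ or $(1-\pfrac)\,t_i$ on the respective blocks.) One could alternatively phrase existence as a flat extension of $M(t)$ to the degree-four moment matrix $M(w)$ in \eqref{eq:flat-ext-a}, obtained by setting each higher-order entry to the corresponding product of the $v_i$, but the rank-one structure makes the direct Dirac construction immediate.

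Finally, for uniqueness I would invoke a vanishing-variance argument. Let $\mu$ be any $\PP(\pfrac)$-probability measure admitting $t$. Then for every $i\in[\tilde{\nlinks}]$ the coordinate variance is $\int\hat{z}_i^2\,\de\mu-(\int\hat{z}_i\,\de\mu)^2=t_{i,i}-t_i^2=v_i^2-v_i^2=0$, where the middle equality is the definition of a representing measure and the last uses the factorization $t_{i,i}=t_i^2$ from the rank-one step. A probability measure whose every coordinate has zero variance is concentrated at the single point $v$, so $\mu=\delta(\hat{z}-v)=\zeta$, giving both the $1$-atomic form and uniqueness.

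The only step requiring genuine care is the first one, namely correctly deducing the outer-product form of $M(t)$ from positive semidefiniteness together with the rank-one condition and the unit $(1,1)$ entry; once $t_{i,j}=t_it_j$ is in hand, existence is immediate by construction and uniqueness follows routinely from the vanishing of all coordinate variances. I do not expect any substantive obstacle beyond this bookkeeping, since the rank-one case sidesteps the full Curto--Fialkow/Helton--Nie flat-extension machinery that would be needed for higher-rank tms.
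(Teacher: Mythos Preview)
Your proposal is correct but takes a genuinely different route from the paper. Both arguments begin identically, deducing $t_{i,j}=t_it_j$ from the rank-one outer-product structure of $M(t)$. From there, the paper invokes the Curto--Fialkow/Helton--Nie flat-extension theorem: it explicitly constructs a degree-four extension $w$ by setting $w_{i,j,k}=t_it_jt_k$ and $w_{i,j,k,l}=t_it_jt_kt_l$, verifies that $M(w)$ and the localizing matrices $M_i(w)$ are rank-one positive semidefinite and that the simplex relations propagate, and then cites \cite[Theorem~1.1]{Helton.Nie:12} to conclude existence, uniqueness, and $1$-atomicity of the representing measure. By contrast, you bypass that machinery entirely: existence is handled by directly checking that the Dirac mass at $v=[t_1,\ldots,t_{\tilde{\nlinks}}]^T$ reproduces the moments and lies in the right support, and uniqueness follows from the elementary observation that every coordinate variance $t_{i,i}-t_i^2$ vanishes, forcing any representing measure to be the point mass at $v$. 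Your approach is shorter and self-contained for the rank-one case; the paper's approach has the advantage of situating the result within the general truncated-moment framework already being used elsewhere in the paper, and you correctly anticipate this alternative when you remark that one could phrase existence via the flat extension in \eqref{eq:flat-ext-a}.
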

\ifthenelse{\equal{\isarxiv}{1}}
{
\begin{proof} 
\eqref{eq:measure-admissible-sufficient-rank} implies that 
\begin{equation}
\label{eq:moments-product-form}
t_{i,j} = t_i \, t_j, \qquad i, j \in [\tilde{\nlinks}]
\end{equation}
\cite[Theorem 1.1]{Helton.Nie:12}, which in turn is from \cite{Curto.Fialkow:05}, implies that a $t$ satisfying \eqref{eq:measure-admissible-sufficient-moments} admits a unique $\PP(\pfrac)$- probability measure if there exists a tms $w$ in $\tilde{\nlinks}$ variables and of degree $4$ such that it satisfies $w_a=t_a$ for all $|a| \leq 2$, and 
\eqref{eq:flat-ext-a}, \eqref{eq:flat-ext-b}:
\begin{equation}
\label{eq:flat-ext-b}
\begin{split}
%\end{split}
%\end{equation*}
%\begin{equation*}
%\begin{split}
& M_i(w) := \begin{bmatrix}
w_i & w_{i,1} & \ldots & w_{i,\tilde{\nlinks}} \\
w_{i,1} & w_{i,1,1} & \ldots & w_{i,1,\tilde{\nlinks}} \\
\vdots & \vdots & \ldots & \vdots \\
w_{i,\tilde{\nlinks}} & w_{i,\tilde{\nlinks},1} & \ldots & w_{i,\tilde{\nlinks},\tilde{\nlinks}}
\end{bmatrix} \succeq 0, \, \, i \in [\tilde{\nlinks}] \\
&\sum_{k \in [(l-1)\nlinks+1:l \nlinks]} w_{i,j,k} = \pfrac \, w_{i,j}, \quad i, j \in [\tilde{\nlinks}]
\\
& \sum_{k \in [\nstates \nlinks+1:(\nstates+1) \nlinks]} w_{i,j,k} = (1-\pfrac) \, w_{i,j}, \quad i, j \in [\tilde{\nlinks}], \, l \in [\nstates]
\end{split}
\end{equation}
%\end{subequations}
where $w_i$, $w_{i,j}$, $w_{i,j,k}$, and $w_{i,j,k,l}$ are defined similar to \eqref{eq:tms-compact-notation}. 
%Furthermore, this unique probability measure is $\rank \, M(w)$-atomic.
Let
\begin{equation}
\label{eq:w-def}
w_{i,j,k} = t_i t_j t_k, \qquad w_{i,j,k,l} = t_i t_j t_k t_l, \qquad i, j, k, l \in [\tilde{\nlinks}]
%\begin{split}
%w_{i,j,k} & := \begin{cases}
%\frac{w_{i,j} \cdot w_{i,k}}{w_i} & \text{ if } w_i > 0 \\
%0 & \text{ if } w_i = 0
%\end{cases}, \qquad
%w_{i,j,k,l} := \begin{cases}
%\frac{w_{i,j,k} \cdot w_{i,l}}{w_i} = \frac{w_{i,j} \cdot w_{i,k} \cdot w_{i,l}}{w_i^2}& \text{ if } w_i > 0 \\
%0 & \text{ if } w_i = 0
%\end{cases}
%\end{split}
\end{equation}
%Indeed, if $w_i=0$, then $w_{i,j}=0$ for all $j \in [\nlinks]$. This is because $w_{i,j} \geq 0$ and $\sum_{j \in [\nlinks]} w_{i,j}=w_i$ from \eqref{eq:measure-admissible-sufficient}. Therefore, 
%
\eqref{eq:moments-product-form} and \eqref{eq:w-def} imply $w_{i,j,k}=t_{i,j} t_k=w_{i,j} t_k$, and therefore, $\sum_{k \in [(l-1)\nlinks+1:l \nlinks]} w_{i,j,k}=w_{i,j} \sum_{k \in [(l-1)\nlinks+1:l \nlinks]} t_k=\pfrac w_{i,j}$ for all $l \in [\nstates]$ and $\sum_{k \in [\nstates \nlinks+1:(\nstates+1) \nlinks]} w_{i,j,k}=w_{i,j} \sum_{k \in [\nstates \nlinks+1:(\nstates+1) \nlinks]} t_k=(1-\pfrac) w_{i,j}$. 
%This proves the last condition in \eqref{eq:flat-ext}. 
\eqref{eq:w-def} implies that every column of $M_i(w)$ is a multiple of the first column, and therefore $\rank(M_i(w)) = 1$. Since the leading entry $w_i$ is nonnegative, $M_i(w)$ is positive semidefinite. Along the same lines, $M(w)$ has rank one and is positive semidefinite.

Since $\rank(M(w))=1=\rank(M(t))$, \cite[Theorem 1.1]{Helton.Nie:12} implies that the unique probability measure $\zeta$ is 1-atomic. The expression for $\zeta$ is then trivial from the fact that $\E_{\zeta}[\hat{z}]=[t_1, \ldots,t_{\tilde{\nlinks}}]^T$.
%$(\nlinks+2)$-onwards columns of $M(w)$ are scalar multiples of one of the first $\nlinks+1$ columns. Combining with the fact that $\tilde{M}(w)$ defined in \eqref{eq:measure-admissible-sufficient} is the leading principal submatrix of order $\nlinks+1$ of $M(w)$, implies that $\rank \, M(w) = \rank \, \tilde{M}(w)=\rank \, \tilde{M}(s)$, where the last equality follows from $s_a=w_a$ for $|a| \leq 2$. Since $\tilde{M}(s) \succeq 0$, Lemma~\ref{lem:sdp-condition} implies $M(w) \succeq 0$.  
%
%\cite[Theorem 2]{Bertsimas.Sethuraman:00} implies that $\tilde{M}(s) \succeq 0$ is a necessary condition. The necessity of other conditions in \eqref{eq:measure-admissible-sufficient} follows from the constraints defining the simplex $\simplex_{\nlinks}$. 
\end{proof}}
{
\revisionchange{The proof, which can be found in the extended version~\cite{Zhu.Savla:infodesign-arxiv20}, is omitted here for lack of space.}
}

\end{document}